\documentclass[letterpaper,USenglish11pt,]{article}
\usepackage[margin=1in]{geometry}
\usepackage{amsmath}
\usepackage{amssymb}
\usepackage{amsthm}
\usepackage{amsfonts}
\usepackage{dsfont}
\usepackage{mathrsfs}
\newtheorem{lemma}{Lemma}[section]
\newtheorem{theorem}[lemma]{Theorem}
\newtheorem{claim}[lemma]{Claim}
\newtheorem{definition}[lemma]{Definition}
\newtheorem{corollary}[lemma]{Corollary}
\usepackage{xcolor,xspace}
\usepackage{tikz}
\usepackage{hyperref}

\newcounter{note}[section]

\title{Lasserre Integrality Gaps for Graph Spanners and Related Problems}

\author{Michael Dinitz \qquad Yasamin Nazari \qquad Zeyu Zhang \\ Johns Hopkins University}

\date{}
\begin{document}

\allowdisplaybreaks

\maketitle

\begin{abstract}
There has been significant recent progress on algorithms for approximating graph spanners, i.e., algorithms which approximate the best spanner for a given input graph.  Essentially all of these algorithms use the same basic LP relaxation, so a variety of papers have studied the limitations of this approach and proved integrality gaps for this LP in a variety of settings.  We extend these results by showing that even the strongest lift-and-project methods cannot help significantly, by proving polynomial integrality gaps even for $n^{\Omega(\varepsilon)}$ levels of the Lasserre hierarchy, for both the directed and undirected spanner problems.  We also extend these integrality gaps to related problems, notably \textsc{Directed Steiner Network} and \textsc{Shallow-Light Steiner Network}.  
\end{abstract}

\section{Introduction}
A spanner is a subgraph which approximately preserves distances: formally, a $t$-spanner of a graph $G$ is a subgraph $H$ such that $d_H(u,v) \leq t \cdot d_G(u,v)$ for all $u,v \in V$ (where $d_H$ and $d_G$ denote shortest-path distances in $H$ and $G$ respectively).  Since $H$ is a subgraph it is also the case that $d_G(u,v) \leq d_H(u,v)$, and thus a $t$-spanner preserves all distances up to a multiplicative factor of $t$, which is known as the \emph{stretch}.  Graph spanners originally appeared in the context of distributed computing~\cite{PS89,PU89}, but have since been used as fundamental building blocks in applications ranging from routing in computer networks~\cite{TZ01} to property testing of functions~\cite{BGJRW09} to parallel algorithms~\cite{friedrichs18}.

Most work on graph spanners has focused on tradeoffs between various parameters, particularly the size (number of edges) and the stretch.  Most notably, a seminal result of Alth\"ofer et al.~\cite{ADDJS93} is that every graph admits a $(2k-1)$-spanner with at most $n^{1+1/k}$ edges, for every integer $k \geq 1$.  This tradeoff is also known to be tight, assuming the Erd\H{o}s girth conjecture~\cite{Erd64}, but extensions to this fundamental result have resulted in an enormous literature on graph spanners.

Alongside this work on tradeoffs, there has been a line of work on \emph{optimizing} spanners.  In this line of work, we are usually given a graph $G$ and a value $t$,  and are asked to find the $t$-spanner of $G$ with the fewest number of edges.  If $G$ is undirected then this is known as \textsc{Basic $t$-Spanner}, while if $G$ is directed then this is known as \textsc{Directed $t$-Spanner}.  The best known approximation for \textsc{Directed $t$-Spanner} is an $O(n^{1/2})$-approximation~\cite{BBMRY13}, while for \textsc{Basic $t$-Spanner} the best known approximations are $O(n^{1/3})$ when $t = 3$~\cite{BBMRY13} and when $t=4$~\cite{DZ16}, and $O(n^{\frac{1}{\lfloor (t+1)/2\rfloor}})$ when $t > 4$.  Note that this approximation for $t > 4$ is directly from the result of~\cite{ADDJS93} by using the trivial fact that the optimal solution is always at least $n-1$ (in a connected graph), and so is in a sense ``generic" as both the upper bound and the lower bound are universal, rather than applying to the particular input graph.  

One feature of the algorithms of~\cite{BBMRY13,DZ16}, as well as earlier work~\cite{DK11} and extensions to related settings (such as approximating fault-tolerant spanners~\cite{DK11-FT,DZ16} and minimizing the maximum degree~\cite{CD16}), is that they all use some variant of the same basic LP: a flow-based relaxation originally introduced for spanners by~\cite{DK11}.  The result of~\cite{BBMRY13} uses a slightly different LP (based on cuts rather than flows), but it is easy to show that the LP of~\cite{BBMRY13} is no stronger than the LP of~\cite{DK11}.  

The fact that for \textsc{Basic $t$-Spanner} we cannot do better than the ``generic" bound when $t > 4$, as well as the common use of a standard LP relaxation, naturally gives rise to a few questions.  Is it possible to do better than the generic bound when $t > 4$?  Can this be achieved with the basic LP?  Can we analyze this LP better to get improvements for \textsc{Directed $t$-Spanner}?  In other words: what is the power of convex relaxations for spanner problems?  It seems particularly promising to use lift-and-project methods to try for stronger LP relaxations, since one of the very few spanner approximations that uses a different LP relaxation was the use of the Sherali-Adams hierarchy to give an approximation algorithm for the \textsc{Lowest Degree $2$-Spanner} problem~\cite{CDK12}.

It has been known since~\cite{EP07,Kor01} that \textsc{Directed $t$-Spanner} does not admit an approximation better than $2^{\log^{1-\varepsilon} n}$ for any constant $\varepsilon > 0$, and it was more recently shown in~\cite{DKR16} that \textsc{Basic $t$-Spanner} cannot be approximated any better than $2^{(\log^{1-\varepsilon} n) / t}$ for any constant $\varepsilon > 0$.  Thus no convex relaxation, and in particular the basic LP, can do better than these bounds.  But it is possible to prove stronger integrality gaps: it was shown in~\cite{DK11} that the integrality gap of the basic LP for \textsc{Directed $t$-Spanner} is at least $\tilde{\Omega}(n^{\frac{1}{3}-\varepsilon})$, while in~\cite{DZ16} it was shown that the basic LP for \textsc{Basic $t$-Spanner} has an integrality gap of at least $\Omega(n^{\frac{2}{(1+\varepsilon)(t+1) + 4}})$, which nearly matches the generic upper bound (particularly for large $t$).  

But this left open a tantalizing prospect: perhaps there are stronger relaxations which could be used to get improved approximation bounds.  Of course, the hardness of approximation results prove a limit to this.  But even with the known hardness results and integrality gaps, it is possible that there is, say, an $O(n^{1/1000})$-approximation for \textsc{Directed $t$-Spanner} and an $O(n^{1/(1000t)})$-approximation for \textsc{Basic $t$-Spanner} that uses more advanced relaxations.

\subsection{Our Results and Techniques}
This is the problem which we investigate: can we design stronger relaxations for spanners and related problems?  While we cannot rule out all possible relaxations, we show that an extremely powerful lift-and-project technique, the Lasserre hierarchy \cite{lasserre2001}, does not give relaxations which are massively better than the basic LP.  This is true despite the fact that Lasserre is an SDP hierarchy rather than an LP hierarchy, and despite the fact that we allow a polynomial number of levels in the hierarchy even though it can only be efficiently solved for a constant number of levels.  And since the Lasserre hierarchy is at least as strong as other hierarchies such as the Sherali-Adams hierarchy~\cite{sherali1990} and the Lovasz-Schrijver hierarchy~\cite{lovasz1991}, our results also imply integrality gaps for these hierarchies.  

Slightly more formally, we first rewrite the basic LP in a way that is similar to~\cite{BBMRY13} but is equivalent to the stronger original formulation~\cite{DK11}.  This makes the Lasserre lifts of the LP easier to reason about, thanks to the new structure of this formulation. We then consider the Lasserre hierarchy applied to to this LP, and prove the following theorems.

\begin{theorem}\label{thm:directSpanner}
For every constant $0<\varepsilon<1$ and sufficiently large $n$, the integrality gap of the $n^{\Omega(\varepsilon)}$-th level Lasserre SDP for \textsc{Directed $(2k-1)$-Spanner} is at least $\left(\frac{n}{k}\right)^{\frac{1}{18}-\Theta(\varepsilon)}$.
\end{theorem}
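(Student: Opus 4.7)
The plan is to exhibit a family of instances of \textsc{Directed $(2k-1)$-Spanner} on $n$ vertices together with a valid $\ell$-level Lasserre solution of small fractional objective, where $\ell = n^{\Omega(\varepsilon)}$, such that the integer optimum is larger than the SDP value by a factor of $(n/k)^{1/18-\Theta(\varepsilon)}$. I will work with the rewritten path-based LP advertised in Section~1.1, whose fractional variables are naturally indexed by edges and short paths, making the Lasserre lifts transparent.

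First I would exhibit the hard instance: a random directed multi-layered graph with $\Theta(k)$ layers of $\Theta(n/k)$ vertices each, random edges between consecutive layers, and a distinguished set $D$ of demand pairs $(s_i,t_i)$ with $s_i$ in the first layer and $t_i$ in the last. The parameters are tuned so that with high probability (a) every demand pair admits an $s_i\to t_i$ path of length exactly $2k-1$, (b) distances between non-demand pairs are not artificially shortened, and (c) the associated path hypergraph (vertices $=$ short demand paths, hyperedges $=$ their edge sets) has girth $\Omega(\log_k n)$, so small collections of paths interact in a locally tree-like way. Standard deletion and first-moment arguments on random sparse digraphs yield all three properties, mirroring the high-girth gadgetry used in \cite{DK11,DZ16} but tailored to the directed, multi-layered setting.

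Next I would construct the Lasserre solution. Pick a per-edge fractional mass $p$ so that sampling, independently for every demand $i$, a uniform $s_i\to t_i$ path of length $2k-1$ places total expected weight $p$ on every edge. For each subset $S$ of at most $\ell$ variables, define a local distribution $\mu_S$ by this independent path sampling and read off the indicators of the edges in $S$. The Lasserre vectors $v_T$ for $|T|\le \ell$ are then obtained as Gram factors of $\mathbb{E}_{\mu_S}\bigl[\prod_{e\in T}\mathds{1}_{x_e=1}\bigr]$. Agreement across overlapping subsets $S_1,S_2$ with $|S_1\cup S_2|\le\ell$ follows from property (c): the induced subgraph of paths that touches $S_1\cup S_2$ is a forest, so conditioning on one demand's path choice does not perturb the marginals of the others, and the local distributions glue into a coherent global measure.

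The main obstacle will be precisely this consistency and positive-semidefiniteness step, since Lasserre requires a genuine probability distribution whose moments match the local construction, not merely agreeing marginals as in Sherali--Adams. I would discharge it by an explicit inductive construction: enumerate the demands in arbitrary order and iteratively condition, using the high-girth property to argue that each step is essentially an independent uniform path sample, which certifies the moment matrix as a Gram product of unit vectors in a bounded-dimensional space. For the integer lower bound, each demand must be served by a concrete $(2k-1)$-path on $\Theta(k)$ edges, and property (c) forces near-disjointness of these paths across demands, giving $\mathrm{OPT} \ge \Omega(|D|\cdot k)$ while the SDP value is $\Theta(|E(G)|\cdot p)$. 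Tuning the number of demands, the per-layer vertex count, and the edge density against the girth requirement $n^{\Omega(\varepsilon)}$ yields the claimed gap of $(n/k)^{1/18-\Theta(\varepsilon)}$; the numerical constant $1/18$ arises from balancing the density required for (a), the girth required for (c), and the $\Theta(k)$ edges used per served path.
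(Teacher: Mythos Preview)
Your proposal takes a fundamentally different route from the paper and, as written, has a genuine gap in the Lasserre feasibility step.

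The paper does \emph{not} build a Lasserre solution for a random layered spanner instance from scratch. Instead, it starts from the Lasserre integrality gap for \textsc{Projection Games} due to \cite{CMMV17} (itself built on Tulsiani's CSP gaps), applies the label-extended-graph reduction of \cite{EP07,Kor01,DK11} to obtain a directed spanner instance, and then \emph{transports} the known feasible Lasserre vector $\mathbf{y}^*$ for \textsc{Projection Games} to a vector $\mathbf{y}'$ for $\text{SDP}_{Spanner}^r$ via an explicit map $\Phi$ on edge sets. The PSD-ness of the spanner moment and slack-moment matrices is then \emph{reduced} to PSD-ness of principal submatrices of $M_{r+2}(\mathbf{y}^*)$ and to the equality constraints $M_r^v(\mathbf{y}^*)=0$ of $\text{SDP}_{Proj}$; a crucial ingredient is that the \cite{CMMV17} solution satisfies every projection constraint exactly (Claim~4.6), which forces several blocks of the slack-moment decomposition to be identically zero. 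The integer lower bound (Lemma~4.7) is likewise obtained by contraposition to the \emph{soundness} of the \textsc{Projection Games} instance, not by a girth or disjointness argument. The constant $1/18$ falls out of the specific parameters $|\Sigma|=n^{(3-3\varepsilon)/5}$, $K=n^{(1-\varepsilon)/5}-1$ of the \cite{CMMV17} instance.

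Your plan replaces all of this with a direct random-graph construction and an ``independent path sampling plus high girth'' argument for Lasserre consistency. The problem is that high girth of the path hypergraph buys you agreeing \emph{marginals} on small sets, which is exactly what Sherali--Adams needs, but it does not by itself certify that the global moment matrix is PSD. Your ``Gram factors'' and ``inductive conditioning'' sketch does not produce an actual global distribution (or a coherent vector embedding) whose second moments realize the matrix; the step where you pass from locally-tree-like structure to a genuine PSD moment matrix is precisely the step that requires the algebraic machinery of \cite{tulsiani2009,CMMV17}, and the paper explicitly notes that random \textsc{Unique Games}-style instances (the analogue of your random layered graph) are \emph{not} hard for Lasserre, which is why they switch to \textsc{Projection Games}. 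Separately, your integer lower bound ``near-disjointness of served paths from high girth'' is not justified: high girth of the path hypergraph bounds short cycles among paths, not the total overlap, so $\mathrm{OPT}\ge\Omega(|D|\cdot k)$ does not follow. Without a soundness-style argument for the underlying combinatorial object, both halves of the gap are unsupported.
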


\begin{theorem}\label{thm:undirectSpanner}
For every constant $0<\varepsilon<1$ and sufficiently large $n$, the integrality gap of the $n^{\Omega(\varepsilon)}$-th level Lasserre SDP for \textsc{Basic $(2k-1)$-Spanner} is at least $\frac{1}{k}\cdot\left(\frac{n}{k}\right)^{\min\left\{\frac{1}{18},\frac{5}{32k-6}\right\}-\Theta(\varepsilon)}=n^{\Theta(\frac{1}{k}-\varepsilon)}$.
\end{theorem}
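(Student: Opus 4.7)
The plan is to prove Theorem~\ref{thm:undirectSpanner} by establishing two separate Lasserre integrality gap bounds and taking their maximum; the minimum in the exponent of the claimed bound then reflects which construction dominates in which regime of $k$. The leading $1/k$ factor out front should come from a rescaling loss when the constructions are laid out on a graph of effective size $n/k$.

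The first bound, which is stronger when $k$ is small, comes from a reduction of the directed instance of Theorem~\ref{thm:directSpanner} to an undirected one. I would take the directed graph $G$ together with its Lasserre-feasible vectors and produce an undirected graph $G'$ whose Basic $(2k-1)$-Spanner relaxation simulates the directionality of $G$. A natural gadget is to split each vertex $v$ into $(v^{\mathrm{in}}, v^{\mathrm{out}})$ joined by a long undirected path of length close to but less than $2k-1$, so that any demand pair wishing to cross $v$ has essentially no budget left to traverse an original edge in the wrong orientation. The vectors from the directed construction should then lift to feasible Lasserre vectors on $G'$ with at most a constant-factor loss in both value and level, yielding the $(n/k)^{1/18-\Theta(\varepsilon)}$ exponent.

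The second bound, which is stronger for large $k$, requires a direct construction exploiting undirected girth. I would use a graph $H$ of girth greater than $2k$ and $\Omega(n^{1+1/k})$ edges from an Erd\H{o}s-type extremal family, with every edge of $H$ declared a demand pair. Any integral $(2k-1)$-spanner must retain every edge of $H$, since deleting an edge $\{u,v\}$ forces any detour to have length at least $2k+1$. On the fractional side, I would define Lasserre vectors whose marginal on each edge is roughly $n^{-5/(32k-6)}$; the bulk of the work is to verify that the corresponding moment matrix at level $n^{\Omega(\varepsilon)}$ is PSD. A promising attempt is to realize the vectors as a distribution over partial ``pseudo-spanners'' supported on carefully chosen subsets of $H$, so that any collection of up to $n^{\Omega(\varepsilon)}$ edges/demands examined by Lasserre sees only a locally tree-like structure in which short alternative paths do not yet exist.

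The main obstacle is the Lasserre consistency step in the second construction. Unlike the basic LP, which only needs marginals satisfying each demand-pair constraint in expectation, Lasserre enforces PSD moment matrices over all subsets of variables up to the level; this is delicate here because the high-girth graph offers essentially one short path per demand, so there is little room to redistribute fractional mass among alternatives without creating a locally-certifiable contradiction. The constants $5$ and $32k-6$ should emerge precisely from balancing the girth of $H$ (which bounds how deep Lasserre can probe before a forced edge becomes visible) against the supported level $n^{\Omega(\varepsilon)}$; combining this with the reduction-based first bound, and absorbing the scaling from the directed gadget into the $1/k$ factor, then yields the stated $\min\{1/18,\,5/(32k-6)\}$ exponent.
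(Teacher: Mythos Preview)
Your proposal diverges from the paper's approach and, more importantly, the second construction is broken.

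The paper does \emph{not} combine two separate constructions. It uses a single instance obtained from the same \textsc{Projection Games}-based reduction as in the directed case, after first pruning the \textsc{Projection Games} graph to have girth at least $2k+2$. The $\min\{1/18,\,5/(32k-6)\}$ in the exponent arises not from taking the better of two constructions, but from a single parameter choice: the left-degree $K$ of the \textsc{Projection Games} instance is set to $\min\{n^{(1-\varepsilon)/5}-1,\,n^{(1-\varepsilon)/(2k-1)}\}$, and whichever term achieves the minimum determines the final exponent. The fractional solution, its feasibility proof, and the integral lower bound are all direct analogues of the directed argument; the only genuinely new work is the girth-increasing modification of the \textsc{Projection Games} instance, which is what guarantees the undirected analogue of Claim~\ref{claim:path}.

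Your second construction cannot yield any integrality gap at all. If $H$ has girth strictly larger than $2k$ and you take every edge $\{u,v\}$ as a demand, then the \emph{only} path of length at most $2k-1$ between $u$ and $v$ is the edge $\{u,v\}$ itself: any other $u$--$v$ walk closes a cycle of length at most $2k$ with $\{u,v\}$. Consequently, for the vector $\mathbf{z}\in\mathcal{Z}^{u,v}$ with $z_{\{u,v\}}=1$ and all other coordinates $0$, the base LP constraint already forces $x_{\{u,v\}}\ge 1$. Any feasible LP solution, and a fortiori any Lasserre-feasible solution, must therefore have objective value exactly $|E(H)|$, matching the integral optimum. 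There is no room to set the marginal on each edge to $n^{-5/(32k-6)}$; such vectors violate even the level-$0$ relaxation. The intuition that ``locally tree-like structure hides forced edges from Lasserre'' fails here precisely because a forced edge is witnessed by a single constraint on a single variable.

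Your first construction is at best a sketch. The difficulty the paper identifies is that, once edges are undirected, the outer edges of the directed instance can span one another via short cycles in the underlying \textsc{Projection Games} graph; a vertex-splitting gadget inside the spanner instance does nothing to prevent this, since the cycles live in $E_{Outer}$ itself. The paper's fix is to eliminate those cycles at the \textsc{Projection Games} level while preserving the Lasserre completeness property of the CMMV solution, which is where the $5/(32k-6)$ exponent actually enters.
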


Note that, while the constant in the exponent is different, Theorem~\ref{thm:undirectSpanner} is similar to~\cite{DZ16} in that it shows that the integrality gap ``tracks" the trivial approximation from~\cite{ADDJS93} as a function of $k$.  Thus for undirected spanners, even using the Lasserre hierarchy cannot give too substantial an improvement over the trivial greedy algorithm.

At a very high level, we follow the approach to building spanner integrality gaps of~\cite{DK11,DZ16}.  They started with random instances of the \textsc{Unique Games} problem, which could be shown probabilistically to not admit any good solutions.  They then used these \textsc{Unique Game} instances to build spanner instances with the property that every spanner had to be large (or else the \textsc{Unique Games} instance would have had a good solution), but by ``splitting flow" the LP could be very small.  
%
%

In order to apply this framework to the Lasserre hierarchy, we need to make a number of changes.  First, since \textsc{Unique Games} can be solved reasonably well by Lasserre~\cite{BRS11,CMM06}, starting with a random instance of \textsc{Unique Games} will not work.  Instead, we start with a more complicated problem known as \textsc{Projection Games} (the special case of \textsc{Label Cover} in which all the edge relations are functions).  An integrality gap for the Lasserre hierarchy for \textsc{Projection Games} was recently given by~\cite{CMMV17,manurangsi2015} (based on an integrality gap for CSPs from~\cite{tulsiani2009}), so we can use this as our starting point and try to plug it into the integrality gap framework of~\cite{DK11,DZ16} to get an instance of either directed or undirected spanners.  Unfortunately, the parameters and structure that we get from this are different enough from the parameters used in the integrality gap of the basic LP that we cannot use~\cite{DK11, DZ16} as a black box.  We need to reanalyze the instance using different techniques, even for the ``easy" direction of showing that there are no good integral solutions.  In order to do this, we also need some additional properties of the gap instance for \textsc{Projection Games} from~\cite{CMMV17} which were not stated in their original analysis.  So we cannot even use~\cite{CMMV17} as a black box.

The main technical difficulty, though, is verifying that there is a ``low-cost" fractional solution to the SDP that we get out of this reduction.  For the basic LP this is straightforward, but for Lasserre we need to show that the associated slack moment matrices are all PSD.  This turns out to be surprisingly tricky, but by decomposing these matrices carefully we can show that each matrix in the decomposition is PSD, and thus the slack moment matrices are PSD. At a high level, we decompose the slack moment matrices as a summation of several matrices in a way that allows us to use the consistency properties of the feasible solution to the \textsc{Projection Games} instance in~\cite{CMMV17} to show that the overall sum is PSD.  

Doing this requires us to use some nice properties of the feasible fractional solution provided by~\cite{CMMV17}, some of which we need to prove as they were not relevant in the original setting.  In particular, one important property which makes our task much easier is that their fractional solution actually satisfies \emph{all} of the edges in the \textsc{Projection Games} instance.  That is, their integrality gap is in a particular ``place": the fractional solution has value $1$ while every integral solution has much smaller value.  Because spanners and the other network design problems we consider are minimization problems (where we need to satisfy all demands in a cheap way), this is enormously useful, as it essentially allows us to use ``the same" fractional solution (as it will also be feasible for the minimization version since it satisfies all edges).  Technically, we end up combining this fact about the fractional solution of~\cite{CMMV17} with several properties of the Lasserre hierarchy to infer some more refined structural properties of the derived fractional solution for spanners, allowing us to argue that they are feasible for the Lasserre lifts.  


\paragraph{Extensions.} 
A number of other network design problems exhibit behavior that is similar to spanners, and we can extend our integrality gaps to these problems.  In particular, we give a new integrality gap for Lasserre for \textsc{Directed Steiner Network} (\textsc{DSN}) (also called \textsc{Directed Steiner Forest}) and \textsc{Shallow-Light Steiner Network} (\textsc{SLSN}) \cite{BDZ18}.  In \textsc{DSN} we are given a directed graph $G = (V, E)$ (possibly with weights) and a collection of pairs $\{(s_i, t_i)\}_{i \in [p]}$, and are asked to find the cheapest subgraph such that there is a $s_i$ to $t_i$ path for all $i \in [p]$.  In \textsc{SLSN} the graph is undirected, but each $s_i$ and $t_i$ is required to be connected within a global distance bound $L$. The best known approximation for \textsc{DSN} is an $O(n^{3/5 + \varepsilon})$-approximation for arbitrarily small constant $\varepsilon > 0$~\cite{CDKL17}, which uses a standard flow-based LP relaxation. We can use the ideas we developed for spanners to also give integrality gaps for the Lasserre lifts of these problems. We provide the theorems here; details and proofs can be found in Appendix~\ref{app:extension}.

\begin{theorem}\label{thm:DSN}
For every constant $0<\varepsilon<1$ and sufficiently large $n$, the integrality gap of the $n^{\Omega(\varepsilon)}$-th level Lasserre SDP for \textsc{Directed Steiner Network} is at least $n^{\frac{1}{16}-\Theta(\varepsilon)}$.
\end{theorem}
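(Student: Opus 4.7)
The plan is to adapt the reduction framework used for Theorem~\ref{thm:directSpanner} from \textsc{Directed Spanner} to \textsc{Directed Steiner Network}, reusing the Projection Games Lasserre integrality gap of~\cite{CMMV17} as the starting point and mirroring the four-stage structure: (i) reduce \textsc{Projection Games} to \textsc{DSN}, (ii) argue that no small integral DSN solution exists, (iii) exhibit a low-cost fractional solution, and (iv) verify Lasserre feasibility by showing the slack moment matrices are PSD.

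For (i), I would take a \textsc{Projection Games} instance $\Phi$ with left/right vertices $U, W$, label set $[R]$, and projections $\pi_{uw}:[R]\to[R]$, and build a directed layered gadget graph. For every constraint edge $(u,w)$ of $\Phi$ introduce a source--sink pair $(s_{uw}, t_{uw})$; between them place a gadget whose $s_{uw}\to t_{uw}$ paths are indexed by pairs of labels $(a,b)\in[R]^2$ with $\pi_{uw}(a)=b$. Each such path shares an ``$(u,a)$-edge'' with all other demand gadgets incident to $u$ that use the same label $a$ at $u$, and analogously a ``$(w,b)$-edge'' on the right, so that the total number of edges in a feasible subgraph is controlled by the number of distinct label choices picked at each variable. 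Tuning the number of demands and the degree of the gadget so that one obtains the $1/16$ exponent (versus $1/18$ for spanners) is a matter of balancing the gadget cost against the Projection Games parameters.

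For (ii), I would argue by contradiction: any subgraph of total cost $N^{1-\delta}$ (where $N$ is the DSN instance size) forces, for most pairs $(u,w)$, only a small number of label pairs to be ``active,'' and a standard averaging/labeling extraction over these choices produces an assignment satisfying a non-negligibly large fraction of constraints of $\Phi$, contradicting the soundness of the CMMV17 construction. For (iii), I would use the fact (emphasized in the excerpt about the CMMV17 gap) that their fractional solution satisfies \emph{every} constraint with value $1$, so I can assign fractional weight $\mu_{uw}(a,b)$ to the $(a,b)$-path of gadget $(u,w)$ equal to the CMMV17 pseudo-expectation of that label pair; summing along each shared edge gives total fractional load equal to the CMMV12 single-variable marginals, which are small, and total cost $O(N^{\varepsilon})$ smaller than any integral solution.

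The main obstacle, as in the spanner theorems, will be (iv): promoting this fractional solution to an $n^{\Omega(\varepsilon)}$-level Lasserre pseudo-distribution. The plan is to inherit the moment matrices directly from the CMMV17 Lasserre solution for $\Phi$, defining, for each collection of chosen gadget paths, a pseudo-moment equal to the pseudo-moment of the corresponding labeling tuple in $\Phi$. Feasibility then reduces to showing that for each DSN constraint, the associated \emph{slack moment matrix} is PSD. Following the spanner argument, I would decompose each slack matrix as a sum of matrices indexed by the consistency patterns of labels inside the gadget, and use the Lasserre consistency relations provided by~\cite{CMMV17} (plus the additional structural properties of their solution that the spanner proof already extracts) to show each summand is PSD. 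The DSN reduction should require only minor adjustments to this decomposition because the gadget uses reachability rather than distance constraints, but verifying that the path-indexing of the moments remains compatible with the Lasserre algebra is where I expect the bulk of the technical work to lie.
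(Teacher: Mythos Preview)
Your proposal is correct and follows essentially the same approach as the paper. Concretely, the paper obtains the DSN instance by taking the \textsc{Directed $3$-Spanner} instance, deleting $E_{Outer}$, $E_{LStars}$, $E_{RStars}$ and replacing the outer edges by DSN demands $(c_i^l,x_j^l)$, reducing the number of duplicates from $kK|\Sigma|$ to $K$, and (the one point you gloss over) making the middle edges $E_M$ \emph{fractional} as well via $\Phi(e)=\{(c_i,\sigma)\}$; with these choices the fractional cost is $O(n^{1+\varepsilon}K)$, the integral lower bound $nK\sqrt{K}$ comes from the identical labeling-extraction argument, and since $|V|\approx m|\Sigma|=n^{(8+2\varepsilon)/5}$ the ratio is $|V|^{1/16-\Theta(\varepsilon)}$, with the slack-matrix PSD proof reusing the decomposition of Theorem~\ref{thm:Eouter} verbatim.
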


\begin{theorem}\label{thm:SLSN}
For every constant $0<\varepsilon<1$ and sufficiently large $n$, the integrality gap of the $n^{\Omega(\varepsilon)}$-th level Lasserre SDP for \textsc{Shallow-Light Steiner Network} is at least $n^{\frac{1}{16}-\Theta(\varepsilon)}$.
\end{theorem}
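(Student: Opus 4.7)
The plan is to follow the template established for \textsc{Directed Steiner Network} (Theorem~\ref{thm:DSN}), but adapted to the undirected, length-bounded setting of \textsc{SLSN}. Starting from the \textsc{Projection Games} Lasserre integrality gap of~\cite{CMMV17,manurangsi2015}, I would build an undirected weighted graph $H$ together with demand pairs $(s_e,t_e)$, one per constraint edge $e=(u,v)$ of the PG instance: attach $R$ "label vertices" $u^{(i)}, v^{(j)}$ to each PG vertex, add gadget paths from $s_e$ into $\{u^{(i)}\}_{i\in[R]}$ and from $\{v^{(j)}\}_{j\in[R]}$ into $t_e$, and include a middle edge between $u^{(i)}$ and $v^{(j)}$ whenever $\pi_e(i)=j$. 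The global distance bound $L$ is chosen just large enough that the only way to connect $s_e$ to $t_e$ within budget is via exactly one such consistent middle edge, which rules out "back-and-forth" shortcuts that the undirectedness might otherwise permit.

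For the soundness side, any feasible integral SLSN solution with few edges must select, for each PG edge $e$, at least one label pair $(i,j)$ with $\pi_e(i)=j$ whose incident gadget edges are bought. A voting/pigeonhole argument over these label pairs extracts an assignment of PG vertices satisfying a non-negligible fraction of edges, contradicting the CMMV17 Lasserre gap. Quantitatively, the $n^{1/16-\Theta(\varepsilon)}$ bound follows by balancing the size of $H$ (polynomial in $|E_{PG}|$ and $R$) against the PG integrality gap, with the same arithmetic as in the DSN reduction.

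For the completeness side --- constructing a low-cost Lasserre solution for SLSN --- I would reuse the CMMV17 fractional moments, which, by the crucial property isolated in the spanner proof, satisfy every PG constraint with value $1$. Working in the flow-based SLSN relaxation, define SDP vectors indexed by pairs $(e,p)$, where $p$ ranges over $s_e$--$t_e$ paths of length at most $L$ in the gadget for $e$, and set their moment products in terms of the CMMV17 moments on the two labels that $p$ uses. The cost of this solution scales inversely with the PG integrality gap, and by construction it assigns unit flow between each demand pair.

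The main obstacle will be verifying that every slack moment matrix of the derived SLSN solution is PSD. Following the spanner strategy, the plan is to decompose each such slack matrix as a sum of matrices, one per PG edge involved, in a way that each summand reduces to (a permuted copy of) a slack moment matrix of the CMMV17 feasible Lasserre solution on that PG edge; since the CMMV17 solution is feasible at the same number of levels, every summand is PSD, and hence so is their sum. The delicate point is that SLSN gadgets share endpoints across many PG edges and the undirected length constraint couples demands that would be independent in the directed case, so careful index partitioning is needed so that cross-gadget contributions cancel instead of producing indefinite residuals. This is precisely where the refined structural properties of the CMMV17 fractional solution that the paper develops for the spanner proofs become indispensable.
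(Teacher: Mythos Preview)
Your high-level template matches the paper's, but the gadget you describe yields no integrality gap. With per-PG-edge demand vertices $s_e,t_e$ and private connection edges $\{s_e,u^{(i)}\}$, $\{v^{(j)},t_e\}$, an integral solution that for each demand buys one label edge on each side plus one consistent middle edge costs $3|E_{PG}|$ --- the same as your fractional solution --- so the gap is $O(1)$. The voting argument cannot rescue this: a small integral solution need not use few distinct labels at any PG vertex, because each label-selecting edge is private to a single demand. The paper's construction differs precisely here: demand endpoints $c_i^l,x_j^l$ are indexed by PG \emph{vertices} (and a copy index $l\in[K]$), not by PG edges, so the edge $\{c_i^l,c_{i,\sigma}\}$ is shared by all $K$ demands through $c_i$ in copy $l$. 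A small integral solution must then, in some copy $l$, use few such edges, producing a small multi-labeling that still satisfies every PG edge; bucketing plus random label selection then contradicts the PG soundness bound. That sharing across demands is the entire mechanism behind the gap, and it is absent from your gadget.

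Two further mismatches. You propose lifting the flow LP with path-indexed variables; the paper instead lifts the equivalent cut formulation whose only variables are $x_e$ (Section~\ref{sec:spanner-lp}), so Lasserre moments are indexed by edge subsets $S\subseteq E$ and mapped to PG moments via the map $\Phi$. And the undirected ``back-and-forth'' complications you anticipate never arise: the paper simply sets $L=3$, verifies that the only length-$\le 3$ paths between $c_i^l$ and $x_j^l$ in the undirected instance are the intended ones through a consistent label pair, and reuses the DSN feasibility and soundness analysis verbatim.
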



\paragraph{Lift-and-Project for Network Design.} Lift and project methods such as Sherali-Adams~\cite{sherali1990} and Lasserre~\cite{lasserre2001} have been studied and used extensively for approximation algorithms.  For example, strong results are known about their performance on CSPs \cite{tulsiani2009, schoenebeck2008}, independent set in hypergraphs \cite{chlamtac2007}, graph coloring \cite{chlamtac2008}, and Densest $k$-Subgraph \cite{bhaskara2012,CDK12}. However, there is surprisingly little known about the power of these hierarchies for network design problems (the main exception being Directed Steiner Tree~\cite{friggstad2014, rothvoss2011}).  We begin to address this gap by providing Lasserre integrality gaps for a variety of difficult network design problems (\textsc{Basic $t$-Spanner}, \textsc{Directed $t$-Spanner}, \textsc{Directed Steiner Network}, and \textsc{Shallow Light Steiner Network}).  Our results can be seen as general framework for proving Lasserre integrality gaps for these types of hard network design problems.

\section{Preliminaries: Lasserre Hierarchy}

The Lasserre hierarchy is a way of lifting a polytope to a higher dimensional space, and then optionally projecting this left back to the original space in order to get tighter relaxations.  The standard characterization for Lasserre is as follows \cite{lasserre2001, laurent2003, rothvoss2013}:

\begin{definition}[Lasserre Hierarchy] \label{def:lasserre} 
Let $A\in\mathbb{R}^{m\times n}$ and $b \in\mathbb{R}^m$, and define the polytope $K=\{x\in\mathbb{R}^n : Ax\ge b\}$. The $r$-th level of the Lasserre hierarchy $L_r(K)$ consists of the set of vectors $y\in [0,1]^{\mathscr{P}([n])}$ where $\mathscr{P}$ means the power set, and they satisfy the following constraints:
$$
y_\varnothing=1, \quad
M_{r+1}(y):=(y_{I\cup J})_{|I|,|J| \le r+1}\succeq 0, \quad \forall\ell\in[m]:
M^\ell_r (y):=\left( \sum_{i=1}^n A_{\ell i}y_{I\cup J\cup \{i\}}-b_\ell y_{I\cup J}\right)_{|I|, |J| \le r}\succeq 0.
$$
The matrix $M_{r+1}$ is called the \textit{moment matrix}, and the matrices $M^\ell_r$ are called the \textit{slack moment matrices}.
\end{definition}


Let us review (see, e.g., \cite{rothvoss2013}) multiple helpful properties that we will use later. We include proofs in Appendix~\ref{app:lasserre} for completeness. 
\begin{claim}\label{claim:=1}
If $M_r(y)\succcurlyeq 0$, $|I|\le r$, and $y_I=1$, then $y_{I\cup J}=y_J$ for all $|J| \le r$.
\end{claim}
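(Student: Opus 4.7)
The plan is to use the standard linear-algebra fact that a PSD matrix $A$ satisfies $v^\top A v = 0$ if and only if $Av = 0$. I will apply this to the moment matrix $M_r(y)$ with a carefully chosen test vector $v$ indexed by subsets of $[n]$ of size at most $r$, namely $v = e_I - e_\varnothing$. Since $|I|\le r$ and $|\varnothing|=0\le r$, this vector is well-defined within the index set of $M_r(y)$.

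Next I would compute the quadratic form explicitly. Expanding $v^\top M_r(y) v$ using the identification $(M_r(y))_{S,T} = y_{S\cup T}$ yields
\[
v^\top M_r(y) v \;=\; y_{I\cup I} - 2\,y_{I\cup \varnothing} + y_{\varnothing\cup\varnothing} \;=\; y_I - 2y_I + y_\varnothing \;=\; 1 - y_I \;=\; 0,
\]
where the last equality uses the hypothesis $y_I = 1$ together with $y_\varnothing = 1$. Since $M_r(y)\succcurlyeq 0$ and $v^\top M_r(y) v = 0$, the PSD characterization recalled above forces $M_r(y)\,v = 0$.

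Finally, I would read off the desired identity coordinate-wise. For any $J$ with $|J|\le r$, the $J$-th coordinate of $M_r(y)\,v$ is $y_{J\cup I} - y_{J\cup\varnothing} = y_{I\cup J} - y_J$, which must therefore vanish; this gives $y_{I\cup J} = y_J$ as claimed. The only subtlety worth flagging is justifying the implication $v^\top A v = 0 \Rightarrow Av = 0$ for PSD $A$, but this is immediate by writing $A = B^\top B$ and observing that $\|Bv\|^2 = v^\top A v = 0$ forces $Bv = 0$ and hence $Av = B^\top B v = 0$. I expect no real obstacle beyond making sure the indexing conventions of $M_r$ line up with the use of $e_I$ and $e_\varnothing$ as legitimate index-vectors.
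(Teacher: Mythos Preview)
Your argument is correct. Both your proof and the paper's rely implicitly on $y_\varnothing=1$ (which is part of the Lasserre definition), so that is not an issue.

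The paper takes a slightly different route: for each fixed $J$ it looks at the $3\times 3$ principal submatrix of $M_r(y)$ indexed by $\{\varnothing,I,J\}$ and computes its determinant directly to be $-(y_{I\cup J}-y_J)^2$, which must be nonnegative by positive semidefiniteness, forcing $y_{I\cup J}=y_J$. Your approach is more global: you show once that the vector $e_I-e_\varnothing$ lies in the kernel of $M_r(y)$ via the quadratic-form characterization $v^\top A v=0\Rightarrow Av=0$ for PSD $A$, and then read off every coordinate at once. Your version is arguably cleaner and avoids a $3\times3$ determinant expansion; the paper's version is more self-contained in that it only uses nonnegativity of principal minors and does not invoke the kernel fact. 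Both are standard and equally valid.
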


\begin{claim}\label{claim:=0}
If $M_r(y) \succcurlyeq0$, $|I|\le r$, and $y_I=0$, then $y_{I\cup J}=0$ for all $|J|\le r$.
\end{claim}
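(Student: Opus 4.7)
The plan is to extract the conclusion directly from the PSD property applied to a single $2 \times 2$ principal submatrix of the moment matrix. Since $M_r(y) \succcurlyeq 0$, every principal submatrix is also PSD, and the sets $I$ and $J$ are both valid indices of $M_r(y)$ because $|I|, |J| \le r$.

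Concretely, I would form the principal submatrix of $M_r(y)$ whose rows and columns are indexed by $I$ and $J$. By the definition of the moment matrix, its entries are $y_{I \cup I} = y_I = 0$ on the top-left, $y_{J \cup J} = y_J$ on the bottom-right, and $y_{I \cup J}$ off-diagonal, giving
$$
\begin{pmatrix} 0 & y_{I \cup J} \\ y_{I \cup J} & y_J \end{pmatrix} \succcurlyeq 0.
$$
A $2 \times 2$ symmetric matrix is PSD only if its determinant is non-negative, which here reads $0 \cdot y_J - y_{I \cup J}^2 \ge 0$. Combined with the obvious inequality $y_{I \cup J}^2 \ge 0$, this forces $y_{I \cup J} = 0$, which is exactly the desired conclusion.

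The argument is essentially a one-line application of positive semidefiniteness, so there is no real obstacle beyond verifying the indexing (namely, that $I$ and $J$ are legitimate indices for $M_r(y)$ under the assumptions $|I| \le r$ and $|J| \le r$, and that $I \cup I = I$ so that the top-left entry is indeed $y_I$). An alternative, equivalent route would be to note that $M_r(y) \succcurlyeq 0$ means $y_{I \cup J} = \langle v_I, v_J \rangle$ for some Gram vectors, whence $y_I = 0$ gives $\|v_I\| = 0$ and Cauchy--Schwarz yields $|y_{I \cup J}| = |\langle v_I, v_J \rangle| \le \|v_I\| \cdot \|v_J\| = 0$; I would mention this viewpoint as it is the intuition behind the $2 \times 2$ minor argument and is useful for the companion Claim~\ref{claim:=1}.
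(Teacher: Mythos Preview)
Your proposal is correct and follows essentially the same approach as the paper: both take the $2\times 2$ principal submatrix of $M_r(y)$ indexed by $I$ and $J$, observe its determinant $-y_{I\cup J}^2$ must be nonnegative, and conclude $y_{I\cup J}=0$. Your additional Gram-vector/Cauchy--Schwarz remark is a nice intuition but not needed for the argument.
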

 
\begin{lemma} \label{lem:simple_slack}
If $M_{r+1}(y)\succeq 0$ then for any $i\in [n]$ we have $M^{i,1}(y)=(y_{I\cup J\cup \{i \}})_{|I|,|J| \le r}\succeq 0$ and $M^{i,0}(y)=(y_{I\cup J}-y_{I\cup J\cup \{i \}})_{|I|,|J| \le r}\succeq 0$. 
\end{lemma}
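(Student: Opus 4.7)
The plan is to realize both $M^{i,1}(y)$ and $M^{i,0}(y)$ as Gram matrices of explicit collections of vectors obtained from the Gram decomposition of the level-$(r{+}1)$ moment matrix. Since $M_{r+1}(y)\succeq 0$, there exist vectors $\{v_S\}_{|S|\le r+1}$ with $y_{S\cup T} = \langle v_S, v_T\rangle$ for every pair with $|S|,|T|\le r+1$. This single Gram decomposition will serve as the workhorse for both halves of the lemma, so the only thing left is to algebraically match each slack-moment entry to an inner product of well-chosen combinations of the $v_S$'s.

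For $M^{i,1}(y)$, I would rewrite its $(I,J)$-entry as
\[
y_{I\cup J\cup \{i\}} = y_{(I\cup\{i\})\cup(J\cup\{i\})} = \langle v_{I\cup\{i\}},\, v_{J\cup\{i\}}\rangle,
\]
which is valid because $|I\cup\{i\}|,|J\cup\{i\}|\le r+1$. Thus $M^{i,1}(y)$ is the Gram matrix of the (indexed) family $\{v_{I\cup\{i\}} : |I|\le r\}$ and is therefore PSD. Equivalently, after reindexing rows and columns by $I\cup\{i\}$, it is a principal submatrix of $M_{r+1}(y)$ (with possible duplicate rows/columns when $i\in I$, which does not affect PSD-ness).

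For $M^{i,0}(y)$, I would expand
\[
\langle v_I - v_{I\cup\{i\}},\, v_J - v_{J\cup\{i\}}\rangle = y_{I\cup J} - y_{I\cup J\cup\{i\}} - y_{I\cup J\cup\{i\}} + y_{I\cup J\cup\{i\}} = y_{I\cup J} - y_{I\cup J\cup\{i\}},
\]
using the set-theoretic fact $\{i\}\cup\{i\}=\{i\}$, which makes three of the four terms collapse to the same moment $y_{I\cup J\cup\{i\}}$. This exhibits $M^{i,0}(y)$ as the Gram matrix of the family $\{v_I - v_{I\cup\{i\}} : |I|\le r\}$, hence PSD.

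The argument has no serious obstacle; the only point requiring care is the set-theoretic bookkeeping: checking $|I\cup\{i\}|\le r+1$ so that each $v_{I\cup\{i\}}$ actually appears in the Gram decomposition of $M_{r+1}(y)$, and noting that when $i\in I$ one has $I\cup\{i\}=I$, which makes the corresponding column of $M^{i,0}(y)$ identically zero, consistent with (and harmless for) positive semidefiniteness.
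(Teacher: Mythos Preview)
Your proof is correct. The Gram-decomposition argument is clean: once you have vectors $v_S$ with $y_{S\cup T}=\langle v_S,v_T\rangle$ for all $|S|,|T|\le r+1$, the two identifications $y_{I\cup J\cup\{i\}}=\langle v_{I\cup\{i\}},v_{J\cup\{i\}}\rangle$ and $y_{I\cup J}-y_{I\cup J\cup\{i\}}=\langle v_I-v_{I\cup\{i\}},\,v_J-v_{J\cup\{i\}}\rangle$ are immediate (the latter uses exactly the collapse $y_{I\cup J}-2y_{I\cup J\cup\{i\}}+y_{I\cup J\cup\{i\}}$ you wrote out), and the index bound $|I\cup\{i\}|\le r+1$ is the only thing one must check.

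This is a genuinely different route from the paper's. The paper does not prove the lemma directly; it simply invokes the known containment $L_r(K)\subseteq N^+_r(L_{r-1}(K))$ between the Lasserre and the Lov\'asz--Schrijver SDP hierarchies and points to Laurent~\cite{laurent2003} for a formal argument. Your approach is more elementary and entirely self-contained: it never mentions $N^+$ and extracts the two PSD matrices directly from the Cholesky/Gram factorization of $M_{r+1}(y)$. The paper's route situates the lemma within the broader hierarchy comparison (which is conceptually informative), whereas your route gives a shorter, standalone proof that a reader can verify without consulting an external reference.
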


\section{\textsc{Projection Games}: Background and Previous Work} \label{sec:projection}

In this section we discuss the \emph{\textsc{Projection Games}} problem, its Lasserre relaxation, and the integrality gap that was recently developed for it~\cite{CMMV17} which form the basis of our integrality gaps for spanners and related problems.
We begin with the problem definition.

\begin{definition}[\textsc{Projection Games}]
Given a bipartite graph $(L,R,E,\Sigma, \{\pi_e\}_{e\in E})$, where $\Sigma$ is the (label) alphabet set and $\pi_e:\Sigma\rightarrow\Sigma$ for each $e\in E$, the objective is to find a label assignment $\alpha:L\cup R\rightarrow\Sigma$ that maximizes $\sum_{e=(v_L,v_R)\in E}\mathds{1}_{\pi_e(\alpha(v_L))=\alpha(v_R)}$ (i.e. the number of edges $e=(v_L,v_R)$ where $\pi_e(\alpha(v_L))=\alpha(v_R)$, which we refer to as satisfied edges).
\end{definition}

We will sometimes use relation notation for the functions $\pi_e$, w.g., we will talk about $(\sigma_1, \sigma_2) \in \pi_e$.  Note that \textsc{Projection Games} is the famous \textsc{Label Cover} problem but where the relation for every edge is required to be a function (and hence we inherit the relation notation when useful).  Similarly if we further restrict every function $\pi_e$ to be a bijection then we have the \textsc{Unique Games} problem.  So \textsc{Projection Games} lies ``between" \textsc{Unique Games} and \textsc{Label Cover}.

The basis of our integrality gaps is the integrality gap instance recently shown by~\cite{CMMV17} for Lasserre relaxations of \textsc{Projection Games}.  We first formally define this SDP.  For every $\Psi \subseteq (L\cup R)\times\Sigma$ we will have a variable $y_{\Psi}$.  Then the $r$-th level Lasserre SDP for \textsc{Projection Games} is the following.

\[\begin{array}{rll}
\text{SDP}_{Proj}^r:~\max&\sum\limits_{(v_L,v_R)\in E,(\sigma_L,\sigma_r)\in\pi_{(u,v)}}y_{(v_L,\sigma_L),(v_R,\sigma_R)}&\\
s.t.&y_\varnothing=1&\\
&M_r(y)=\left(y_{\Psi_1\cup\Psi_2}\right)_{|\Psi_1|,|\Psi_2|\le r}\succcurlyeq0&\\
&M_r^v(y)=\left(\sum_{\sigma\in\Sigma}y_{\Psi_1\cup\Psi_2\cup\{(v,\sigma)\}}-y_{\Psi_1\cup\Psi_2}\right)_{|\Psi_1|,|\Psi_2|\le r}=\mathbf{0}&\forall v\in V\
\end{array}\]

It is worth noting that this is not the original presentation of this SDP given by~\cite{CMMV17}: they wrote it using a vector inner product representation.  But it can be shown that these representations are equivalent, and in particular we prove the important direction of this in Appendix~\ref{app:projEquiv}: any feasible solution to their version gives an equivalent feasible solution to $\text{SDP}_{Proj}^r$, and thus their fractional solutions are also fractional solutions to $\text{SDP}_{Proj}^r$.

\cite{CMMV17} gives a \textsc{Projection Games} instance with following properties. One of the properties is not proven in their paper, but is essentially trivial. We give a proof of this property, as well as a discussion of how the other properties follow from their construction, in Appendix \ref{app:projInstance}.

\begin{lemma}\label{lem:projInstance}
For any constant $0<\varepsilon<1$, there exists a \textsc{Projection Games} instance $(L,R,E_{Proj},\Sigma,\\(\pi_e)_{e\in E_{Proj}})$ with the following properties:
\begin{enumerate}
\item $\Sigma=[n^\frac{3-3\varepsilon}{5}]$, $R=\{x_1,\mathellipsis,x_n\}$, $L=\{c_1,\mathellipsis,c_m\}$, where $m=n^{1+\varepsilon}$.
\item There exists a feasible solution $\mathbf{y}^*$ for the $r=n^{\Omega(\varepsilon)}$-th level $\text{SDP}_{Proj}^r$, such that for all $\{c_i,x_j\}\in E_{Proj}$, we have ${\sum\limits_{(\sigma_L,\sigma_R)\in\pi_{(c_i,x_j)}}y_{(c_i,\sigma_L),(x_j,\sigma_R)}=1}$. 
\item At most $O\left(\frac{n^{1+\varepsilon}\ln n}{\varepsilon}\right)$ edges can be satisfied.
\item The degree of vertices in $L$ is $K=n^\frac{1-\varepsilon}{5}-1$, and the degree of vertices in $R$ is at most $2Kn^\varepsilon$.
\end{enumerate}
\end{lemma}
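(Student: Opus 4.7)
The plan is to derive the four properties from the \textsc{Projection Games} instance constructed in \cite{CMMV17}, which in turn is obtained from the random CSP integrality gap of \cite{tulsiani2009} via the standard reduction that turns a random $k$-CSP into a bipartite projection instance (variables on the right, constraints on the left). Properties 1 and 4 are parameter-tracking: the CMMV17 construction fixes the alphabet size, the number of constraints, and the locality / arity of each constraint, which immediately determine $|\Sigma|$, $|L|$, $|R|$, and the left-degree $K=n^{(1-\varepsilon)/5}-1$; the right-degree bound $2Kn^{\varepsilon}$ follows from a standard Chernoff argument on a uniformly random set of constraints, picking enough slack so that the maximum degree concentrates around its expectation $Km/n = Kn^{\varepsilon}$. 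These parameter computations I would simply carry out in the appendix, checking each quantity against the recipe in \cite{CMMV17}.

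For property 3 I would invoke the soundness of \cite{tulsiani2009,CMMV17}: for a random CSP of the chosen parameters, with positive probability no integral assignment satisfies more than an $O(\log n/\varepsilon)$-fraction of constraints. In our parameter regime this yields at most $O(n^{1+\varepsilon}\ln n/\varepsilon)$ satisfied edges, exactly as stated. The only care needed is to show that the standard reduction from a CSP to \textsc{Projection Games} is approximation-preserving in the ``number of satisfied constraints'' sense, which is immediate because each CSP constraint corresponds to one edge in $E_{\text{Proj}}$.

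The main content is property 2. The CMMV17 solution is presented in vector form; via Appendix~\ref{app:projEquiv} I would first translate it to a moment vector $\mathbf{y}^*$ feasible for $\text{SDP}_{Proj}^r$ at level $r=n^{\Omega(\varepsilon)}$. To extract the ``every edge is fully satisfied'' statement, I would exploit two features of their construction. First, each local constraint $\pi_e$ corresponds to a joint distribution in the CMMV17 solution whose support lies entirely inside $\pi_e$ (this is forced because the projection constraint $M_r^v(y)=0$ and the symmetric constraint on $(v_L,v_R)$ together with the original CSP feasibility imply that $y_{(v_L,\sigma_L),(v_R,\sigma_R)}=0$ whenever $(\sigma_L,\sigma_R)\notin\pi_e$). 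Second, applying the vertex-marginal identity $\sum_{\sigma_L\in\Sigma}y_{(v_L,\sigma_L)}=y_\varnothing=1$ from Claim~\ref{claim:=1} (together with its consequence $\sum_{\sigma_L}y_{(v_L,\sigma_L),(v_R,\sigma_R)}=y_{(v_R,\sigma_R)}$), one obtains
\[
\sum_{(\sigma_L,\sigma_R)\in\pi_e}y_{(v_L,\sigma_L),(v_R,\sigma_R)} \;=\;\sum_{\sigma_L,\sigma_R}y_{(v_L,\sigma_L),(v_R,\sigma_R)}\;=\;1,
\]
where the first equality drops the zero terms coming from pairs outside $\pi_e$, and the second uses the vertex marginal identity twice. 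This yields exactly the property claimed.

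The main obstacle, and the reason this lemma is non-trivial even though CMMV17 already built the instance, is establishing that the support of $\mathbf{y}^*$ is contained in $\pi_e$ on each edge. This is not stated explicitly in \cite{CMMV17} because their application only needs a good objective value, not per-edge full satisfaction. I would therefore spend most of the proof chasing through their explicit construction of the local distributions (the ``CSP local views'') and verifying that every pair assigned nonzero weight is a satisfying assignment to the corresponding projection constraint; once this is done, the vertex-marginal calculation above completes the argument.
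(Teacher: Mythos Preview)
Your plan for properties~1, 3, and the left-degree part of~4 matches the paper exactly (parameter tracking from the CMMV17 construction, citing CMMV17 for soundness); the right-degree bound via Chernoff and a union bound is also precisely what the paper does in Claim~\ref{claim:projDegree}.

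For property~2 your route is correct but more circuitous than the paper's. The paper simply cites from \cite{CMMV17} (recorded as Lemma~\ref{lem:projComplete}) that their vector solution already satisfies $\sum_{(\sigma_L,\sigma_R)\in\pi_e}\|U_{(c_i,\sigma_L),(x_j,\sigma_R)}\|^2=1$ for every edge, and then applies the vector-to-moment translation of Appendix~\ref{app:projEquiv} to obtain $\mathbf{y}^*$ with the same per-edge identity. No inspection of the local distributions is needed. Your proposed detour --- first show the support of $\mathbf{y}^*$ lies inside $\pi_e$ by opening up the construction, then use the marginal identity to sum to $1$ --- also works, but note that your parenthetical justification has the implication reversed: the constraint $M_r^v(y)=\mathbf{0}$ alone does \emph{not} force $y_{(v_L,\sigma_L),(v_R,\sigma_R)}=0$ outside $\pi_e$. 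In the paper that vanishing (Claim~\ref{claim:pair=0}) is \emph{deduced from} property~2, not used to prove it. So if you follow your plan you really do have to go into the explicit CMMV17 local views to get the support statement independently, whereas the paper avoids this by citing the per-edge objective value directly. (Equivalently: since each edge contributes at most $1$ by the marginal identity, the CMMV17 objective value $|E_{\text{Proj}}|$ already forces every edge to contribute exactly $1$.) Also, the marginal identity you need comes from the slack constraint $M_r^v(y)=\mathbf{0}$, not from Claim~\ref{claim:=1}.
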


We also define $\pi_{i,j}=\pi_e$ if $e=\{c_i,x_j\}\in E_{Proj}$.

\section{Lasserre Integrality Gap for \textsc{Directed $(2k-1)$-Spanner}} \label{sec:directed}

%


In this section we prove our main result for the \textsc{Directed $(2k-1)$-Spanner} problem: a polynomial integrality gap for  polynomial levels of the Lasserre hierarchy.  We begin by discussing the base LP that we will use and its Lasserre lifts, then define the instance of \textsc{Directed $(2k-1)$-Spanner} that we will analyze (based on the integrality gap instance for \textsc{Projection Games} in Lemma \ref{lem:projInstance}), and then analyze this instance.

\subsection{Spanner LPs and their Lasserre lifts} \label{sec:spanner-lp}
%
%
%
The standard flow-based LP for spanners (including both the directed and basic $k$-spanner problems) was introduced by~\cite{DK11}, and has subsequently been used in many other spanner problems~\cite{BBMRY13,DK11-FT,CD16}.  Let $\mathcal P_{u,v}$ denote the set of all stretch-$k$ paths from $u$ to $v$.
\[\begin{array}{rll}
\text{LP}_{Spanner}^{Flow}:~\min&\sum\limits_{e\in E}x_e&\\
s.t.&\sum\limits_{P\in \mathcal{P}_{u,v}:e\in P}f_P\le x_e&\forall(u,v)\in E,\forall e\in E\\
&\sum\limits_{P\in \mathcal{P}_{u,v}}f_P\ge1&\forall(u,v)\in E\\
&x_e\ge0&\forall e\in E\\
&f_P\ge0&\forall(u,v)\in E, P\in \mathcal{P}_{u,v}
\end{array}\]

While this LP is extremely large (the number of variables can be exponential if there are general lengths on the edges, or if all lengths are unit but $k$ is large enough), it was shown in~\cite{DK11} that it can be solved in polynomial time.  However, for the purposes of studying its behavior in the Lasserre hierarchy, $\text{LP}_{Spanner}^{Flow}$ is a bit awkward.  Since it has (potentially) exponential size, so do its Lasserre lifts.  And from a more ``intuitive" point of view, since there are two different ``types" of variables, the lifts become somewhat difficult to reason about.  

Since the $f_P$ variables do not appear in the objective function, we can project the polytope defined by $\text{LP}_{Spanner}^{Flow}$ onto the $x_e$ variables and use the same objective function to get an equivalent LP but with only the $x_e$ variables.  More formally, let $\mathcal{Z}^{u,v}=\{\mathbf{z}\in[0,1]^{|E|} : \sum_{e\in P}z_e\ge1 \ \forall P \in \mathcal P_{u,v}\}$ be the polytope bounded by $0\le z_e\le1$ for all $e\in E$ and $\sum_{e\in P}z_e\ge1$ for all $P\in\mathcal{P}_{u,v}$.  Then it is not hard to see that if we project the polytope defined by $\text{LP}_{Spanner}^{Flow}$ onto just the $x_e$ variables, we get precisely the following LP (this can also be seen via the duality between stretch-$k$ flows and fractional cuts against stretch-$k$ paths):
\[\begin{array}{rll}
\text{LP}_{Spanner}:~\min&\sum\limits_{e\in E}x_e&\\
s.t.&\sum\limits_{e\in E}z_ex_e\ge1&\forall (u,v)\in E,\mbox{ and }\mathbf{z}\in\mathcal{Z}^{u,v}\\
&x_e\ge0&\forall e\in E\\
\end{array}\]

While as written there are an infinite number of constraints, it is easy to see by convexity that we need to only include the (exponentially many) constraints corresponding to vectors $\mathbf{z}$ that are vertices in the polytope $\mathcal Z^{u,v}$, for each $(u,v) \in E$.  Thus there are only an exponential number of constraints, but for simplicity we will analyze this LP as if there were constraints for all possible $\mathbf{z}$. This LP is completely equivalent to $\text{LP}_{Spanner}^{Flow}$, in the sense that a vector $\mathbf{x}$ is feasible for $\text{LP}_{Spanner}$ if any only if there exist $f_P$ variables so that $(\mathbf{x},\mathbf{f})$ is feasible for $\text{LP}_{Spanner}^{Flow}$.  The proof of the following theorem is included in Appendix~\ref{app:spanner-lp}

\begin{theorem} \label{thm:spanner-lp-equivalent}
$\mathbf{x}$ is feasible for $\text{LP}_{Spanner}$ if and only if there is some $f_P$ for each $(u,v) \in E$ and $P \in \mathcal P_{u,v}$ so that $(\mathbf{x}, \mathbf{f})$ is feasible for $\text{LP}_{Spanner}^{Flow}$.
\end{theorem}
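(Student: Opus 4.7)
The plan is to prove both directions separately. The forward direction ($(\mathbf{x},\mathbf{f})$ feasible for $\text{LP}_{Spanner}^{Flow}$ implies $\mathbf{x}$ feasible for $\text{LP}_{Spanner}$) should be a direct double-counting argument; the reverse direction will use LP duality on a path-based max-flow problem for each edge $(u,v)\in E$, and I expect the main obstacle to be the fact that the natural dual of this max-flow LP lives on a feasible region strictly larger than $\mathcal{Z}^{u,v}$, because it lacks the upper bounds $z_e\le 1$.

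For the forward direction, I would fix any $(u,v)\in E$ and $\mathbf{z}\in\mathcal{Z}^{u,v}$ and compute
\[
\sum_{e\in E} z_e x_e \;\ge\; \sum_{e\in E} z_e\sum_{P\in\mathcal{P}_{u,v}:\,e\in P} f_P \;=\; \sum_{P\in\mathcal{P}_{u,v}} f_P\sum_{e\in P} z_e \;\ge\; \sum_{P\in\mathcal{P}_{u,v}} f_P \;\ge\; 1,
\]
using in turn the capacity constraints of $\text{LP}_{Spanner}^{Flow}$, a swap of summation order together with $\sum_{e\in P}z_e\ge 1$ coming from $\mathbf{z}\in\mathcal{Z}^{u,v}$, and finally the demand constraint $\sum_P f_P\ge 1$.

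For the reverse direction, I would fix $(u,v)\in E$ and study the LP that maximizes $\sum_{P\in\mathcal{P}_{u,v}} f_P$ subject to $\sum_{P:e\in P} f_P\le x_e$ for all $e$ and $f_P\ge 0$. Its LP dual minimizes $\sum_e x_e z_e$ over the polytope $\mathcal{Z}'^{u,v}=\{\mathbf{z}\ge 0:\sum_{e\in P}z_e\ge 1\ \forall P\in\mathcal{P}_{u,v}\}$, which properly contains $\mathcal{Z}^{u,v}$. The hard part is showing that the two minima coincide. I would handle this by truncation: given any $\mathbf{z}^*\in\mathcal{Z}'^{u,v}$, set $\bar z_e=\min(z^*_e,1)$, so $\bar{\mathbf z}\in[0,1]^{|E|}$ and $\sum_e x_e\bar z_e\le \sum_e x_e z^*_e$ since $\mathbf{x}\ge 0$. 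To verify $\bar{\mathbf z}\in\mathcal{Z}^{u,v}$, I would case-split on each path $P$: if some $e\in P$ has $z^*_e\ge 1$ then $\bar z_e=1$ alone already forces $\sum_{e\in P}\bar z_e\ge 1$; otherwise $\bar z_e=z^*_e$ for every $e\in P$ and the path sum is unchanged.

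Combining these observations, the LP-dual optimum equals $\min_{\mathbf{z}\in\mathcal{Z}^{u,v}}\sum_e x_e z_e$, which is at least $1$ by feasibility of $\mathbf{x}$ for $\text{LP}_{Spanner}$. Strong LP duality (both LPs are clearly feasible and bounded) then yields a feasible path flow with $\sum_P f_P\ge 1$ for this $(u,v)$, and doing this for every $(u,v)\in E$ assembles the full flow vector $\mathbf{f}$ making $(\mathbf{x},\mathbf{f})$ feasible for $\text{LP}_{Spanner}^{Flow}$.
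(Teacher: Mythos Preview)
Your proposal is correct and, for the nontrivial direction (from $\text{LP}_{Spanner}$ to $\text{LP}_{Spanner}^{Flow}$), follows essentially the same route as the paper: set up the path max-flow LP for each $(u,v)$, take its dual, and invoke strong duality to extract a flow of value at least $1$.

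Two minor differences are worth noting. First, for the easy direction (from $\text{LP}_{Spanner}^{Flow}$ to $\text{LP}_{Spanner}$) you give a direct double-counting argument, whereas the paper also routes this through duality; your version is more elementary and arguably cleaner. Second, you explicitly handle the mismatch between the dual feasible region $\mathcal{Z}'^{u,v}$ (no upper bounds) and $\mathcal{Z}^{u,v}$ (with $z_e\le 1$) via the truncation $\bar z_e=\min(z^*_e,1)$ and the accompanying case analysis on paths. The paper's proof writes the cut LP without the upper bound and asserts that feasibility of $\mathbf{x}^*$ for $\text{LP}_{Spanner}$ gives $\sum_e x_e^* z_e^*\ge 1$ for the optimal dual $\mathbf{z}^*$, leaving this truncation step implicit. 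Your treatment fills that gap.
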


From Definition~\ref{def:lasserre},  the $r$-th level Lasserre SDP of $\text{LP}_{Spanner}$ is:
\[\begin{array}{rll}
\text{SDP}_{Spanner}^r:~\min&\sum\limits_{e\in E}y_e&\\
s.t.&y_\varnothing=1&\\
&M_{r+1}(y)=\left(y_{I\cup J}\right)_{|I|,|J|\le r+1}\succcurlyeq0&\\
&M_r^\mathbf{z}(y)=\left(\sum_{e\in E}z_ey_{I\cup J\cup\{e\}}-y_{I\cup J}\right)_{|I|,|J|\le r}\succcurlyeq0&\forall (u,v)\in E,\mbox{ and }\mathbf{z}\in\mathcal{Z}^{u,v}\\
\end{array}\]

This SDP is the basic object of study in this paper, and is what we will prove integrality gaps about.

\subsection{Spanner Instance} \label{sec:dir_instance}

In this section we formally define the instance of \textsc{Directed $(2k-1)$-Spanner} that we will analyze to prove the integrality gap.  We basically follow the framework of~\cite{DK11}, who showed how to use the hardness framework of~\cite{EP07,Kor01} to prove integrality gaps for the basic flow LP.  We start with a different instance (integrality gaps instances for \textsc{Projection Games} rather than random instances of \textsc{Unique Games}), and also slightly change the reduction in order to obtain a better dependency on $k$.

Roughly speaking, given a \textsc{Projection Games} instance, we start with the ``label-extended" graph.  For each original vertex in the projection game, we create a group of vertices in the spanner instance of size $|\Sigma|$. So each vertex in the group can be thought as a label assignment for the \textsc{Projection Games} vertex.  We then add paths between these groups corresponding to each function $\pi_e$ (we add a path if the associated assignment satisfies the \textsc{Projection Games} edges).  We add many copies of the \textsc{Projection Games} graph itself as the ``outer edges'', and then connect each \textsc{Projection Games} vertex to the group associated with it. The key point is to prove that any integral solution must contain either many outer edges or many ``connection edges" (in order to span the outer edges), while the fractional solution can buy connection and inner edges fractionally and simultaneously span all of the outer edges.

More formally, given the \textsc{Projection Games} instance $(L=\cup_{i\in m}\{c_i\},R=\cup_{i\in n}\{x_i\},E_{Proj},\Sigma=[n^\frac{3-3\varepsilon}{5}],(\pi_e)_{e\in E_{Proj}})$ from Lemma \ref{lem:projInstance}, we create a directed $(2k-1)$-spanner instance $G=(V,E)$ as follows (note that $K$ is the degree of the vertices in $L$):

For every $c_i\in L$, we create $|\Sigma|+kK|\Sigma|$ vertices: $c_{i,\sigma}$ for all $\sigma\in\Sigma$ and $c_i^l$ for all $l\in[kK|\Sigma|]$. We also create edges $(c_i^l,c_{i,\sigma})$ for each $\sigma\in\Sigma$ and $l\in[kK|\Sigma|]$. We call this edge set $E_L$.

For every $x_i\in R$, we create $|\Sigma|+kK|\Sigma|$ vertices: $x_{i,\sigma}$ for $\sigma\in\Sigma$ and $x_i^l$ for $l\in[kK|\Sigma|]$. We also create edge $(x_i^l,x_{i,\sigma})$ for each $\sigma\in\Sigma$ and $l\in[kK|\Sigma|]$. We call this edge set $E_R$.

For every $e=\{c_i,x_j\}\in E_{Proj}$, we create edges $(c_i^l,x_j^l)$ for each $l\in[kK|\Sigma|]$. We call this edge set $E_{Outer}$.

For each $e=\{c_i,x_j\}\in E_{Proj}$ and $(\sigma_L,\sigma_R)\in\pi_{i,j}$, we also create vertices $w_{i,j,\sigma_L,\sigma_R,t}$ for $t\in[2k-4]$ and edges $(c_{i,\sigma_L},w_{i,j,\sigma_L,\sigma_R,1}),(w_{i,j,\sigma_L,\sigma_R,1},w_{i,j,\sigma_L,\sigma_R,2}),\mathellipsis,(w_{i,j,\sigma_L,\sigma_R,2k-4},x_{j,\sigma_R})$. We call this edge set $E_M$.

Finally, for technical reasons we needs some other edges $E_{LStars}$ and $E_{RStars}$ inside groups of $L_{Labels}$ and $R_{Labels}$, which will be defined later.

To be more specific, $V=L_{Dups}\cup L_{Labels}\cup M_{Paths}\cup R_{Labels}\cup R_{Dups}$, $E=E_L\cup E_{LStars}\cup E_M\cup E_{RStars}\cup E_R\cup E_{Outer}$, such that:

\[\begin{array}{lll}
L_{Labels}=\{c_{i,\sigma}\mid i\in|L|,\sigma\in\Sigma\},&\hspace{-6em}L_{Dups}=\{c_i^l\mid i\in|L|,l\in[kK|\Sigma|]\},\\
R_{Labels}=\{x_{i,\sigma}\mid i\in|R|,\sigma\in\Sigma\},&\hspace{-6em}R_{Dups}=\{x_i^l\mid i\in|R|,l\in[kK|\Sigma|]\},\\
M_{Paths}=\{w_{i,j,\sigma_L,\sigma_R,t}\mid\{c_i,x_j\}\in E_{Proj},&\hspace{-7.666em}(\sigma_L,\sigma_R)\in\pi_{i,j},t\in[2k-4]\}\\
E_L^{i,l}=\{(c_i^l,c_{i,\sigma})\mid\sigma\in\Sigma\},&\hspace{-6em}E_L^l=\cup_{i\in|L|}E_L^{i,l},&\hspace{-9em}E_L=\cup_{l\in[kK|\Sigma|]}E_L^l,\\
E_R^{i,l}=\{(x_i^l,x_{i,\sigma})\mid\sigma\in\Sigma\},&\hspace{-6em}E_R^l=\cup_{i\in|R|}E_R^{i,l},&\hspace{-9em}E_R=\cup_{l\in[kK|\Sigma|]}E_R^l,\\
E_M^{i,j,\sigma_L,\sigma_R}=\{(c_{i,\sigma_L},w_{i,j,\sigma_L,\sigma_R,1}),(w_{i,j,\sigma_L,\sigma_R,1},&\hspace{-5.333em}w_{i,j,\sigma_L,\sigma_R,2}),\mathellipsis,(w_{i,j,\sigma_L,\sigma_R,2k-4},x_{j,\sigma_R})\}\\
E_M^{i,j}=\cup_{(\sigma_L,\sigma_R)\in\pi_{i,j}}E_M^{i,j,\sigma_L,\sigma_R}&\hspace{-6em}E_M=\cup_{i,j:\{c_i,x_j\}\in E_{Proj}}E_M^{i,j}\\
E_{Outer}=\{(c_i^l,x_j^l)\mid\{c_i,x_j\}\in E_{Proj},l\in[kK|\Sigma|]\}\\
E_{LStars}=\{(c_{i,1},c_{i,\sigma}),(c_{i,\sigma},c_{i,1})\mid i\in|L|,\sigma\in\Sigma\setminus\{1\}\},\\
E_{RStars}=\{(x_{i,1},x_{i,\sigma}),(x_{i,\sigma},x_{i,1})\mid i\in|R|,\sigma\in\Sigma\setminus\{1\}\},
\end{array}\]

Note that if $k<3$, then there is no vertex set $M_{Paths}$, but only edge set $E_M$, which directly connect $c_{i,\sigma_L}$ and $x_{j,\sigma_R}$ for each $\{c_i,x_j\}\in E_{Proj}$ and $(\sigma_L,\sigma_R)\in\pi_{i,j}$. To get some intuition, a schematic version of this graph is given as Figure~\ref{fig:spanner-figure} in Appendix~\ref{app:directed}.

\subsection{Fractional Solution}\label{sec:directSol}
In this section, we provide a low-cost feasible vector solution for the $r$-th level Lasserre lift of the spanner instance described above.  Slightly more formally, we define values $\{y_{S}' : S\subseteq E\}$ and show that they form a feasible solution for the $r$-th level Lasserre lift $\text{SDP}_{Spanner}^r$, and show that the objective value is $O(|V|)$.  We do this by starting with a feasible solution  $\{y_\Psi^* : \Psi\subseteq (L\cup R)\times\Sigma\}$ to the $(r+2)$-th level Lasserre lift $\text{SDP}_{Proj}^{r+2}$ for the \textsc{Projection Games} instance (based on Lemma \ref{lem:projInstance}) we used to construct our directed spanner instance, and adapting it for the spanner context. Before defining $y_S'$, we define a function $\Phi:E\setminus E_{Outer}\rightarrow\mathscr{P}((L\cup R)\times\Sigma)$ (where $\mathscr{P}$ indicates the power set) as follows.
\[\Phi(e)=\begin{cases} 
\varnothing,&\mbox{if }e\in E_{LStars}\cup E_{M}\cup E_{RStars}\\
\{(c_i,\sigma)\},&\mbox{if }e\in E_L\mbox{ and }e\mbox{ has an endpoint }c_{i,\sigma}\in L_{labels}\\
\{(x_i,\sigma)\},&\mbox{if }e\in E_R\mbox{ and }e\mbox{ has an endpoint }x_{i,\sigma}\in R_{labels}\\
\end{cases}\]

We then extend the definition of $\Phi$ to $\mathscr{P}(E\setminus E_{Outer})\rightarrow\mathscr{P}((L\cup R)\times\Sigma)$ by setting $\Phi(S)=\cup_{e\in S}\Phi(e)$.

Next, we define the solution $\{y_{S}'\mid S\subseteq E\}$. For any set $S$ containing any edge in $E_{Outer}$, we define $y_{S}'=0$, otherwise, let $y_{S}'=y_{\Phi(S)}^*$. Note that based on how we defined the function $\Phi$, for all edges in $E_{LStars}\cup E_{M}\cup E_{RStars}$ we have $y'_S= y^*_\varnothing=1$. In other words, these edges will be picked integrally in our feasible solution. At a very high level, what we are doing is fractionally buying edges in $E_L,E_R$ and integrally buying edges in $E_M$ in order to span edges in $E_{Outer}$. The edges in $E_{LStars}$ and $E_{RStars}$ are used to span edges in $E_L$ and $E_R$. We first argue that our fractional solution has cost only $O(|V|)$. The proof is included in Appendix \ref{app:directSol}

\begin{lemma} \label{lem:directObj}
The objective value of $\mathbf{y'}$ is $O(|V|)$.
\end{lemma}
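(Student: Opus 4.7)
The plan is to split the objective $\sum_{e \in E} y'_{\{e\}}$ by edge class and show each class contributes $O(|V|)$, using only the definition of $\mathbf{y}'$ and one easy consequence of feasibility of $\mathbf{y}^*$ for the projection-games SDP. Recall that $|V| = \Theta\bigl((|L|+|R|)\cdot kK|\Sigma|\bigr)$, since $|L_{Dups}| + |R_{Dups}| = (|L|+|R|)\cdot kK|\Sigma|$ already dominates $|L_{Labels}|+|R_{Labels}|$ and $|M_{Paths}|=|E_{Proj}|\cdot|\Sigma|\cdot(2k-4)$.

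First I would handle the easy classes. By definition $y'_{\{e\}}=0$ for $e\in E_{Outer}$, so that block contributes $0$. For $e\in E_{LStars}\cup E_M\cup E_{RStars}$ we have $\Phi(e)=\varnothing$, and hence $y'_{\{e\}}=y^*_{\varnothing}=1$; these edges are bought integrally. A direct count gives $|E_{LStars}|=2|L|(|\Sigma|-1)$, $|E_{RStars}|=2|R|(|\Sigma|-1)$, and since $\pi_{i,j}$ is a function there are exactly $|\Sigma|$ satisfying pairs per edge of $E_{Proj}$, so $|E_M|=(2k-3)\cdot|E_{Proj}|\cdot|\Sigma|=O(mkK|\Sigma|)$. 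All three together are $O\bigl((|L|+|R|)kK|\Sigma|\bigr)=O(|V|)$.

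The main step is the fractional classes $E_L$ and $E_R$. For each fixed $c_i\in L$ and each fixed duplicate index $l\in[kK|\Sigma|]$, the set $E_L^{i,l}$ consists of $|\Sigma|$ edges, one per $\sigma\in\Sigma$, with $y'_{\{(c_i^l,c_{i,\sigma})\}}=y^*_{\{(c_i,\sigma)\}}$. The key observation is that the \textsc{Projection Games} slack equation $M_r^{c_i}(y^*)=\mathbf 0$ at index $(\varnothing,\varnothing)$ gives
\[
\sum_{\sigma\in\Sigma}y^*_{\{(c_i,\sigma)\}}=y^*_{\varnothing}=1,
\]
so each $E_L^{i,l}$ contributes exactly $1$ to the objective. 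Summing over the $|L|\cdot kK|\Sigma|$ choices of $(i,l)$ yields $\sum_{e\in E_L}y'_{\{e\}}=|L|\cdot kK|\Sigma|$. The identical argument for $R$ gives $\sum_{e\in E_R}y'_{\{e\}}=|R|\cdot kK|\Sigma|$. Adding all contributions yields total objective $O\bigl((|L|+|R|)kK|\Sigma|\bigr)=O(|V|)$, as required.

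There is no real obstacle here once one sees the right partition: the only nontrivial ingredient is the per-vertex normalization $\sum_\sigma y^*_{\{(v,\sigma)\}}=1$ inherited from $\text{SDP}_{Proj}^{r+2}$, which is exactly why the fractional solution on each $E_L^{i,l}$-star (and $E_R^{i,l}$-star) sums to $1$ rather than to $|\Sigma|$. The rest is bookkeeping, matching the counts of edge classes to the sizes of $L_{Dups}$ and $R_{Dups}$ that dominate $|V|$.
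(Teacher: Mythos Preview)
Your proposal is correct and follows essentially the same approach as the paper's proof: partition by edge class, count the integrally-bought edges in $E_{LStars}\cup E_M\cup E_{RStars}$, and use the per-vertex normalization $\sum_{\sigma}y^*_{\{(v,\sigma)\}}=1$ from the \textsc{Projection Games} SDP to show each $E_L^{i,l}$ and $E_R^{i,l}$ star contributes exactly $1$. You are slightly more explicit about where the normalization comes from (the $(\varnothing,\varnothing)$ entry of $M_r^{v}(y^*)=\mathbf{0}$), but the argument is the same.
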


\subsubsection{Feasibility}\label{sec:directFeasibility}
In this section we show that the described vector solution is feasible for the $r$-th level of Lasserre, i.e., that all the moment matrices defined in $SDP_{Spanner}^r$ are PSD. This is the most technically complex part of the analysis, particularly for the slack moment matrices for edges in $E_{outer}$.  So we start with the easier matrices, working our way up to the more complicated ones.  In particular, we first use the fact that the base moment matrix in $SDP_{Proj}^{r+2}$ is PSD for the \textsc{Projection Games} solution $\mathbf{y}^*$ to show in Theorem~\ref{thm:main-moment} that the the base moment matrix of $SDP_{Spanner}^r$ is PSD for solution $\mathbf{y}'$.

\begin{theorem} \label{thm:main-moment}
The moment matrix $M_{r+2}(\mathbf{y}')=\left(y_{I\cup J}'\right)_{|I|,|J|\le r+2}$ is positive semidefinite, so does $M_{r+1}(\mathbf{y}')$.
\end{theorem}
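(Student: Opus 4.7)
The plan is to exploit the fact that $\mathbf{y}'$ is essentially a ``pullback'' of $\mathbf{y}^*$ along the edge-to-label map $\Phi$, with zeros plugged in whenever an outer edge appears. First I would split the index set $\{I\subseteq E : |I|\le r+2\}$ into those $I$ meeting $E_{Outer}$ and those disjoint from it. For any $I$ containing an outer edge, $I\cup J$ contains one too for every $J$, so by definition $y'_{I\cup J}=0$; therefore the rows and columns of $M_{r+2}(\mathbf{y}')$ indexed by such $I$ are identically zero and contribute nothing to the spectrum. Hence it is enough to show that the principal submatrix $B=(y'_{I\cup J})_{I,J\subseteq E\setminus E_{Outer},\,|I|,|J|\le r+2}$ is PSD.

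For $I,J\subseteq E\setminus E_{Outer}$, the construction of $\mathbf{y}'$ together with the fact that $\Phi$ is defined edgewise gives
\[
y'_{I\cup J}\;=\;y^*_{\Phi(I\cup J)}\;=\;y^*_{\Phi(I)\cup\Phi(J)}.
\]
Crucially, every edge $e\in E\setminus E_{Outer}$ has $|\Phi(e)|\le 1$, so $|\Phi(I)|\le|I|\le r+2$ for every $I$ in our index set; hence $\Phi(I)$ is a legitimate index of the moment matrix $M_{r+2}(\mathbf{y}^*)$ guaranteed by Lemma~\ref{lem:projInstance}. I would then introduce a $0/1$ ``lifting'' matrix $P$ with rows indexed by $\{I\subseteq E\setminus E_{Outer}:|I|\le r+2\}$ and columns indexed by $\{\Psi\subseteq(L\cup R)\times\Sigma:|\Psi|\le r+2\}$, where $P_{I,\Psi}=1$ iff $\Psi=\Phi(I)$. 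Since each row of $P$ has exactly one nonzero entry (at column $\Phi(I)$), a direct entrywise check gives
\[
(P\,M_{r+2}(\mathbf{y}^*)\,P^{\top})_{I,J}=y^*_{\Phi(I)\cup\Phi(J)}=y'_{I\cup J}=B_{I,J},
\]
so $B=P\,M_{r+2}(\mathbf{y}^*)\,P^{\top}\succeq 0$ by feasibility of $\mathbf{y}^*$ for $\text{SDP}_{Proj}^{r+2}$. The PSD-ness of $M_{r+1}(\mathbf{y}')$ is then immediate, as it is a principal submatrix of $M_{r+2}(\mathbf{y}')$.

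The only genuine step here is checking the matrix identity above, which is the reason we needed to start from a level-$(r+2)$ solution to the projection-games SDP rather than a level-$r$ one: the map $\Phi$ may collapse several edges to the same label pair, but it never increases the size of a set, so a level-$(r+2)$ solution on the label side descends to a level-$(r+2)$ base moment matrix on the spanner side. I expect no serious obstacle at this stage; the hard work of the feasibility analysis is postponed to the slack moment matrices for $\mathbf{z}\in\mathcal{Z}^{u,v}$ with $(u,v)\in E_{Outer}$, where one cannot simply pull the statement back through $\Phi$ and must instead write the slack matrix as a sum of separately-PSD matrices using the ``all edges satisfied'' property of $\mathbf{y}^*$ from Lemma~\ref{lem:projInstance}(2).
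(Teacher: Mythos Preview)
Your argument is correct and follows essentially the same route as the paper: kill the rows and columns indexed by sets meeting $E_{Outer}$, then recognize the remaining block as the pullback of $M_{r+2}(\mathbf{y}^*)$ under $\Phi$. The only difference is presentational: the paper verifies PSD-ness by checking that every principal minor is nonnegative (splitting into the cases where an index touches $E_{Outer}$, where two indices have the same $\Phi$-image, and the generic case), whereas you write the non-outer block directly as a congruence $B=P\,M_{r+2}(\mathbf{y}^*)\,P^{\top}$, which is cleaner and avoids the separate ``duplicate $\Phi$-image'' case.
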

\begin{proof}
We know that the moment matrix $M_{r+2}(\mathbf{y}^*)=\left(y_{\Psi_1\cup \Psi_2}^*\right)_{|\Psi_1|,|\Psi_2|\le r+2}\succcurlyeq0$, since $\mathbf{y}^*$ is a solution of $SDP_{Proj}^{r+2}$. Now for each principal submatrix $M$ of $M_{r+2}(\mathbf{y}')$, we consider three cases.  In the first case, suppose that $M$ has an index that contains an edge in $E_{Outer}$.  Then the whole row and whole column of this index is $0$, so the determinant is $0$. In the second case, $M$ includes two distinct indices $I,I'$ of $M_{r+2}(\mathbf{y}')$, such that $\Phi(I)=\Phi(I')$. In this case $M$ is not full rank, and thus the determinant is also $0$. Otherwise, $M$ does not include any indices that contain edges in $E_{Outer}$, and no two indices have the same $\Phi$ value.  Then $M$ is by definition a principal submatrix of $M_{r+2}(\mathbf{y}^*)$, since each row/column index of $M$ can be converted to a different index of $M_{r+2}(\mathbf{y}^*)$ based on function $\Phi$. Now, since all the principal submatrices of $M_{r+2}(\mathbf{y}')$ have non-negative determinant, by definition $M_{r+2}(\mathbf{y}')$ is PSD.
\end{proof}

Showing that the slack moment matrices of our spanner solution are all PSD is more subtle and requires a case by case analysis, combined with several properties of the Lasserre hierarchy. We divide this argument into three parts. First we show (in Theorem~\ref{thm:slack_ye=1}) that this is true for slack moment matrices corresponding to pairs $(u,v)$ for which we assigned $y_{(u,v)}'=1$. Then we show (Theorem~\ref{thm:slack:E_LE_R}) the same for edges in $E_L$ and $E_R$.  Finally, we handle the most difficult case of slack moment matrices corresponding to edges in $E_{outer}$ (Theorem~\ref{thm:Eouter}).  


\begin{theorem} \label{thm:slack_ye=1}
The slack moment matrix $M_r^\mathbf{z}(\mathbf{y}')=\left(\sum_{e\in E}z_ey_{I\cup J\cup\{e\}}'-y_{I\cup J}'\right)_{|I|,|J|\le r}$ is PSD for all $(u,v)\in E_{LStars}\cup E_{M}\cup E_{RStars}$ and $\mathbf{z}\in\mathcal{Z}^{u,v}$. 
\end{theorem}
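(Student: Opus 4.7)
The plan is to exploit two structural facts: the edge $(u,v)$ is bought integrally by $\mathbf{y}'$, and any fractional cut $\mathbf{z} \in \mathcal{Z}^{u,v}$ must pay fully for it. Specifically, for $(u,v) \in E_{LStars} \cup E_M \cup E_{RStars}$ we have $\Phi((u,v)) = \varnothing$, so $y'_{(u,v)} = y^*_\varnothing = 1$. Moreover, the single-edge path $u \to v$ is itself a stretch-$(2k-1)$ path in $\mathcal{P}_{u,v}$, so $\mathbf{z} \in \mathcal{Z}^{u,v}$ forces $z_{(u,v)} \ge 1$, and together with $z_e \le 1$ this gives $z_{(u,v)} = 1$.

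The key intermediate step is to show $y'_{I \cup J \cup \{(u,v)\}} = y'_{I \cup J}$ for all $|I|, |J| \le r$. Claim~\ref{claim:=1} applied to $M_{r+2}(\mathbf{y}')$ (which is PSD by Theorem~\ref{thm:main-moment}) yields this whenever the index set has size at most $r+2$, but we need it for index sets of size up to $2r$. To bridge the gap, I would use the Gram decomposition of $M_{r+2}(\mathbf{y}')$ into vectors $\{v_S\}_{|S| \le r+2}$ satisfying $\langle v_A, v_B \rangle = y'_{A \cup B}$. From $y'_{(u,v)} = 1$, equality in Cauchy--Schwarz forces $v_{\{(u,v)\}} = v_\varnothing$; iterating this argument using the already-established $y'_{S \cup \{(u,v)\}} = y'_S$ for $|S| \le r+1$ (from Claim~\ref{claim:=1}) shows $v_{S \cup \{(u,v)\}} = v_S$ throughout that range. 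Then, decomposing $I \cup J = S_1 \cup S_2$ with $|S_1|, |S_2| \le r$, we get $y'_{I \cup J \cup \{(u,v)\}} = \langle v_{S_1 \cup \{(u,v)\}}, v_{S_2} \rangle = \langle v_{S_1}, v_{S_2} \rangle = y'_{I \cup J}$.

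With this equality in hand, the conclusion is immediate. Writing $M^{e,1}(\mathbf{y}') = (y'_{I \cup J \cup \{e\}})_{|I|,|J| \le r}$, the $(u,v)$ term cancels the $-M_r(\mathbf{y}')$ term:
\[
M_r^{\mathbf{z}}(\mathbf{y}') = z_{(u,v)}\, M^{(u,v),1}(\mathbf{y}') + \sum_{e \neq (u,v)} z_e\, M^{e,1}(\mathbf{y}') - M_r(\mathbf{y}') = \sum_{e \neq (u,v)} z_e\, M^{e,1}(\mathbf{y}'),
\]
using $z_{(u,v)} = 1$ and $M^{(u,v),1}(\mathbf{y}') = M_r(\mathbf{y}')$. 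By Lemma~\ref{lem:simple_slack} (applicable because $M_{r+1}(\mathbf{y}') \succeq 0$ by Theorem~\ref{thm:main-moment}), each $M^{e,1}(\mathbf{y}')$ is PSD, and since $z_e \ge 0$ the slack matrix is PSD as a non-negative combination of PSD matrices.

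The main obstacle is the middle step: Claim~\ref{claim:=1} only extends the ``$y_I = 1$ collapses to the conditional'' property up to index sets of size $r+2$, which is strictly smaller than the $2r$ range required once $r \ge 3$. The Gram-vector argument above, leveraging the full strength of $M_{r+2}(\mathbf{y}') \succeq 0$, is what promotes the cancellation $M^{(u,v),1}(\mathbf{y}') = M_r(\mathbf{y}')$ from the range of Claim~\ref{claim:=1} to the full range of indices appearing in the slack moment matrix.
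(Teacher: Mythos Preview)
Your proof is correct and follows the same approach as the paper: cancel the $-y'_{I\cup J}$ term against the $(u,v)$ term using $y'_{(u,v)}=1$ and $z_{(u,v)}=1$, leaving a nonnegative combination of PSD matrices $M^{e,1}(\mathbf{y}')$. You are in fact more careful than the paper, which simply cites Claim~\ref{claim:=1} for the identity $y'_{I\cup J\cup\{(u,v)\}}=y'_{I\cup J}$ without addressing the index-size issue (the claim as stated only reaches sets of size $r+2$, not $2r$); your Gram-vector argument from $M_{r+2}(\mathbf{y}')\succeq 0$ is exactly the right way to fill this in.
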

\begin{proof}
Recall that for every $(u,v) \in E_{LStars}\cup E_{M}\cup E_{RStars}$ we set $y'_{(u,v)}=1$. So basic properties of Lasserre (Claim \ref{claim:=1}) imply that $y_{I \cup J \cup (u,v)}'= y_{I \cup J}'$ for all $|I|, |J| \leq r$. Thus
\begin{align*}
M_r^\mathbf{z}(\mathbf{y}')=&\left(\sum_{e\in E}z_ey_{I\cup J\cup\{e\}}'-y_{I\cup J}'\right)_{|I|,|J|\le r} =\left(\sum_{e\in E\setminus\{(u,v)\}}z_ey_{I\cup J\cup\{e\}}'+z_{(u,v)}y_{I\cup J\cup\{(u,v)\}}'-y_{I\cup J}'\right)_{|I|,|J|\le r}\\
=&\left(\sum_{e\in E\setminus\{(u,v)\}}z_ey_{I\cup J\cup\{e\}}'+1\cdot y_{I\cup J}'-y_{I\cup J}'\right)_{|I|,|J|\le r} =\sum_{e\in E\setminus P}z_e\left(y_{I\cup J\cup\{e\}}'\right)_{|I|,|J|\le r} \succcurlyeq 0
\end{align*}

Here the third equality follows from the fact $\mathbf{z}\in\mathcal{Z}^{u,v}$ and $(u,v)$ itself is a path connecting $u$ and $v$, and thus $z_{(u,v)}=1$. 
In the last equality we use the fact that $z_e\ge0$ for all $e\in E$, and $\left(y_{I\cup J\cup\{e\}}'\right)_{|I|,|J|\le r}$ is a principal submatrix of $M_{r+1}(\mathbf{y}')$, which is positive semidefinite according to Theorem \ref{thm:main-moment}.
\end{proof}

Now we prove a similar theorem for the edges in $E_L$ and $E_R$, which is a bit more complex since these edges are only bought fractionally in our solution.

\begin{theorem} \label{thm:slack:E_LE_R}
The slack moment matrix $M_r^\mathbf{z}(\mathbf{y}')=\left(\sum_{e\in E}z_ey_{I\cup J\cup\{e\}}'-y_{I\cup J}'\right)_{|I|,|J|\le r}$ is PSD for every $(u,v)\in E_L\cup E_R$ and $\mathbf{z}\in\mathcal{Z}^{u,v}$.
\end{theorem}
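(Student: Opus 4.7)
The plan is to extend Theorem~\ref{thm:slack_ye=1} to the harder case $y'_{(u,v)}<1$. Fix $(u,v)=(c_i^l, c_{i,\sigma}) \in E_L$; the $E_R$ case is symmetric. By Claim~\ref{claim:=0} we may assume $I\cup J$ contains no $E_{Outer}$ edge (else the corresponding row and column of $M_r^\mathbf{z}(\mathbf y')$ are zero). The directedness of $G$ forces every path of length $\leq 2k-1$ from $c_i^l$ to $c_{i,\sigma}$ to use exactly one $E_L$ edge $a_{\sigma'} := (c_i^l, c_{i,\sigma'})$ followed by $E_{LStars}$ edges (once one leaves via $E_M$ or $E_{Outer}$ there is no path back to $c_{i,\sigma}$), so the only \emph{relevant} edges are $\{a_{\sigma'}\}_{\sigma'\in\Sigma}$, the edge $c := (c_{i,1}, c_{i,\sigma})$ (if $\sigma\neq 1$), and $b_{\sigma'} := (c_{i,\sigma'}, c_{i,1})$ for $\sigma'\neq 1, \sigma$. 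The short paths $\{a_\sigma\}$, $\{a_1, c\}$, and $\{a_{\sigma'}, b_{\sigma'}, c\}$ give $z_{a_\sigma}=1$, $z_{a_1}+z_c\geq 1$, and $z_{a_{\sigma'}}+z_{b_{\sigma'}}+z_c\geq 1$ via $\mathbf z \in \mathcal Z^{u,v}$.

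First I would split $M_r^\mathbf{z}(\mathbf y') = S_{\mathrm{main}} + S_{\mathrm{other}}$, where $S_{\mathrm{other}} = \sum_{e\text{ irrelevant}} z_e\bigl(y'_{I\cup J\cup\{e\}}\bigr)_{I,J}$ is PSD by Lemma~\ref{lem:simple_slack} together with Theorem~\ref{thm:main-moment} and $z_e\geq 0$. For $S_{\mathrm{main}}$, since $y'_{b_{\sigma'}}=y'_c=1$, Claim~\ref{claim:=1} collapses $y'_{I\cup J\cup\{b_{\sigma'}\}}=y'_{I\cup J\cup\{c\}}=y'_{I\cup J}$, and the $\text{SDP}_{Proj}^r$ consistency relation $\sum_{\sigma'}y^*_{\Phi(I)\cup\Phi(J)\cup\{(c_i,\sigma')\}}=y^*_{\Phi(I\cup J)}=y'_{I\cup J}$ lets me absorb $-y'_{I\cup J}$ into the sum. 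A direct calculation then yields
\[
S_{\mathrm{main}} = \sum_{\sigma'\in\Sigma}\bigl(z_{a_{\sigma'}}+\alpha\bigr)\,\bigl(y^*_{\Phi(I\cup J)\cup\{(c_i,\sigma')\}}\bigr)_{I,J}, \qquad \alpha := z_c + \sum_{\sigma'\neq 1,\sigma}z_{b_{\sigma'}} - 1,
\]
with the analogous formula (drop $c$, use length-$2$ paths $\{a_{\sigma'}, b_{\sigma'}\}$) when $\sigma=1$.

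Two verifications remain. (i) Each base matrix $\bigl(y^*_{\Phi(I\cup J)\cup\{(c_i,\sigma')\}}\bigr)_{I,J}$ is PSD: it equals $P^\top M_{r+1}(\mathbf y^*)P$ where column $I$ of $P$ is the standard basis vector at coordinate $\Phi(I)\cup\{(c_i,\sigma')\}$, which fits inside $M_{r+1}(\mathbf y^*)$ since $|\Phi(I)|\leq|I|\leq r$, and $M_{r+1}(\mathbf y^*)\succeq 0$ because $\mathbf y^*$ is feasible for $\text{SDP}_{Proj}^{r+2}$. (ii) Each scalar $z_{a_{\sigma'}}+\alpha \geq 0$: for $\sigma'=\sigma$ it is automatic since $z_{a_\sigma}=1$; for $\sigma'=1$ use $z_{a_1}+z_c\geq 1$; for $\sigma'\notin\{1,\sigma\}$ use $z_{a_{\sigma'}}+z_{b_{\sigma'}}+z_c\geq 1$. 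Combining (i), (ii) and the PSDness of $S_{\mathrm{other}}$ completes the argument. The main difficulty is spotting the decomposition so that projection-game consistency can exactly cancel the $-y'_{I\cup J}$ term while leaving non-negative coefficients; what makes it work is that the star edges are bought integrally ($y'=1$), so their $z$-values shift only the scalar multiplier $\alpha$ without disturbing the PSD base matrices.
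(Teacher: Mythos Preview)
Your argument is correct and follows essentially the same route as the paper: enumerate the short $u$--$v$ paths through the $E_{LStars}$ star, use the projection-game consistency $\sum_{\sigma'} y^*_{\Psi\cup\{(c_i,\sigma')\}}=y^*_\Psi$ to cancel the $-y'_{I\cup J}$ term, and then check that every remaining scalar coefficient is nonnegative from the path inequalities $\sum_{e\in P}z_e\ge 1$. The only difference is bookkeeping---the paper keeps, for each star edge $e$ on path $P_{\sigma'}$, a separate PSD summand $z_e\bigl(y^*_{\Phi(I\cup J)}-y^*_{\Phi(I\cup J)\cup\{(c_i,\sigma')\}}\bigr)$ dispatched via Lemma~\ref{lem:simple_slack}, whereas you fold all those star-edge contributions into the single scalar $\alpha$ using the same consistency relation, which is a slightly more compact packaging of the identical idea.
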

\begin{proof}
The proof for $(u,v)\in E_R$ is similar to the proof for $(u,v)\in E_L$, so without loss of generality we focus on the $E_L$ case. Recall that each edge in $E_L$ can be represented as $(c_i^l,c_{i,\sigma'})$, where $l \in [kK|\Sigma|]$. Now we consider two cases based on how the edge is spanned: $\sigma'=1$ or $\sigma'\ne1$.
When $\sigma'=1$, let $P_1=\{(c_i^l,c_{i,1})\}$, and for every $\sigma\in\Sigma\setminus\{1\}$, let $P_\sigma=\{(c_i^l,c_{i,\sigma}),(c_{i,\sigma},c_{i,1})\}$. When $\sigma'\ne1$, let $P_1=\{(c_i^l,c_{i,1}),(c_{i,1},c_{i,\sigma'})\}$, $P_{\sigma'}=\{(c_i^l,c_{i,\sigma'})\}$, and for every $\sigma\in\Sigma\setminus\{1,\sigma'\}$, let $P_\sigma=\{(c_i^l,c_{i,\sigma}),(c_{i,\sigma},c_{i,1}),(c_{i,1},c_{i,\sigma'})\}$. 
%

In the $\sigma'=1$ case, because paths $P_\sigma$ are disjoint, we can partition the slack moment matrix as following:
\begin{align*}
M_r^\mathbf{z}(\mathbf{y}')=&\left(\sum_{e\in E}z_ey_{I\cup J\cup\{e\}}'-y_{I\cup J}'\right)_{|I|,|J|\le r} =\left(\sum_{e\in E\setminus\bigcup_{\sigma\in\Sigma}P_\sigma}z_ey_{I\cup J\cup\{e\}}'+\sum_{\sigma\in\Sigma}\sum_{e\in P_\sigma}z_ey_{I\cup J\cup\{e\}}'-y_{I\cup J}'\right)_{|I|,|J|\le r}\\
=&\sum_{e\in E\setminus\bigcup_{\sigma\in\Sigma}P_\sigma}z_e\left(y_{I\cup J\cup\{e\}}'\right)_{|I|,|J|\le r} +\sum_{\sigma\in\Sigma}\sum_{e\in P_\sigma}z_e\left(y_{\Phi(I\cup J)\cup\Phi(e)}^*-y_{\Phi(I\cup J)\cup\{(c_i,\sigma)\}}^*\right)_{|I|,|J|\le r}\\
&+\sum_{\sigma\in\Sigma}\left(\sum_{e\in P_\sigma}z_e-1\right)\left(y_{\Phi(I\cup J)\cup\{(c_i,\sigma)\}}^*\right)_{|I|,|J|\le r} +\left(\sum_{\sigma\in\Sigma}y_{\Phi(I\cup J)\cup\{(c_i,\sigma)\}}^*-y_{\Phi(I\cup J)}^*\right)_{|I|,|J|\le r}
\end{align*}

The first term in the final sum is PSD because $z_e\ge0$ for all $e\in E$, and $\left(y_{I\cup J\cup\{e\}}'\right)_{|I|,|J|\le r}$ is a principal submatrix of $M_{r+1}(\mathbf{y}')$, which is positive semidefinite by Theorem \ref{thm:main-moment}.  The second term in the final sum is PSD because either $\Phi(e)=\{(c_i,\sigma)\}$ which makes the matrix a zero matrix, or $\Phi(e)=\varnothing$ and we can use Lemma \ref{lem:simple_slack} to prove the matrix is PSD. The third term in the final sum is PSD because $\sum_{e\in P}z_e\ge1$ for all $P \in \mathcal P_{u,v}$(since $\mathbf{z}\in\mathcal{Z}^{u,v}$) together with the fact that the matrix $\left(y_{\Phi(I\cup J)\cup\{(c_i,\sigma)\}}^*\right)_{|I|,|J|\le r}=\left(y_{I\cup J\cup\{(c_i^l,c_{i,\sigma})\}}'\right)_{|I|,|J|\le r}$ is a principal submatrix of $M_{r+1}(\mathbf{y}')$.  The fourth term in the final sum is the zero matrix because the slack moment constraint of $\text{SDP}_{Proj}^{r+2}$. Thus $M_r^\mathbf{z}(\mathbf{y}') \succcurlyeq 0$ when $\sigma' = 1$.

In the $\sigma'\ne1$ case, all the paths $P_\sigma, \sigma \in \Sigma \setminus \{\sigma'\}$ share the common edge $(c_{i,1},c_{i,\sigma'})$. Thus we can partition the sum in the slack moment matrix as follows:
\begin{align*}
&M_r^\mathbf{z}(\mathbf{y}')=\left(\sum_{e\in E}z_ey_{I\cup J\cup\{e\}}'-y_{I\cup J}'\right)_{|I|,|J|\le r} \\
=&\left(\sum_{e\in E\setminus\bigcup_{\sigma\in\Sigma}P_\sigma}z_ey_{I\cup J\cup\{e\}}'+\sum_{\sigma\in\Sigma}\sum_{e\in P_\sigma\setminus\{(c_{i,1},c_{i,\sigma'})\}}z_ey_{I\cup J\cup\{e\}}'+z_{(c_{i,1},c_{i,\sigma'})}y_{\Phi(I\cup J)\cup\varnothing}^*-y_{\Phi(I\cup J)}^*\right)_{|I|,|J|\le r}\\
=&\sum_{e\in E\setminus\bigcup_{\sigma\in\Sigma}P_\sigma}z_e\left(y_{I\cup J\cup\{e\}}'\right)_{|I|,|J|\le r} +\sum_{\sigma\in\Sigma}\sum_{e\in P_\sigma\setminus\{(c_{i,1},c_{i,\sigma'})\}}z_e\left(y_{\Phi(I\cup J)\cup\Phi(e)}^*-y_{\Phi(I\cup J)\cup\{(c_i,\sigma)\}}^*\right)_{|I|,|J|\le r}\\
&+\sum_{\sigma\in\Sigma}\left(\sum_{e\in P_\sigma\setminus\{(c_{i,1},c_{i,\sigma'})\}}z_e+z_{(c_{i,1},c_{i,\sigma'})}-1\right)\left(y_{\Phi(I\cup J)\cup\{(c_i,\sigma)\}}^*\right)_{|I|,|J|\le r}\\
&+(1-z_{(c_{i,1},c_{i,\sigma'})})\left(\sum_{\sigma\in\Sigma}y_{\Phi(I\cup J)\cup\{(c_i,\sigma)\}}^*-y_{\Phi(I\cup J)}^*\right)_{|I|,|J|\le r}
\end{align*}

The above matrix is PSD for the same reasons as in the $\sigma'=1$ case.
\end{proof}

Now we move to the main technical component of our integrality gap analysis: proving that the slack moment matrices  corresponding to outer edges are positive semidefinite. To show this we have to carefully partition the sum over different set of edges, and handle each part separately. We also need to manipulate these sums in a way that allows us to use the fact that slack moment matrix constraints defined on $\mathbf{y}^*$ for the Lasserre SDP relaxation of \textsc{Projection Games} are PSD. We first describe how the sum is partitioned and then prove the positive semidefiniteness of each piece. Note that in this argument we crucially use the fact that the SDP solution of \cite{CMMV17} satisfies all of the demands. 

\begin{theorem} \label{thm:Eouter}
The slack moment matrix $M_r^\mathbf{z}(\mathbf{y}')=\left(\sum_{e\in E}z_ey_{I\cup J\cup\{e\}}'-y_{I\cup J}'\right)_{|I|,|J|\le r}$ is PSD for every $(u,v)\in E_{outer}$ and $\mathbf{z}\in\mathcal{Z}^{u,v}$.
\end{theorem}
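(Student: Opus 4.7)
The plan is to decompose the slack moment matrix along the natural stretch-$(2k-1)$ paths that span the outer edge $(u, v) = (c_i^l, x_j^l)$. For each $(\sigma_L, \sigma_R) \in \pi_{i,j}$, let
\[
P_{\sigma_L, \sigma_R} = \{(c_i^l, c_{i, \sigma_L})\} \cup E_M^{i,j,\sigma_L,\sigma_R} \cup \{(x_j^l, x_{j, \sigma_R})\}
\]
be the associated stretch-$(2k-1)$ path; since $\mathbf{z} \in \mathcal{Z}^{u,v}$, each such path satisfies $\sum_{e \in P_{\sigma_L,\sigma_R}} z_e \ge 1$. Throughout the argument I will write $I' = \Phi(I \cup J)$ and introduce two auxiliary matrix families
\[
\mathbf{Y} := \left(y^*_{I'}\right)_{|I|,|J|\le r}, \qquad \mathbf{Y}_{\sigma_L, \sigma_R} := \left(y^*_{I' \cup \{(c_i, \sigma_L), (x_j, \sigma_R)\}}\right)_{|I|,|J|\le r}.
\]
By the same reasoning as in Theorem~\ref{thm:main-moment}, both families are PSD, since each agrees entrywise with a principal submatrix of $M_{r+2}(\mathbf{y}^*)$ (with possibly duplicate rows/columns, which preserves positive semidefiniteness).

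The key structural input from the \textsc{Projection Games} instance is the identity $\mathbf{Y} = \sum_{(\sigma_L, \sigma_R) \in \pi_{i,j}} \mathbf{Y}_{\sigma_L, \sigma_R}$. To derive it, I combine the slack-moment equalities $\sum_{\sigma} y^*_{I' \cup (v, \sigma)} = y^*_{I'}$ at $v = c_i$ and $v = x_j$ with the ``satisfies-all-edges'' property of Lemma~\ref{lem:projInstance}, which together with non-negativity forces $y^*_{(c_i, \sigma_L), (x_j, \sigma_R)} = 0$ whenever $(\sigma_L, \sigma_R) \notin \pi_{i,j}$. Claim~\ref{claim:=0} then lifts this to $y^*_{I' \cup (c_i, \sigma_L) \cup (x_j, \sigma_R)} = 0$ for the relevant $I'$, and two applications of the vertex-assignment equalities yield the stated identity term by term.

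With these in hand, split $\sum_e z_e (y'_{I \cup J \cup e})$ into a sum $T_1$ over edges off $\bigcup_{(\sigma_L, \sigma_R) \in \pi_{i,j}} P_{\sigma_L, \sigma_R}$ and a sum $T_2$ over edges on it. Outer edges contribute zero, and the remaining off-path summands are non-negative combinations of principal submatrices of $M_{r+1}(\mathbf{y}')$, giving $T_1 \succeq 0$. Using $y'_e = 1$ together with Claim~\ref{claim:=1} on middle edges and the consistency identities on $E_L$ and $E_R$ edges, $T_2$ rewrites as
\[
T_2 = \sum_{(\sigma_L, \sigma_R) \in \pi_{i,j}} \bigl[z_{(c_i^l, c_{i,\sigma_L})} + z_{(x_j^l, x_{j,\sigma_R})}\bigr]\,\mathbf{Y}_{\sigma_L, \sigma_R} \;+\; \Bigl(\sum_{(\sigma_L, \sigma_R) \in \pi_{i,j}} z^M_{\sigma_L, \sigma_R}\Bigr)\mathbf{Y},
\]
where $z^M_{\sigma_L, \sigma_R} := \sum_{e \in E_M^{i,j,\sigma_L,\sigma_R}} z_e$. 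Expanding the middle term via $\mathbf{Y} = \sum_{(\sigma_L'', \sigma_R'') \in \pi_{i,j}} \mathbf{Y}_{\sigma_L'', \sigma_R''}$ produces a diagonal contribution $\sum_{(\sigma_L, \sigma_R) \in \pi_{i,j}} z^M_{\sigma_L, \sigma_R}\,\mathbf{Y}_{\sigma_L, \sigma_R}$ plus a PSD off-diagonal residual that we discard. Finally subtracting $(y'_{I \cup J}) = \mathbf{Y} = \sum_{(\sigma_L, \sigma_R) \in \pi_{i,j}} \mathbf{Y}_{\sigma_L, \sigma_R}$ gathers everything into
\[
M_r^{\mathbf{z}}(\mathbf{y}') \;\succeq\; T_1 \;+\; \sum_{(\sigma_L, \sigma_R) \in \pi_{i,j}} \Bigl[\sum_{e \in P_{\sigma_L, \sigma_R}} z_e - 1\Bigr]\mathbf{Y}_{\sigma_L, \sigma_R},
\]
where each bracketed coefficient is non-negative by the path constraint and each $\mathbf{Y}_{\sigma_L, \sigma_R} \succeq 0$, so the right-hand side is PSD.

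The main obstacle is arranging the on-path decomposition so that the path constraints $\sum_{e \in P_{\sigma_L, \sigma_R}} z_e \ge 1$ can be applied term by term: the middle-edge contribution naturally produces $z^M_{\sigma_L, \sigma_R}\,\mathbf{Y}$ instead of the $z^M_{\sigma_L, \sigma_R}\,\mathbf{Y}_{\sigma_L, \sigma_R}$ needed to match a single path constraint, and the identity $\mathbf{Y} = \sum_{(\sigma_L, \sigma_R) \in \pi_{i,j}} \mathbf{Y}_{\sigma_L, \sigma_R}$ is what lets us redistribute the middle-edge weight across the $\mathbf{Y}_{\sigma_L, \sigma_R}$ summands cleanly. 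This is precisely where the ``satisfies-all-edges'' property of the \textsc{Projection Games} integrality gap of~\cite{CMMV17} is indispensable; without it, $\mathbf{Y}$ would have non-trivial off-pair pieces $\mathbf{Y}_{\sigma_L, \sigma_R}$ with $(\sigma_L, \sigma_R) \notin \pi_{i,j}$ that cannot be cancelled, and the argument would fail.
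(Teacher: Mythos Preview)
Your proposal is correct and follows essentially the same route as the paper. Both proofs hinge on the identity $\mathbf{Y}=\sum_{(\sigma_L,\sigma_R)\in\pi_{i,j}}\mathbf{Y}_{\sigma_L,\sigma_R}$ (the paper packages this as Claim~\ref{claim:pair=0} together with the $\text{SDP}_{Proj}$ slack-moment equalities), the path enumeration of Claim~\ref{claim:path}, and the path constraint $\sum_{e\in P_{\sigma_L,\sigma_R}}z_e\ge 1$.

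The differences are purely organizational. The paper writes an \emph{exact} decomposition into seven labeled pieces, three of which it proves are zero matrices (your ``consistency identities'' applied one step later), one of which (\eqref{eqn:MM}) it proves PSD via two applications of Lemma~\ref{lem:simple_slack}, and the remainder handled as you do. You instead apply the identities up front to collapse the $E_L$/$E_R$ contributions directly to $\mathbf{Y}_{\sigma_L,\sigma_R}$, and for the middle-edge term you expand $(\sum z^M_{\sigma_L,\sigma_R})\mathbf{Y}$ via $\mathbf{Y}=\sum\mathbf{Y}_{\sigma'',\sigma''}$ and drop a PSD residual. This residual is exactly $\sum_{(\sigma_L,\sigma_R)}z^M_{\sigma_L,\sigma_R}(\mathbf{Y}-\mathbf{Y}_{\sigma_L,\sigma_R})$, i.e., the paper's term~\eqref{eqn:MM}; your expansion proves it PSD from the decomposition identity rather than from Lemma~\ref{lem:simple_slack}, which is a mildly different (and arguably more transparent) justification of the same inequality. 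A further cosmetic difference is that the paper places \emph{all} of $E_R^{j,l}$ in the on-path block (the edges with $\sigma_R\notin\operatorname{im}\pi_{i,j}$ then vanish by the identity), whereas you leave them in $T_1$; either choice works.
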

\begin{proof}
We first show how to decompose $M_r^\mathbf{z}(\mathbf{y}')$ as the sum of several simpler matrices. This will let us reason about each matrix differently based on the assigned values and their connection to the \textsc{Projection Games} constraints. We will then explain why each of these matrices is PSD. Observe that for each $(u,v)=(c_i^l,x_j^l)\in E_{outer}$, the set of stretch-$(2k-1)$ paths consist of the outer edge, or one of the paths that go through some labels $(\sigma_L,\sigma_R)$. It is not hard to see that any other path connecting such pairs has length larger than $(2k-1)$. More formally:
\begin{claim}\label{claim:path}
For every pair $(c_i^l,x_j^l)\in E_{Outer}$, the length $(2k-1)$ paths from $c_i^l$ to $x_j^l$ are:
\begin{itemize}
\item The path consisting of only the edge $(c_i^l,x_j^l)$.
\item The paths consisting of edges $\{(c_i^l,c_{i,\sigma_L})\}\cup E_M^{i,j,\sigma_L,\sigma_R}\cup\{(x_{j,\sigma_R},x_j^l)\}$ for some ${(\sigma_L,\sigma_R)\in\pi_{i,j}}$.
\end{itemize}
\end{claim}
We use this observation, and the fact that $y_e=0$ for all $e \in E_{Outer}$ to break the summation over $E$ in the definition of $M_r^\mathbf{z}(\mathbf{y}')$ into several pieces, which we will then argue are each PSD.  
\begin{align}
M_r^\mathbf{z}(\mathbf{y}')=&\left(\sum_{e\in E}z_ey_{I\cup J\cup\{e\}}'-y_{I\cup J}'\right)_{|I|,|J|\le r} =\sum_{e\in E}z_e\left(y_{\Phi(I\cup J)\cup\Phi(e)}^*\right)_{|I|,|J|\le r}-\left(y_{\Phi(I\cup J)}^*\right)_{|I|,|J|\le r}\nonumber\\
=&\sum_{\sigma_L\in\Sigma}z_{(c_i^l,c_{i,\sigma_L})}\left(y_{\Phi(I\cup J)\cup\{(c_i,\sigma_L)\}}^*\right)_{|I|,|J|\le r} +\sum_{(\sigma_L,\sigma_R)\in\pi_{i,j}}\sum_{e\in E_M^{i,j,\sigma_L,\sigma_R}}z_e\left(y_{\Phi(I\cup J)\cup\varnothing}^*\right)_{|I|,|J|\le r}\nonumber\\
&+\sum_{\sigma_R\in\Sigma}z_{(x_{j,\sigma_R},x_j^l)}\left(y_{\Phi(I\cup J)\cup\{(x_j,\sigma_R)\}}^*\right)_{|I|,|J|\le r} +\sum_{e\in E\setminus(E_L^{i,l}\cup E_M^{i,j}\cup E_R^{j,l})}z_e\left(y_{I\cup J\cup\{e\}}'\right)_{|I|,|J|\le r}\nonumber\\
&-\left(y_{\Phi(I\cup J)}^*\right)_{|I|,|J|\le r}\nonumber\\
=&\sum_{\sigma_L\in\Sigma}z_{(c_i^l,c_{i,\sigma_L})}\left(y_{\Phi(I\cup J)\cup\{(c_i,\sigma_L)\}}^*-\sum_{\sigma_R:(\sigma_L,\sigma_R)\in\pi_{i,j}}y_{\Phi(I\cup J)\cup\{(c_i,\sigma_L),(x_j,\sigma_R)\}}^*\right)_{|I|,|J|\le r}\label{eqn:ML}\\
&+\sum_{(\sigma_L,\sigma_R)\in\pi_{i,j}}\sum_{e\in E_M^{i,j,\sigma_L,\sigma_R}}z_e\left(y_{\Phi(I\cup J)}^*-y_{\Phi(I\cup J)\cup\{(c_i,\sigma_L),(x_j,\sigma_R)\}}^*\right)_{|I|,|J|\le r}\label{eqn:MM}\\
&+\sum_{\sigma_R\in\Sigma}z_{(x_{j,\sigma_R},x_j^l)}\left(y_{\Phi(I\cup J)\cup\{(x_j,\sigma_R)\}}^*-\sum_{\sigma_L:(\sigma_L,\sigma_R)\in\pi_{i,j}}y_{\Phi(I\cup J)\cup\{(c_i,\sigma_L),(x_j,\sigma_R)\}}^*\right)_{|I|,|J|\le r}\label{eqn:MR}\\
&+\sum_{e\in E\setminus(E_L^{i,l}\cup E_M^{i,j}\cup E_R^{j,l})}z_e\left(y_{I\cup J\cup\{e\}}'\right)_{|I|,|J|\le r}\label{eqn:Mrest}\\
&-\left(y_{\Phi(I\cup J)}^*-\sum_{(\sigma_L,\sigma_R)\in\pi_{i,j}}y^*_{\Phi(I\cup J)\cup\{(c_i,\sigma_L),(x_j,\sigma_R)\}}\right)_{|I|,|J|\le r}\label{eqn:Msum}\\
&+\sum_{(\sigma_L,\sigma_R)\in\pi_{i,j}}\left(z_{(c_i^l,c_{i,\sigma_L})}+\sum_{e\in E_M^{i,j,\sigma_L,\sigma_R}}z_e+z_{(x_{j,\sigma_R},x_j^l)}-1\right)\label{eqn:Zsum}\\
&\quad\times\left(y_{\Phi(I\cup J)\cup\{(c_i,\sigma_L),(x_j,\sigma_R)\}}^*\right)_{|I|,|J|\le r} \label{eqn:MLR}
\end{align}

Note that to prove the third equality we have subtracted and added the sum over $(\sigma_L,\sigma_R) \in \pi_{i,j}$, and then partitioned it over the other sums. Now we argue why each of the above matrices is PSD.

The matrix in \eqref{eqn:MM} is PSD by applying Lemma \ref{lem:simple_slack} twice.  The matrix in \eqref{eqn:Mrest} is PSD because it is a principal submatrix of $M_{r+1}(\mathbf{y}')$, which is positive semidefinite by Theorem \ref{thm:main-moment}.  The matrix in \eqref{eqn:MLR} is PSD because it equals $\left(y_{I\cup J\cup\{(c_i^l,c_{i,\sigma_L}),(x_{j,\sigma_R},x_j^l)\}}'\right)_{|I|,|J|\le r}$, and is also a principal submatrix of $M_{r+1}(\mathbf{y}')$.  Its coefficient in \eqref{eqn:Zsum} is non-negative because  $\sum_{e\in P}z_e\ge1$ for all path $P\in\mathcal{P}_{c_i^l,x_j^l}$ by the definition of $\mathbf{z}$.

We now argue that matrices in \eqref{eqn:ML}, \eqref{eqn:MR}, and \eqref{eqn:Msum} are all-zero matrices, which will complete the proof.   In order to show this, we need the following claim, which uses the fact that the fractional solution to the \textsc{Projection Games} instance satisfies all of the edges.   

\begin{claim}\label{claim:pair=0}
$y_{\Psi\cup\{(c_i,\sigma_L),(x_j,\sigma_R)\}}^*=0$ for all $\{c_i,x_j\}\in E_{Proj}$, $(\sigma_L,\sigma_R)\notin\pi_{i,j}$, and $|\Psi|\le2r$.
\end{claim}
\begin{proof}
By Lemma \ref{lem:projInstance}, we know that $\sum_{(\sigma_L,\sigma_R)\in\pi_{i,j}}y_{(c_i,\sigma_L),(x_j,\sigma_R)}^*=1$.
By the slack moment constraints in $\text{SDP}_{Proj}^r$, we know that $M_r^{c_i}(\mathbf{y}^*)$ and $M_r^{x_j}(\mathbf{y}^*)$ are both all-zero matrices. Hence,
\begin{align*}
&\sum_{\sigma_L\in\Sigma}\sum_{\sigma_R\in\Sigma}y_{(c_i,\sigma_L),(x_j,\sigma_R)}^* = \sum_{\sigma_R\in\Sigma}y_{(x_j,\sigma_R)}^* = 1 =\sum_{(\sigma_L,\sigma_R)\in\pi_{i,j}}y_{(c_i,\sigma_L),(x_j,\sigma_R)}^*
\end{align*}

In other words, since $y_{(c_i,\sigma_L),(x_j,\sigma_R)}^*\ge0$ for all $\sigma_L\in\Sigma$ and $\sigma_R\in\Sigma$, it follows for all $(\sigma_L,\sigma_R)\notin\pi_{i,j}$ that $y_{(c_i,\sigma_L),(x_j,\sigma_R)}^*=0$. Then Claim \ref{claim:=0} implies $y_{\Psi\cup\{(c_i,\sigma_L),(x_j,\sigma_R)\}}^*=0$.
\end{proof}

Next, we argue that all entries of the matrix in line \eqref{eqn:ML} are zero. For any entry with index $I$ and $J$ we have
\begin{align*}
y_{\Phi(I\cup J)\cup\{(c_i\sigma_L)\}}^*=&\sum_{\sigma_R\in\Sigma}y_{\Phi(I\cup J)\cup\{(c_i,\sigma_L),(x_j,\sigma_R)\}}^*\ = \sum_{\sigma_R:(\sigma_L,\sigma_R)\in\pi_{i,j}}y_{\Phi(I\cup J)\cup\{(c_i,\sigma_L),(x_j,\sigma_R)\}}^* = 0,
\end{align*}
where the last equality follows from Claim \ref{claim:pair=0}.

Similarly, we argue that matrix in line \eqref{eqn:Msum} is all-zero. For any entry with index $I$ and $J$,
\begin{align*}
y_{\Phi(I\cup J)}^*=&\sum_{\sigma_L\in\Sigma}y_{\Phi(I\cup J)\cup\{(c_i,\sigma_L)\}} = \sum_{\sigma_L\in\Sigma}\sum_{\sigma_R\in\Sigma}y_{\Phi(I\cup J)\cup\{(c_i,\sigma_L),(x_j,\sigma_R)\}}^* =\sum_{(\sigma_L,\sigma_R)\in\pi_{i,j}}y_{\Phi(I\cup J)\cup\{(c_i,\sigma_L),(x_j,\sigma_R)\}}^* = 0.
\end{align*}
Again, we have used Claim \ref{claim:pair=0} in the last equality.

Finally, an argument similar to what we used for \eqref{eqn:ML} implies that the matrix in line \eqref{eqn:MR} is also all-zero, proving the theorem. 
\end{proof}

\subsection{Integral Solutions}
In this section, we argue that any $(2k-1)$-spanner of the graph $G = (V, E)$ we constructed in Section~\ref{sec:dir_instance} needs to have many edges.  More precisely, we prove the following lemma.

\begin{lemma}\label{lem:directSound}
The optimal $(2k-1)$-spanner of $G$ has at least $nkK|\Sigma|\sqrt{K}$ edges.
\end{lemma}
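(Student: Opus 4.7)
The plan is to use the hardness of \textsc{Projection Games} in Lemma~\ref{lem:projInstance} as a contradiction, by turning any small spanner into a labeling that satisfies too many \textsc{Projection Games} edges. The starting point is Claim~\ref{claim:path}: every outer edge $(c_i^l,x_j^l)$ with $\{c_i,x_j\}\in E_{Proj}$ must either lie in $S_E$ directly, or be \emph{label-spanned} through a pair $(\sigma_L,\sigma_R)\in\pi_{i,j}$ whose two incident connection edges and whole middle path $E_M^{i,j,\sigma_L,\sigma_R}$ all lie in $S_E$.

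Given a spanner $S_E$, I would set $A_i^l=\{\sigma:(c_i^l,c_{i,\sigma})\in S_E\}$ and $B_j^l=\{\sigma:(x_{j,\sigma},x_j^l)\in S_E\}$. The key decoding is, for each copy $l$, to pick a random labeling $\alpha^l$ of $L\cup R$ by drawing $\alpha^l(c_i)$ uniformly from $A_i^l$ (arbitrarily if empty) and $\alpha^l(x_j)$ uniformly from $B_j^l$. Any label-spanned outer edge $(c_i^l,x_j^l)$ corresponds to a \textsc{Projection Games} edge that $\alpha^l$ satisfies with probability at least $1/(|A_i^l||B_j^l|)$, while Lemma~\ref{lem:projInstance} caps the expected number of satisfied edges by $C=O(m\ln n/\varepsilon)$. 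This yields the per-copy inequality
\[\sum_{(i,j):\,(c_i^l,x_j^l)\text{ label-spanned}}\frac{1}{|A_i^l||B_j^l|}\le C.\]

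With this in hand I would case-split on $T_O:=|S_E\cap E_{Outer}|$. If $T_O\ge nkK|\Sigma|\sqrt{K}$ we are already done, so assume otherwise; then, since $mK^{1/2}\gg n$, the total number of label-spanned outer edges across the $kK|\Sigma|$ copies is $\Theta(mkK^2|\Sigma|)$. Applying Cauchy--Schwarz to the per-copy inequality (first inside each copy, bounding the square of the count of label-spanned edges by $C$ times $\sum_{(i,j)\in E_{Proj}}|A_i^l||B_j^l|$, then across copies using $\sum_l|A_i^l|^2\le|\Sigma|\sum_l|A_i^l|$) together with the degree bounds $K$ on $L$ and $2Kn^\varepsilon$ on $R$ from Lemma~\ref{lem:projInstance} should lower-bound the total connection edges $X+Y:=|S_E\cap E_L|+|S_E\cap E_R|$. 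Plugging in $m=n^{1+\varepsilon}$, $K=n^{(1-\varepsilon)/5}-1$, and $|\Sigma|=n^{(3-3\varepsilon)/5}$ then gives $X+Y\ge nkK|\Sigma|\sqrt{K}$, the desired contradiction.

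The main obstacle is getting the $\sqrt{K}$ exponent tight: a naive chaining of Cauchy--Schwarz loses a polynomial factor for small $\varepsilon$. Closing this gap should require a threshold-based case split on whether $|A_i^l|$ is above or below roughly $\sqrt{K}$, separating the ``high-degree'' connection edges (which already contribute many spanner edges directly) from the ``low-degree'' ones (where the projection-games bound is strongest), and exploiting the functional structure of $\pi_{i,j}$ (each left label has only $K$ compatible right neighbors) when bounding cross-copy sums of the form $\sum_{(i,j)\in E_{Proj}}\sqrt{\alpha_i\beta_j}$ with $\alpha_i=\sum_l|A_i^l|$ and $\beta_j=\sum_l|B_j^l|$. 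The shared middle edges $E_M$ can be handled separately by noting that each label-spanned path consumes $2k-3$ of them, which contributes additively and does not affect the asymptotic bound.
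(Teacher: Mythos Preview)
Your high-level strategy---random decoding of connection edges into a labeling and contradicting the soundness bound of Lemma~\ref{lem:projInstance}---is exactly right, and your per-copy inequality is the correct starting point. But the proof as written has a real gap: you acknowledge yourself that the cross-copy Cauchy--Schwarz chain loses a polynomial factor, and the threshold-based fix you sketch at the end is not carried out. As stated, the argument does not actually reach the $\sqrt{K}$ exponent.

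The paper closes this gap by avoiding the cross-copy summation entirely. First, Claim~\ref{claim:directOuter} replaces each outer edge in the spanner by a length-$(2k-1)$ path at a factor-$3$ cost, so one may assume $S'\cap E_{Outer}=\varnothing$; this is strictly cleaner than your case-split on $T_O$. Second---and this is the step you are missing---rather than aggregating over all $kK|\Sigma|$ copies, a simple averaging argument picks a \emph{single} index $l$ for which both $|S'\cap E_L^l|$ and $|S'\cap E_R^l|$ are below $3n\sqrt{K}$. On that one copy the multi-labeling $\Psi$ satisfies \emph{every} edge of $E_{Proj}$, so there is no need for Cauchy--Schwarz at all. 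The threshold step is then done on vertices, not edges: discard the (few) left vertices with more than $\Theta(\sqrt{K}/n^\varepsilon)$ labels and the (few) right vertices with more than $\Theta(\sqrt{K})$ labels; the degree bounds from Lemma~\ref{lem:projInstance} ensure that a constant fraction of the $mK$ edges remain satisfiable. Random decoding on the survivors then satisfies each remaining edge with probability $\Omega(n^\varepsilon/K)$, yielding $\Omega(mn^\varepsilon)=\omega(m\ln n/\varepsilon)$ satisfied edges in expectation, the desired contradiction. Reducing to a single copy is what makes the thresholds land exactly at $\sqrt{K}$ and sidesteps the cross-copy bookkeeping you were struggling with.
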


We argue that the size of the optimal spanner must be at least $nkK|\Sigma|\sqrt{K}$, otherwise the \textsc{Projection Games} instance has a solution in which $\omega(\frac{m\ln n}{\varepsilon})$ edges are satisfied. This contradicts Lemma~\ref{lem:projInstance}. We first use the following claim, for which the proof can be found in Appendix \ref{app:directOuter}.
\begin{claim}\label{claim:directOuter}
Any $(2k-1)$-spanner $S$ of $G$ can be transformed to another $(2k-1)$-spanner $S'$ such that $S'\cap E_{Outer}=\varnothing$ and $|S'|\le3|S|$.
\end{claim}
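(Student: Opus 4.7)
The plan is to exploit the strong directionality of $G$: every edge is oriented from $L_{Dups}\cup L_{Labels}$ through $M_{Paths}$ toward $R_{Labels}\cup R_{Dups}$, and outer edges are the only direct shortcut from $L_{Dups}$ to $R_{Dups}$. Because of this orientation, outer edges can essentially only be used to span outer-edge demands, and the middle paths must be bought by every spanner. So $S'$ can be obtained from $S$ by deleting all outer edges and gluing in only two short ``connector'' edges per deletion.

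The proof would begin by establishing two structural facts. \textbf{Fact 1:} every $(2k-1)$-spanner $S$ contains all of $E_M$. Indeed, for $1\le t\le 2k-5$ the vertex $w_{i,j,\sigma_L,\sigma_R,t}$ has out-degree exactly $1$ in $G$, so the edge leaving it must be in $S$; the first middle vertex $w_{i,j,\sigma_L,\sigma_R,1}$ has in-degree $1$ (only $(c_{i,\sigma_L},w_{i,j,\sigma_L,\sigma_R,1})$ enters it), forcing that edge into $S$; and the final edge $(w_{i,j,\sigma_L,\sigma_R,2k-4},x_{j,\sigma_R})$ is forced because $w_{i,j,\sigma_L,\sigma_R,2k-4}$ has out-degree $1$. \textbf{Fact 2:} for every edge $(u,v)\in E\setminus E_{Outer}$, no directed $u\to v$ path in $G$ uses any outer edge. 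I would verify this layer by layer: since $R_{Dups}$ only reaches $R_{Labels}$ via $E_R$ and $R_{Labels}$ only reaches itself via $E_{RStars}$, any path traversing an outer edge is trapped on the $R$-side; for $(u,v)\in E_L\cup E_{LStars}$ the target $v$ lies on the $L$-side and is thus unreachable after an outer hop, for $(u,v)\in E_M$ the edge is already in $S$ by Fact 1, and for $(u,v)\in E_R\cup E_{RStars}$ the source $u$ lies in $R_{Dups}\cup R_{Labels}$ which has no outgoing outer edges at all.

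Given these facts, define $S'$ as follows: for each $(c_i^l,x_j^l)\in S\cap E_{Outer}$, pick an arbitrary $(\sigma_L,\sigma_R)\in\pi_{i,j}$ (which exists by Lemma \ref{lem:projInstance}) and add the two edges $(c_i^l,c_{i,\sigma_L})$ and $(x_{j,\sigma_R},x_j^l)$; then set $S'=\bigl(S\setminus E_{Outer}\bigr)\cup\{\text{all added edges}\}$, so $S'\cap E_{Outer}=\varnothing$. To check that $S'$ is a $(2k-1)$-spanner: for a deleted outer edge $(c_i^l,x_j^l)$, the detour $c_i^l\to c_{i,\sigma_L}\to w_{i,j,\sigma_L,\sigma_R,1}\to\cdots\to w_{i,j,\sigma_L,\sigma_R,2k-4}\to x_{j,\sigma_R}\to x_j^l$ has length exactly $2k-1$ and lies in $S'$, since the middle path is in $E_M\subseteq S'$ by Fact 1 and $E_M\cap E_{Outer}=\varnothing$; and for any $(u,v)\in E\setminus E_{Outer}$, the $(2k-1)$-path witnessing $d_S(u,v)\le 2k-1$ avoids outer edges by Fact 2, so it is entirely contained in $S\setminus E_{Outer}\subseteq S'$. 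The edge count is $|S'|\le |S|-|S\cap E_{Outer}|+2|S\cap E_{Outer}|\le 2|S|\le 3|S|$.

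The main obstacle is Fact 2 for edges whose target is on the $R$-side (namely $E_R$, $E_{RStars}$, and the last edge of each middle path), where a priori a long excursion like $c_{i,\sigma_L}\to c_i^l\to x_j^l\to x_{j,\sigma_R}\to w_{i,j,\sigma_L,\sigma_R,2k-4}\to\cdots$ could nominally fit within $2k-1$ hops. The way around this is to combine Fact 1, which removes the problematic $E_M$ cases outright by placing those edges in $S$ directly, with the observation that the vertices $x_i^l$ and $x_{i,\sigma}$ have no outgoing outer edges, so any alternative path for an $E_R$ or $E_{RStars}$ edge is stuck on the $R$-side and cannot use $E_{Outer}$ at all.
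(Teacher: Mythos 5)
Your argument is essentially the paper's: force the middle edges into $S$, replace each bought outer edge by its canonical detour, and observe that non-outer edges never needed outer edges to be spanned. Your Fact~2 is a nice explicit justification of a step the paper leaves implicit, and everything you write is correct \emph{for $k\ge 3$}.

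The gap is the case $k=2$ (directed $3$-spanner), which the construction explicitly covers: there is no vertex set $M_{Paths}$, and $E_M$ consists of single edges $(c_{i,\sigma_L},x_{j,\sigma_R})$. Your Fact~1 rests entirely on the degree-one middle vertices $w_{i,j,\sigma_L,\sigma_R,t}$, which do not exist when $k=2$, and indeed an $E_M$ edge need not lie in every $3$-spanner there: $(c_{i,\sigma_L},x_{j,\sigma_R})$ can be spanned by, e.g., $c_{i,\sigma_L}\to c_{i,1}\to x_{j,\sigma'_R}\to x_{j,1}$-type detours through $E_{LStars}$, $E_M$, and $E_{RStars}$ when the relevant label pairs lie in $\pi_{i,j}$. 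Consequently, for $k=2$ your $S'$ may fail to contain the middle edge of the detour replacing a deleted outer edge, and $S'$ would not span that outer edge. The fix is exactly what the claim's constant $3$ is budgeting for: when replacing $(c_i^l,x_j^l)$ you must add up to three edges --- one in $E_L$, one in $E_R$, and (if $k=2$) the $E_M$ edge $(c_{i,\sigma_L},x_{j,\sigma_R})$ --- giving $|S'|\le 3|S|$ rather than the $2|S|$ you derive. With that case added, your proof is complete and matches the paper's.
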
 

\paragraph{Proof of Lemma \ref{lem:directSound}:}
If the optimal solution of the $(2k-1)$-spanner instance is less than $nkK|\Sigma|\sqrt{K}$, then by Claim \ref{claim:directOuter} there is a solution $S'$ with less than $3nkK|\Sigma|\sqrt{K}$ edges that does not use any edge from $E_{Outer}$.  Then by a simple averaging argument, there must exist some $l\in[kK|\Sigma|]$ such that $S'\cap E_L^l$ and $S'\cap E_R^l$ both have size less than $3n\sqrt{K}$.  Since each vertex in $E_L\cup E_R$ corresponds to a pair in $(L\cup R)\times\Sigma$, the set $S'\cap(E_L^l\cup E_R^l)$ corresponds to a label assignment $\Psi=\Phi(S'\cap(E_L^l\cup E_R^l))$, where each vertex may be assigned multiple labels in $\Psi$.  It is easy to see that $\Psi$ satisfies all the edges in $E_{Proj}$. This is because for each edge $\{c_i,x_j\}\in E_{Proj}$, there is an outer edge $(c_i^l,x_j^l)$, and so there is a length $(2k-1)$ path in $S'$ from $c_i^l$ to $x_j^l$. From Claim \ref{claim:path} we know that this corresponds to a label $\sigma_L$ on $c_i$ and a label $\sigma_R$ on $x_j$, and this satisfies edge $(c_i,x_j)$.

Now we will define another assignment $\Psi'$ that satisfies fewer edges, but where each vertex in $L$ has at most $\frac{12\sqrt{K}}{n^\varepsilon}$ labels, and each vertex in $R$ has at most $24\sqrt{K}$ labels.  We just ignore the vertices in $L$ that have degree larger than $\frac{12\sqrt{K}}{n^\varepsilon}$ labels, that is, we let their corresponding edges remain unsatisfied. There are at most $\frac{3n\sqrt{K}}{\frac{12\sqrt{K}}{n^\varepsilon}}=\frac{n^{1+\varepsilon}}{4}$ such vertices. We also ignore the vertices in $R$ which have larger than $24\sqrt{K}$ labels, and similarly there are at most $\frac{3n\sqrt{K}}{24\sqrt{K}}=\frac{n}{8}$ of them. The number of edges that are still satisfied by $\Psi'$ is at least $|L|K-\frac{n^{1+\varepsilon}}{4}\cdot K-\frac{n}{8}\cdot 2Kn^\varepsilon=\frac{mK}{2}$, since the vertices in $L$ have degree $K$ and the vertices in $R$ have degree at most $2Kn^\varepsilon$ (by Lemma \ref{lem:projInstance}).

The final step is to find an assignment that satisfies $\omega(\frac{m\ln n}{\varepsilon})$ edges, where each vertex has only one label. This can be achieved by a probabilistic argument. If we randomly choose a label from the existing labels for each vertex in $\Psi'$, then the probability that each edge is still satisfied is $\frac{n^\varepsilon}{12\sqrt{K}}\cdot\frac{1}{24\sqrt{K}}=\frac{n^\varepsilon}{288K}$. Thus in expectation there will be $\frac{n^\varepsilon}{288K}\cdot\frac{mK}{2}=\frac{mn^\varepsilon}{576}=\omega(\frac{m\ln n}{\varepsilon})$ edges satisfied. Which means there must exist one assignment that satisfies $\omega(\frac{m\ln n}{\varepsilon})$ edges. This contradicts  Lemma \ref{lem:projInstance}, finishing the proof.
\qed

\subsection{Proof of Theorem~\ref{thm:directSpanner}}
From Lemma \ref{lem:directObj} we know the solution $\mathbf{y}'$ has objective value $O(V)$. From Theorem \ref{thm:main-moment}, \ref{thm:slack_ye=1}, \ref{thm:slack:E_LE_R}, and Lemma \ref{lem:projInstance} we know that $\mathbf{y}'$ is a feasible solution to the $|V|^{\Omega(\varepsilon)}$-th level Lasserre SDP $SDP_{Spanner}^r$. From Lemma \ref{lem:directSound} we know that the optimal solution is at least $nkK|\Sigma|\sqrt{K}$. Therefore, the inegrality gap is $\frac{nkK|\Sigma|\sqrt{K}}{mkK|\Sigma|}=\left(\frac{|V|}{k}\right)^{\frac{1}{18}-\Theta(\varepsilon)}$.

\section{Lasserre Integrality Gap for Undirected $(2k-1)$-Spanner}

In order to extend these techniques to the undirected case, we need to make a number of changes.  First, for technical reasons we need to replace the middle path $E_M$ with edges, and instead add ``outside" paths (as was done in~\cite{DZ16}).  More importantly, though, we have the same fundamental problem that always arises when moving from directed to undirected spanners: without directions on edges, there can be many more short cycles (and thus ways of spanning edges) in the resulting graph.  In particular, if we directly change every edge on the integrality gap instance of \textsc{Directed $(2k-1)$-Spanner} in the previous section to be undirected, then Claim \ref{claim:path} no longer holds.  There can be other ways of spanning outer edges, e.g., it might be possible to span an outer edges using only other outer edges.  
Claim \ref{claim:directOuter} is affected for the same reason.

This difficulty is fundamentally caused because the graph of the \textsc{Projection Games} instance from~\cite{CMMV17} that we use as our starting point might have short cycles.  These turn into short cycles of outer edges in our spanner instance.  In order to get around this, we first carefully do subsampling and pruning to remove a select subset of edges in $E_{Proj}$, causing the remaining graph to have  large girth (at least $2k+2$) but without losing too much of its density or any of the other properties that we need. This is similar to what was done by~\cite{DZ16} to prove an integrality gap for the base LP, but here we are forced to start with the instance of~\cite{CMMV17}, which has far more complicated structure than the random \textsc{Unique Games} instances used by~\cite{DZ16}.  This is the main technical difficulty, but once we overcome it we can use the same ideas as in Section~\ref{sec:directed} to prove Theorem \ref{thm:undirectSpanner}.  Details can be found in Appendix \ref{app:undirected}.

\bibliographystyle{plain}
\bibliography{refs}

\appendix

\section{Proofs about Properties of the Lasserre Hierarchy} \label{app:lasserre}

\begin{proof}[Proof of Claim~\ref{claim:=1}]
Consider the determinant of the following principal submatrix of $M_r(y)$:
\[\begin{vmatrix}
y_\varnothing&y_I&y_J\\
y_I&y_I&y_{I\cup J}\\
y_J&y_{J\cup I}&y_J
\end{vmatrix}=-(y_{I\cup J}-y_I)^2\ge0\]
Thus $y_{I\cup J}=y_J$.
\end{proof}

\begin{proof}[Proof of Claim~\ref{claim:=0}]
Consider the determinant of principal submatrix of $M_t(y)$ indexed by sets $I$ and $J$:
\[\begin{vmatrix}
y_I&y_{I\cup J}\\
y_{J\cup I}&y_J
\end{vmatrix}=-y_{I\cup J}^2\ge0\]
Thus $y_{I\cup J}=0$.
\end{proof}

\begin{proof}[Proof of Lemma~\ref{lem:simple_slack}]
This follows from the fact that $L_r(K) \subseteq N^+_r(L_{r-1}(K))$, where $N^+_r$ is the SDP variant of Lovasz-Schrijver hierarchy. For a formal proof see \cite{laurent2003}.
\end{proof}

\section{Proofs from Section~\ref{sec:projection}}

\subsection{Equivalence of SDPs}\label{app:projEquiv}

The level-$r$ Lasserre SDP for \textsc{Projection Games} as stated in \cite{CMMV17} is written in terms of inner products of vector variables.  For every $\Psi\subseteq(L\cup R)\times\Sigma$, let $U_\Psi$ be a vector in an appropriately high-dimensional vector space. 
The constraints will force $U_\Psi=0$ if $\Psi$ assigns two different labels to the same vertex. If there exists $(v,\sigma),(v,\sigma')\in\Psi$ and $\sigma\ne\sigma'$, we say $\Psi$ is \textit{inconsistent}.  The following SDP is the one analyzed by~\cite{CMMV17}.
\[\begin{array}{rll}
&\hspace{-3em}\text{SDP}_{CMMV}^r:&\\
\max&\sum\limits_{(v_L,v_R)\in E,(\sigma_L,\sigma_r)\in\pi_{(u,v)}}\|U_{(v_L,\sigma_L),(v_R,\sigma_R)}\|^2\hspace{-7em}&\\
s.t.&\|U_\varnothing\|^2=1&\\
&\langle U_{\Psi_1},U_{\Psi_2}\rangle=0&\forall|\Psi_1|,|\Psi_2|\le r,\Psi_1\cup\Psi_2\mbox{ is inconsistent}\\
&\langle U_{\Psi_1},U_{\Psi_2}\rangle\ge0&\forall|\Psi_1|,|\Psi_2|\le r\\
&\langle U_{\Psi_1},U_{\Psi_2}\rangle=\langle U_{\Psi_3},U_{\Psi_4}\rangle&\forall|\Psi_1|,|\Psi_2|,|\Psi_3|,|\Psi_4|\le r,\Psi_1\cup\Psi_2=\Psi_3\cup\Psi_4\\
&\sum\limits_{\sigma\in\Sigma} \|U_{(v,\sigma)}\|^2=1 & \forall v\in V\\
\end{array}\]

It can by shown that $\text{SDP}_{CMMV}^r$ is equivalent to our formulation $\text{SDP}_{Proj}^r$.  Intuitively, if $U_\Psi$ is a solution to $\text{SDP}_{CMMV}^r$, then $y_\Psi=\|U_\Psi\|^2$ is a solution to $\text{SDP}_{Proj}^r$ with the same objective value. Also, if $y_\Psi$ is a solution to $\text{SDP}_{Proj}^r$, then consider the Cholesky decomposition $M_r(y)=LL^T$. The rows of $L$ can be the solution $U_\Psi$ to $\text{SDP}_{CMMV}^r$ with the same objective value.

We will only prove the first direction, because in our paper we just need all the constraints in $\text{SDP}_{Proj}^r$ to be satisfied. We start with the following claim:


\begin{claim}\label{claim:projConsist}
Let $\{U_\Psi\mid\Psi\subseteq(L\cup R)\times\Sigma\}$ be a solution that satisfies all the constraints in $\text{SDP}_{CMMV}^r$. Then for all $v\in V$, $\sum\limits_{\sigma\in\Sigma}U_{(v,\sigma)}=U_\varnothing$.
\end{claim}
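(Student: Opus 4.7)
The plan is to prove the vector identity $\sum_\sigma U_{(v,\sigma)} = U_\varnothing$ by showing that the difference vector $D_v := \sum_{\sigma\in\Sigma} U_{(v,\sigma)} - U_\varnothing$ has zero norm. Since an inner-product space has no nonzero vectors of zero length, this suffices.

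To compute $\|D_v\|^2 = \langle D_v, D_v\rangle$, I would expand the inner product bilinearly into three sums: a double sum $\sum_{\sigma_1,\sigma_2}\langle U_{(v,\sigma_1)}, U_{(v,\sigma_2)}\rangle$, a cross term $-2\sum_\sigma \langle U_{(v,\sigma)}, U_\varnothing\rangle$, and $\|U_\varnothing\|^2$. For the double sum, I would split into the diagonal $\sigma_1=\sigma_2$ and off-diagonal pieces: when $\sigma_1\neq\sigma_2$, the set $\{(v,\sigma_1),(v,\sigma_2)\}$ is inconsistent and the orthogonality constraint of $\text{SDP}_{CMMV}^r$ forces $\langle U_{(v,\sigma_1)}, U_{(v,\sigma_2)}\rangle=0$, so only the diagonal survives, giving $\sum_\sigma \|U_{(v,\sigma)}\|^2 = 1$ by the normalization constraint. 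For the cross term I would use the union-invariance constraint $\langle U_{\Psi_1},U_{\Psi_2}\rangle=\langle U_{\Psi_3},U_{\Psi_4}\rangle$ whenever $\Psi_1\cup\Psi_2=\Psi_3\cup\Psi_4$: applied to $\Psi_1=\{(v,\sigma)\}, \Psi_2=\varnothing$ versus $\Psi_3=\Psi_4=\{(v,\sigma)\}$, this gives $\langle U_{(v,\sigma)},U_\varnothing\rangle=\|U_{(v,\sigma)}\|^2$, so the cross sum is also $1$. Combined with $\|U_\varnothing\|^2=1$, the three contributions yield $\|D_v\|^2 = 1 - 2 + 1 = 0$.

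The argument is fully mechanical once the right decomposition is chosen, so I do not expect a real obstacle. The only step requiring a bit of attention is verifying that the union-invariance constraint genuinely applies to $\langle U_{(v,\sigma)},U_\varnothing\rangle$ at level $r\ge 1$, which is immediate since both $|\{(v,\sigma)\}|\le r$ and $|\varnothing|\le r$. With that in hand, the identity $\sum_\sigma U_{(v,\sigma)} = U_\varnothing$ follows directly and can then be used in the companion argument that $y_\Psi := \|U_\Psi\|^2$ (or more precisely the Gram entries $\langle U_{\Psi_1},U_{\Psi_2}\rangle$ grouped by $\Psi_1\cup\Psi_2$) satisfies the slack moment constraint $M_r^v(y)=\mathbf{0}$ of $\text{SDP}_{Proj}^r$, which is the ultimate purpose of the claim.
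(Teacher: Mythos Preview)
Your proposal is correct and follows essentially the same approach as the paper: both compute $\|\sum_\sigma U_{(v,\sigma)} - U_\varnothing\|^2$, use the orthogonality constraint to collapse the double sum to $\sum_\sigma \|U_{(v,\sigma)}\|^2$, use the union-invariance constraint to rewrite $\langle U_{(v,\sigma)},U_\varnothing\rangle$ as $\|U_{(v,\sigma)}\|^2$, and then invoke the normalization constraints to conclude the norm is zero. Your write-up is in fact slightly more explicit than the paper's about which SDP constraint justifies each step.
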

\begin{proof}
We will show that $\|\sum\limits_{\sigma\in\Sigma}U_{(v,\sigma)}-U_\varnothing\|^2=1-\sum\limits_{\sigma\in\Sigma}\|U_{(v,\sigma)}\|^2=0$.
\begin{align}
\|\sum_{\sigma\in\Sigma}U_{(v,\sigma)}-U_\varnothing\|^2&=\|\sum_{\sigma\in\Sigma}U_{(v,\sigma)}\|^2-2\langle \sum_{\sigma\in\Sigma}U_{(v,\sigma)}, U_\varnothing\rangle+\|U_\varnothing\|^2\\ 
&=\sum_{\sigma\in\Sigma} \|U_{(v,\sigma)}\|^2-2\sum_{\sigma\in\Sigma} \langle U_{(v,\sigma)}, U_\varnothing\rangle+\|U_\varnothing\|^2\\
&=\sum_{\sigma\in\Sigma} \|U_{(v,\sigma)}\|^2-2\sum_{\sigma\in\Sigma} \|U_{(v,\sigma)}\|^2+\|U_\varnothing\|^2\\
&=1-\sum_{\sigma\in\Sigma} \|U_{(v,\sigma)}\|^2
\end{align}
Note that in line (2) we used the fact that $\forall \sigma_1,\sigma_2\in\Sigma: \langle U_{(v,\sigma_1)}, U_{(v,\sigma_2)} \rangle=0$, and thus by the Pythagorean theorem we can move the sum out of the norm.
\end{proof}

This gives us the following lemma:

\begin{lemma} \label{lem:projConsist}
Let $\{U_\Psi\mid\Psi\subseteq(L\cup R)\times\Sigma\}$ be a solution that satisfies all the constraints in $\text{SDP}_{CMMV}^r$. Then for each $\Psi\subseteq (L\cup R)\times\Sigma$ where $|\Psi|\le r$ we have $\sum_{\sigma\in\Sigma}\|U_{\Psi\cup\{(v,\sigma)\}}\|^2=\|U_\Psi\|^2.$
\end{lemma}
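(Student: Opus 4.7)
The plan is to reduce this lemma directly to Claim~\ref{claim:projConsist}, whose conclusion $\sum_\sigma U_{(v,\sigma)} = U_\varnothing$ is a vector identity and therefore strictly stronger than the scalar identity we are trying to prove. The main tool will be the equality constraint of $\text{SDP}_{CMMV}^r$: whenever $\Psi_1\cup\Psi_2=\Psi_3\cup\Psi_4$, we have $\langle U_{\Psi_1},U_{\Psi_2}\rangle=\langle U_{\Psi_3},U_{\Psi_4}\rangle$. This lets us redistribute elements between the two arguments of an inner product freely as long as the union is preserved, which is exactly what we need to peel one label off at a time.

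First, I would use the equality constraint to rewrite $\|U_{\Psi\cup\{(v,\sigma)\}}\|^2=\langle U_{\Psi\cup\{(v,\sigma)\}},U_{\Psi\cup\{(v,\sigma)\}}\rangle$ as $\langle U_\Psi,U_{(v,\sigma)}\rangle$, since both pairs have union $\Psi\cup\{(v,\sigma)\}$. Summing over $\sigma\in\Sigma$ and pulling the sum inside the linear argument of the inner product yields $\sum_\sigma\|U_{\Psi\cup\{(v,\sigma)\}}\|^2=\langle U_\Psi,\sum_\sigma U_{(v,\sigma)}\rangle$. Substituting $\sum_\sigma U_{(v,\sigma)}=U_\varnothing$ via Claim~\ref{claim:projConsist} leaves $\langle U_\Psi,U_\varnothing\rangle$. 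One last application of the equality constraint (with $\Psi_1=\Psi,\Psi_2=\varnothing$ and $\Psi_3=\Psi_4=\Psi$, each union equal to $\Psi$) converts this into $\langle U_\Psi,U_\Psi\rangle=\|U_\Psi\|^2$, finishing the argument.

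I do not anticipate any serious mathematical obstacle: the proof is three symbolic manipulations glued together by the previous claim. The only thing to be careful about is an index-size bookkeeping issue: each use of the equality constraint requires all four subsets involved to have size at most $r$, so when $|\Psi|=r$ the set $\Psi\cup\{(v,\sigma)\}$ is formally out of range unless $(v,\sigma)$ was already in $\Psi$ (in which case the identity degenerates and holds trivially by the inconsistency constraint on the off-terms). This off-by-one can be handled either by interpreting the hypothesis as $|\Psi|\le r-1$, or by appealing to the standard Lasserre convention that level-$r$ feasibility furnishes vectors one level higher. Beyond this bookkeeping, the proof is entirely mechanical.
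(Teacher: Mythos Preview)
Your proposal is correct and follows essentially the same route as the paper: both rewrite $\|U_{\Psi\cup\{(v,\sigma)\}}\|^2$ as $\langle U_\Psi,U_{(v,\sigma)}\rangle$ via the equality constraint, pull the sum inside the inner product, and invoke Claim~\ref{claim:projConsist} to replace $\sum_\sigma U_{(v,\sigma)}$ by $U_\varnothing$. The only cosmetic difference is that the paper subtracts $\|U_\Psi\|^2=\langle U_\Psi,U_\varnothing\rangle$ first and then shows the difference vanishes, whereas you add it back at the end; and the index-size bookkeeping point you raise is one the paper itself silently glosses over.
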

\begin{proof}
\begin{align*}
\sum_{\sigma\in\Sigma}\|U_{\Psi\cup\{(v,\sigma)\}}\|^2-\|U_\Psi\|^2&=\sum_{\sigma\in\Sigma}\langle U_\Psi, U_{(v,\sigma)}\rangle-\langle U_\Psi , U_\varnothing\rangle \\
=\sum_{\sigma\in\Sigma} \langle U_\Psi, U_{(v,\sigma)}\rangle-\langle U_\Psi ,U_\varnothing \rangle  &= \langle U_\Psi,\sum_{\sigma\in\Sigma} U_{(v,\sigma)}\rangle-\langle U_\Psi , U_\varnothing\rangle \\
&=\langle U_\Psi , \sum_{\sigma\in\Sigma} U_{(v,\sigma)}-U_\varnothing\rangle=0
\end{align*}
The last equality is from Claim \ref{claim:projConsist}.
\end{proof}


Now we can prove the main lemma in this section:

\begin{lemma}\label{lem:projMoment}
Given a solution $\{U_\Psi\mid\Psi\subseteq(L\cup R)\times\Sigma\}$ to $\text{SDP}_{CMMV}^r$, then $\{y_\Psi=\|U_\Psi\|^2\mid\Psi\subseteq(L\cup R)\times\Sigma\}$ is feasible for $\text{SDP}_{Proj}^r$, and the objective values are the same.
\end{lemma}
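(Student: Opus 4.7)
The plan is to verify each constraint of $\text{SDP}_{Proj}^r$ in turn using properties of the CMMV solution $\{U_\Psi\}$. The normalization $y_\varnothing = \|U_\varnothing\|^2 = 1$ and the equality of objective values are both immediate from the definition $y_\Psi = \|U_\Psi\|^2$, since both objectives evaluate to $\sum_{(v_L,v_R)\in E,(\sigma_L,\sigma_R)\in\pi_{(v_L,v_R)}} \|U_{(v_L,\sigma_L),(v_R,\sigma_R)}\|^2$ term by term.

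The main content is the PSD-ness of $M_r(\mathbf{y})$, which I would establish by exhibiting it as a Gram matrix. The bridge is the consistency constraint of $\text{SDP}_{CMMV}^r$, which forces $\langle U_{\Psi_1}, U_{\Psi_2}\rangle$ to depend only on $\Psi_1 \cup \Psi_2$; together with the definition $y_\Psi = \|U_\Psi\|^2$, this gives $\langle U_{\Psi_1}, U_{\Psi_2}\rangle = y_{\Psi_1 \cup \Psi_2}$ for all $|\Psi_1|, |\Psi_2| \le r$. Hence $M_r(\mathbf{y}) = (\langle U_{\Psi_1}, U_{\Psi_2}\rangle)_{|\Psi_i|\le r}$ is manifestly the Gram matrix of $\{U_\Psi\}_{|\Psi|\le r}$ and therefore PSD.

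For the slack moment constraint $M_r^v(\mathbf{y}) = \mathbf{0}$, each entry is $\sum_{\sigma\in\Sigma} y_{\Psi_1 \cup \Psi_2 \cup \{(v,\sigma)\}} - y_{\Psi_1 \cup \Psi_2}$, which after substituting $y_\Psi = \|U_\Psi\|^2$ becomes exactly the quantity shown to vanish in Lemma~\ref{lem:projConsist}, applied with $\Psi := \Psi_1 \cup \Psi_2$. So every entry is zero, and the slack moment matrix indeed equals $\mathbf{0}$ as required.

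The hard part will be bookkeeping around the level of each constraint: the consistency identity and Lemma~\ref{lem:projConsist} are stated only for sets of size at most $r$, whereas the indices of $M_r(\mathbf{y})$ can combine to give unions of size up to $2r$ (and up to $2r+1$ in the slack case). I would address this by viewing $U_\Psi$ as defined for all $\Psi$ that appear in such a union (as is standard for the vector form of the Lasserre hierarchy) and, if needed, strengthening Claim~\ref{claim:projConsist} to the operator-level identity $\sum_{\sigma\in\Sigma} U_{\Psi \cup \{(v,\sigma)\}} = U_\Psi$, which follows from an almost identical squared-norm calculation. These extensions let both the Gram matrix argument and the slack-vanishing argument go through at the required sizes.
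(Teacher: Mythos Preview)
Your proposal is correct and follows essentially the same route as the paper: identify $M_r(\mathbf{y})$ with the Gram matrix $(\langle U_{\Psi_1},U_{\Psi_2}\rangle)$ via the consistency constraint, and kill each entry of $M_r^v(\mathbf{y})$ using Lemma~\ref{lem:projConsist}. Your extra paragraph on the level bookkeeping (sizes up to $2r$ or $2r+1$) is more careful than the paper, which simply asserts the Gram identity without comment.
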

\begin{proof}
The objective values are clearly the same by definition of $y$.

Because a symmetric matrix $M$ is positive semidefinite if and only if it can be written as inner products of $n$ vectors (i.e. there exists $U_1,\mathellipsis,U_n$ such that $M=\left(\langle U_i,U_j\rangle\right)_{i,j\in[n]}$), we know that the moment matrix $M_r(y)=\left(\langle U_{\Psi_1},U_{\Psi_2}\rangle\right)_{|\Psi_1|,|\Psi_2|\le r}$ is positive semidefinite.

For the slack moment matrix $M_r^v(y)$, consider any entry $\sum_{\sigma\in\Sigma}y_{\Psi_1\cup\Psi_2\cup\{(v,\sigma)\}}-y_{\Psi_1\cup\Psi_2}$ in it. From Lemma \ref{lem:projConsist} and the definition of $y$ we know that it equals to zero.
\end{proof}


\subsection{Instance in \cite{CMMV17}}\label{app:projInstance}

In this section we describe the instance of \textsc{Projection Games} used in~\cite{CMMV17} to prove their integrality gap. This instance is based on $k$-CSP instances of \cite{tulsiani2009}, but we will describe the \textsc{Projection Games} instance directly since we do not need to use any extra properties of the $k$-CSP instance. Note that the parameters are slightly different as in \cite{CMMV17} and \cite{manurangsi2015}, for example $D$ rather than $D-1$ and $\varepsilon$ rather than $\rho$.

Let $R=(x_1,\mathellipsis,x_n)$ and $L=(c_1,\mathellipsis,c_m)$ with $m=n^{1+\varepsilon}$, where $x$ represents variables and $c$ represents constraints.

Let $C\subseteq\mathbb{F}_q^{q-1}$ be a linear code of distance $q-D$, length $q-1$ and dimension $D$, where $q$ is a prime equals to $n^\frac{1-\varepsilon}{5}$ and $D=3$ (Reed-Solomon code achieve this). Here, $0<\varepsilon<1$ is a small constant.

Let $\Sigma=[q^D]$, then the alphabet size is $|\Sigma|=|C|=q^D$.

The edges and the projection are created randomly. Each vertex $c_i$ will choose $K=q-1$ neighbors $T_i=\{x_{i_1},\mathellipsis,x_{i_K}\}$ in $R$, and a \emph{shift} vector $b_i=(b_{i,1},\mathellipsis,b_{i,K})\in\mathbb{F}_q^K$ both uniformly at random.

The projection $\pi_{(c_i,x_{i_j})}=\pi_{i,i_j}$ is defined as 
\[\{(\sigma_L,\sigma_R)\mid(\sigma_R+b_{i,j})\mbox{ is the }j\mbox{-th coordinate of the }\sigma_L\text{-th code in }C\}\]

This is a projection from $\Sigma=[q^D]$ to $[q]$. Where each $\sigma\in[q]$ has $q^{D-1}$ preimages, each $\sigma\in\Sigma\setminus[q]$ has $0$ preimage.

In \cite{CMMV17} a feasible vector solution to $\text{SDP}_{CMMV}^r$ was proposed that has several properties that we need to use in our analysis. One the crucial property is that it is a perfect solution, meaning all projections can be satisfied. Roughly speaking, this property allows us to use this solution for Min-Rep, which is the minimization variant of this problem. More formally, 
\begin{lemma}[\cite{CMMV17}]\label{lem:projComplete}
For the above instance, with probability at least $1-o(1)$, there exists a feasible solution $U^*$ to the $r=N^{\Omega(\varepsilon)}$-th level $\text{SDP}_{CMMV}^r$, such that ${\sum\limits_{(\sigma_L,\sigma_R)\in\pi_{(c_i,x_j)}} \|U_{(c_i,\sigma_L),(x_j,\sigma_R)}\|^2=1}$ for all $\{c_i,x_j\}\in E_{Proj}$.\
\end{lemma}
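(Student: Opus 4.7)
The plan is to verify that the vector solution constructed in~\cite{CMMV17} (which is itself built on the CSP integrality gap of~\cite{tulsiani2009}) already has the desired ``perfection'' property, and that this property follows essentially by unpacking the definition of the projection $\pi_{i,j}$. Specifically, in~\cite{CMMV17} each vector $U^*_\Psi$ is defined so that $\|U^*_\Psi\|^2$ equals the probability that a carefully-designed local distribution produces a labeling consistent with $\Psi$. The relevant distribution is generated as follows: for a constraint $c_i$ with chosen neighborhood $T_i = \{x_{i_1},\ldots,x_{i_K}\}$ and shift $b_i$, one samples a uniformly random codeword $c \in C \subseteq \mathbb{F}_q^{q-1}$, assigns label $\sigma_L = c \in [q^D] = \Sigma$ to $c_i$, and assigns label $\sigma_R = c_j - b_{i,j}$ to each neighbor $x_{i_j}$. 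The analysis of~\cite{tulsiani2009,CMMV17} shows that this family of local distributions glues into a vector solution feasible for $n^{\Omega(\varepsilon)}$ levels of $\text{SDP}_{CMMV}^r$; we import that analysis directly.

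The first step is therefore to recall the construction of $U^*$ from~\cite{CMMV17,tulsiani2009} in enough detail to expose the codeword-based local distribution. With that in hand, the additional ``perfection'' claim reduces to a short computation. Fix an edge $\{c_i, x_{i_j}\} \in E_{Proj}$. Expanding by the definition of $U^*$,
\[
\sum_{(\sigma_L,\sigma_R)\in\pi_{i,j}}\|U^*_{(c_i,\sigma_L),(x_{i_j},\sigma_R)}\|^2
\;=\;\Pr_{c \sim \mathrm{Unif}(C)}\bigl[(c,\,c_j-b_{i,j}) \in \pi_{i,j}\bigr].
\]
By the definition of $\pi_{i,j}$, a pair $(\sigma_L,\sigma_R)$ belongs to $\pi_{i,j}$ if and only if $\sigma_R + b_{i,j}$ equals the $j$-th coordinate of the codeword indexed by $\sigma_L$. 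When $\sigma_L = c$ and $\sigma_R = c_j - b_{i,j}$, this condition becomes the tautology $c_j = c_j$, which always holds. Hence the probability above equals $1$, giving the perfection identity for every edge.

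The main (minor) obstacle is that the construction in~\cite{CMMV17} is not presented in precisely this language, so one must look inside their proof to extract the explicit form of the local distribution and confirm that it is supported on codewords. Once that is done, the perfection identity follows from the single-line computation above, and feasibility of $U^*$ for $n^{\Omega(\varepsilon)}$ levels of $\text{SDP}_{CMMV}^r$ is inherited directly from~\cite{CMMV17}. Combining this with Lemma~\ref{lem:projMoment} (which converts a feasible solution to $\text{SDP}_{CMMV}^r$ into a feasible solution to $\text{SDP}_{Proj}^r$ of the same objective) then yields the statement of Lemma~\ref{lem:projInstance}, item~(2), in the form actually used in the later sections.
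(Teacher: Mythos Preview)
The paper does not give its own proof of this lemma: it is stated with a citation to \cite{CMMV17} and treated as an imported black box (both feasibility of $U^*$ at level $n^{\Omega(\varepsilon)}$ \emph{and} the perfection identity $\sum_{(\sigma_L,\sigma_R)\in\pi_{i,j}}\|U^*_{(c_i,\sigma_L),(x_j,\sigma_R)}\|^2=1$). So there is nothing in the paper to compare your argument to line-by-line.

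That said, your proposal does more than the paper: you open up the \cite{CMMV17}/\cite{tulsiani2009} construction enough to explain \emph{why} perfection holds, namely that the local distribution on a constraint $c_i$ samples a codeword and sets the variable labels to its shifted coordinates, which by the very definition of $\pi_{i,j}$ always lands inside $\pi_{i,j}$. This is the right mechanism, and your one-line computation is correct. Your handling of feasibility (``import that analysis directly'') is exactly what the paper does as well. One small caveat: in the actual Tulsiani construction the local distributions are indexed by small sets of \emph{variables} (not by single constraints), and the assignment to $c_i$ is the induced satisfying assignment on $T_i$; your description ``sample a codeword for $c_i$'' is a slight simplification of this, but it does not affect the perfection calculation since the marginal on $T_i$ is indeed supported on (shifted) codewords. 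Overall your sketch is correct and strictly more informative than the paper's bare citation.
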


Directly from Lemma \ref{lem:projMoment}, we have:

\begin{corollary}\label{cor:projComplete}
For the above instance, with probability at least $1-o(1)$, there exist a feasible solution $\mathbf{y}^*$ for the $r=N^{\Omega(\varepsilon)}$-th level $\text{SDP}_{Proj}^r$, such that ${\sum\limits_{(\sigma_L,\sigma_R)\in\pi_{(c_i,x_j)}}y_{(c_i,\sigma_L),(x_j,\sigma_R)}=1}$ for all $\{c_i,x_j\}\in E_{Proj}$.
\end{corollary}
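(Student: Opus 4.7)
The plan is to use Lemma~\ref{lem:projMoment} essentially as a black box: it already converts any feasible solution of the vector formulation $\text{SDP}_{CMMV}^r$ into a feasible solution of the moment formulation $\text{SDP}_{Proj}^r$ with identical objective value, so the task reduces to transferring the ``satisfies every edge'' property of the \cite{CMMV17} solution across this conversion.

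More concretely, I would first invoke Lemma~\ref{lem:projComplete} to obtain, with probability $1 - o(1)$ over the random construction, a family of vectors $\{U^*_\Psi : \Psi \subseteq (L \cup R) \times \Sigma,\ |\Psi| \le r\}$ that is feasible for the $r = N^{\Omega(\varepsilon)}$-th level $\text{SDP}_{CMMV}^r$ and satisfies $\sum_{(\sigma_L,\sigma_R) \in \pi_{(c_i,x_j)}} \|U^*_{(c_i,\sigma_L),(x_j,\sigma_R)}\|^2 = 1$ for every edge $\{c_i, x_j\} \in E_{Proj}$. Next I would define $y^*_\Psi := \|U^*_\Psi\|^2$ for every $\Psi$ in the index set. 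Lemma~\ref{lem:projMoment} then directly certifies that $\mathbf{y}^*$ is feasible for $\text{SDP}_{Proj}^r$ (the moment matrix is PSD because it equals the Gram matrix of the $U^*_\Psi$, and the slack moment constraint at each vertex is exactly the statement proved in Lemma~\ref{lem:projConsist}).

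Finally, I would verify the edge property: for each $\{c_i, x_j\} \in E_{Proj}$,
\[
\sum_{(\sigma_L,\sigma_R) \in \pi_{(c_i,x_j)}} y^*_{(c_i,\sigma_L),(x_j,\sigma_R)} \;=\; \sum_{(\sigma_L,\sigma_R) \in \pi_{(c_i,x_j)}} \|U^*_{(c_i,\sigma_L),(x_j,\sigma_R)}\|^2 \;=\; 1,
\]
where the first equality is the definition of $y^*$ and the second is the conclusion of Lemma~\ref{lem:projComplete}. Since this holds simultaneously for all edges of $E_{Proj}$ on the same high-probability event on which Lemma~\ref{lem:projComplete} holds, the corollary follows.

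There is no real obstacle here since both ingredients are already available; the only thing to be mildly careful about is that the ``edges satisfied'' property is a per-edge equality rather than an expectation or union-bound statement, so it passes through the coordinate-wise transformation $y^*_\Psi = \|U^*_\Psi\|^2$ without any loss, and that we apply Lemmas~\ref{lem:projMoment} and \ref{lem:projComplete} at the same level $r$ so the resulting $\mathbf{y}^*$ really is feasible at level $r = N^{\Omega(\varepsilon)}$.
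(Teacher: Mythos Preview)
Your proposal is correct and is exactly the approach the paper takes: the corollary is stated as an immediate consequence of Lemma~\ref{lem:projMoment} applied to the vector solution $U^*$ from Lemma~\ref{lem:projComplete}, with the edge-satisfaction equalities carried over via $y^*_\Psi = \|U^*_\Psi\|^2$. There is nothing further to add.
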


\begin{lemma}[\cite{CMMV17}] \label{lem:projSound}
With probability at least $1-o(1)$, at most $O\left(\frac{n^{1+\varepsilon}\ln n}{\varepsilon}\right)$ edges can be satisfied.
\end{lemma}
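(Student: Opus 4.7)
}
The plan is a concentration-plus-union-bound argument. For a \emph{fixed} assignment $\alpha\colon L\cup R \to \Sigma$, I first identify the distribution of the satisfied-edge count $Z_\alpha$. The event that edge $(c_i,x_{i_j})$ is satisfied reduces to the single linear equation $\alpha(x_{i_j})+b_{i,j}=[\alpha(c_i)]_j$ in the shift $b_{i,j}$, which is uniform in $\mathbb{F}_q$ and independent across all $(i,j)$. Consequently, conditional on \emph{any} realization of the neighbor sets $\{T_i\}$, the $mK$ edge-satisfaction indicators are mutually independent Bernoullis with success probability $1/q$, and hence $Z_\alpha \sim \mathrm{Bin}(mK,1/q)$ with mean $\mu = mK/q \le m$.

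Next I apply the sharp Chernoff tail in its large-deviations form: for any $T \ge e\mu$, $\Pr[Z_\alpha \ge T] \le (e\mu/T)^T$. Choosing $T = Cm\ln n/\varepsilon$ for a sufficiently large absolute constant $C$, and using $\mu \le m$, the log of this tail is at most $-\Omega\bigl(\tfrac{m\ln n \ln\ln n}{\varepsilon}\bigr)$, since $T/\mu \ge \Omega(\ln n/\varepsilon)$ contributes an extra $\ln\ln n$ factor via $T\ln(T/\mu)$.

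I then take a union bound over all $|\Sigma|^{|L|+|R|} = q^{D(n+m)} = q^{3(n+m)}$ assignments. Using $\ln q = \tfrac{1-\varepsilon}{5}\ln n$ and $n \le m$, the logarithm of this cardinality is at most $\tfrac{6}{5}\,m\ln n$. Therefore the union bound leaves a total log-probability of at most $-\Omega\bigl(\tfrac{m\ln n \ln\ln n}{\varepsilon}\bigr) + O(m\ln n)$, which tends to $-\infty$ for any constant $\varepsilon \in (0,1)$. Hence $\Pr[\exists\,\alpha : Z_\alpha \ge T] = o(1)$; equivalently, with probability $1-o(1)$ every assignment satisfies at most $T = O(m\ln n/\varepsilon) = O(n^{1+\varepsilon}\ln n/\varepsilon)$ edges, which is exactly the claim.

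The only subtle point in this plan is arranging for the Chernoff exponent to dominate the union-bound contribution. This works precisely because the target bound $O(m\ln n/\varepsilon)$ is a factor $\Omega(\ln n/\varepsilon)$ above the mean $\mu \approx m$, so Chernoff contributes $T\ln(T/\mu)=\Theta(m\ln n\ln\ln n/\varepsilon)$ to the exponent, gaining the extra $\Omega(\ln\ln n/\varepsilon)$ slack needed to absorb the $O(m\ln n)$ union-bound penalty. All remaining ingredients---independence of the shift variables, the Binomial form of $Z_\alpha$, and the explicit values of $|\Sigma|$, $m$, $K$, $q$---are read off directly from the \textsc{Projection Games} construction recalled in Appendix~\ref{app:projInstance}, so no further structural facts about the instance are required.
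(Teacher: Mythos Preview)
Your argument is correct and is the standard concentration-plus-union-bound route to soundness for random label-cover/CSP instances. The paper itself does not prove this lemma; it simply cites \cite{CMMV17} (which in turn rests on Tulsiani's random-CSP analysis \cite{tulsiani2009}), so there is no in-paper proof to compare against. Your reconstruction---reduce each edge's satisfaction to a single $\mathbb{F}_q$-linear equation in the independent uniform shift $b_{i,j}$, conclude that $Z_\alpha$ is (dominated by) $\mathrm{Bin}(mK,1/q)$ with mean $\mu\le m$, apply the large-deviations Chernoff form, and union-bound over $|\Sigma|^{m+n}=q^{3(m+n)}$ assignments---is precisely the expected argument, and your accounting of the exponents is right: the extra $\ln\ln n$ from $T\ln(T/\mu)$ is what lets the tail beat the $O(m\ln n)$ union-bound cost.

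One small inaccuracy worth fixing in a write-up: $Z_\alpha$ is not literally $\mathrm{Bin}(mK,1/q)$, because an edge whose right endpoint is assigned a label in $\Sigma\setminus[q]$ has satisfaction probability $0$ rather than $1/q$. But this only means $Z_\alpha$ is stochastically dominated by $\mathrm{Bin}(mK,1/q)$, which is all you need for the upper-tail bound, so the proof goes through unchanged.
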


Besides the properties in \cite{CMMV17}, we need another property to be satisfied, and we will show that it holds with probability $1-o(1)$.

\begin{claim}\label{claim:projDegree}
With probability at least $1-o(1)$, the largest degree in $R$ is at most $2Kn^\varepsilon$.
\end{claim}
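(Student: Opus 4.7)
The plan is a standard concentration argument followed by a union bound. The construction of~\cite{CMMV17} draws the neighborhoods $T_1,\dots,T_m$ \emph{independently}, with each $T_i$ a uniformly random size-$K$ subset of $R$. Consequently, for any fixed $x_j \in R$ the indicator variables $X_{i,j} := \mathds{1}[x_j \in T_i]$, $i \in [m]$, are mutually independent Bernoulli variables with $\Pr[X_{i,j}=1] = K/n$. Writing the degree as $d(x_j) = \sum_{i=1}^m X_{i,j}$, we have
\[
\mathbb{E}[d(x_j)] \;=\; \frac{mK}{n} \;=\; \frac{n^{1+\varepsilon}K}{n} \;=\; Kn^\varepsilon,
\]
so the claimed bound $2Kn^\varepsilon$ is exactly twice the mean.

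I would then apply a standard Chernoff bound with deviation parameter $\delta = 1$ to control a single $x_j$:
\[
\Pr\!\bigl[\,d(x_j) \ge 2Kn^\varepsilon\,\bigr] \;\le\; \exp\!\left(-\tfrac{1}{3}Kn^\varepsilon\right).
\]
Substituting $K = n^{(1-\varepsilon)/5}-1$ gives $Kn^\varepsilon = \Theta\!\left(n^{(1+4\varepsilon)/5}\right)$, which is a polynomial in $n$ of positive degree. A union bound over the $n$ vertices of $R$ therefore yields
\[
\Pr\!\left[\,\max_{x_j \in R} d(x_j) \ge 2Kn^\varepsilon\,\right] \;\le\; n\cdot\exp\!\left(-\Omega\!\bigl(n^{(1+4\varepsilon)/5}\bigr)\right) \;=\; o(1),
\]
which is exactly the statement of the claim.

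There is essentially no obstacle here. Independence across the $T_i$'s holds directly by construction, and the marginal probability $\Pr[x_j \in T_i] = K/n$ is immediate for a uniformly random size-$K$ subset of an $n$-element set. The only point worth highlighting is that the union-bound loss of a factor of $n$ is absorbed easily because $Kn^\varepsilon$ is a positive polynomial power of $n$, so the tail $\exp(-\Omega(Kn^\varepsilon))$ dominates any polynomial factor. The same argument would go through unchanged if $T_i$ were sampled with replacement, since all that matters is independence across $i$ together with the correct marginal mean.
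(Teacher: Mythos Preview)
Your proposal is correct and follows essentially the same route as the paper's own proof: both compute $\mathbb{E}[d(x_j)]=Kn^\varepsilon$ from the marginal $\Pr[x_j\in T_i]=K/n$, apply the multiplicative Chernoff bound with $\delta=1$ to get a tail of $e^{-Kn^\varepsilon/3}$, and then union-bound over the $n$ vertices of $R$. Your write-up is in fact a bit more explicit than the paper's about the independence structure and the polynomial growth of $Kn^\varepsilon$, but the argument is the same.
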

\begin{proof}
For each vertex $v\in R$, the probability that each left vertex $u$ chooses the edge connecting $v$ is $\frac{K}{n}$, and these events are independent. Let $\Delta(v)$ be the degree of $v$, then $\mathbb{E}[\Delta(v)] = Kn^\varepsilon$. Using Chernoff bound, we have
\[\Pr[\Delta(v)>2Kn^\varepsilon] \le e^{-\frac{\mathbb{E}[\Delta(v)]}{3}}=e^{-\frac{Kn^\varepsilon}{3}}=o\left(\frac{1}{n^2}\right)\]

The claim will then follow using a union bound over all $n$ vertices in $R$. 
\end{proof}

Using union bound over Corollary \ref{cor:projComplete}, Lemma \ref{lem:projSound}, and Claim \ref{claim:projDegree}, we have proved Lemma \ref{lem:projInstance}.

\section{Proofs from Section~\ref{sec:directed}} \label{app:directed}

\subsection{Proofs from Section~\ref{sec:spanner-lp}} \label{app:spanner-lp}

To prove Theorem~\ref{thm:spanner-lp-equivalent}, we prove two claims, one for each direction.  

\begin{claim} \label{claim:limitedLP_dir1}
Given a feasible solution $\mathbf{x}^*$ to $\text{LP}_{Spanner}$, we can find $f_P^*$ for each $(u,v)\in E$ and $P\in\mathcal{P}_{u,v}$, which makes $(\mathbf{x}^*,\mathbf{f}^*)$ a feasible solution to $\text{LP}_{Spanner}^{Flow}$.
\end{claim}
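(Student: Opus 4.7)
The plan is to handle each pair $(u,v)\in E$ independently, since the variables $\{f_P\}_{P \in \mathcal{P}_{u,v}}$ do not interact across pairs and each capacity constraint in $\text{LP}_{Spanner}^{Flow}$ involves only a single pair. So I fix $(u,v) \in E$ and consider the auxiliary ``max-stretch-$k$-flow'' LP
\[
\max \sum_{P \in \mathcal{P}_{u,v}} f_P \quad \text{s.t.} \quad \sum_{P \in \mathcal{P}_{u,v} : e \in P} f_P \le x^*_e \ \forall e \in E, \quad f_P \ge 0.
\]
If its optimal value is at least $1$, I can take $\{f^*_P\}$ to be an optimizer (rescaled if needed so that $\sum_P f^*_P = 1$) and these flows, for all pairs $(u,v)$ together with $\mathbf{x}^*$, form a feasible solution to $\text{LP}_{Spanner}^{Flow}$.

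To prove the LP value is at least $1$, I would use strong LP duality. The primal is feasible ($f \equiv 0$) and bounded (capacities are finite), so the dual value is attained and equal. The dual is
\[
\min \sum_{e \in E} y_e x^*_e \quad \text{s.t.} \quad \sum_{e \in P} y_e \ge 1 \ \forall P \in \mathcal{P}_{u,v}, \quad y_e \ge 0.
\]
It is enough to show that every dual-feasible $\mathbf{y}$ satisfies $\sum_e y_e x^*_e \ge 1$.

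The key, and essentially the only, technical step is bridging the gap between the dual polytope (which has no upper bound $y_e \le 1$) and $\mathcal{Z}^{u,v}$ (which does). Given dual-feasible $\mathbf{y}$, I define $z_e := \min(y_e, 1)$, so $z_e \in [0,1]$. For any $P \in \mathcal{P}_{u,v}$: if some $e \in P$ has $y_e \ge 1$ then $z_e = 1$ and $\sum_{e \in P} z_e \ge 1$ trivially; otherwise $z_e = y_e$ on all of $P$, so $\sum_{e \in P} z_e = \sum_{e \in P} y_e \ge 1$. Hence $\mathbf{z} \in \mathcal{Z}^{u,v}$. Using $x^*_e \ge 0$ and $z_e \le y_e$,
\[
\sum_{e \in E} y_e x^*_e \ \ge \ \sum_{e \in E} z_e x^*_e \ \ge \ 1,
\]
where the last inequality is exactly the hypothesis that $\mathbf{x}^*$ is feasible for $\text{LP}_{Spanner}$. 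Since $\mathbf{y}$ was arbitrary, the dual value is at least $1$, completing the argument.

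I expect the truncation step to be the only nontrivial point; the rest is a routine application of LP duality to convert between cut-based and flow-based formulations.
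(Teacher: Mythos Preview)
Your proof is correct and follows essentially the same duality argument as the paper: for each $(u,v)$, show the max stretch-$k$ flow under capacities $x^*_e$ is at least $1$ by passing to the dual (fractional cut) LP and invoking feasibility of $\mathbf{x}^*$ in $\text{LP}_{Spanner}$. The one place you are actually more careful than the paper is the truncation step $z_e := \min(y_e,1)$: the paper's $\text{LP}_{Cut}^{u,v}$ only imposes $z_e \ge 0$, yet it appeals directly to the constraint $\sum_e z_e x^*_e \ge 1$ from $\text{LP}_{Spanner}$, which is stated only for $\mathbf{z}\in\mathcal{Z}^{u,v}\subseteq[0,1]^{|E|}$; your truncation argument is exactly what is needed to bridge that gap.
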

\begin{proof}
For each $(u,v) \in E$, let $\mathbf{z}^* \in \mathcal{Z}^{u,v}$ be the vector that minimizes $\sum\limits_{e\in E}x_e^*z_e$. In other words $\mathbf{z}^*$ is an optimal solution to the following LP:
\[\begin{array}{rll}
&\hspace{-3em}\text{LP}_{Cut}^{u,v}:&\\
\min&\sum\limits_{e\in E}x_e^*z_e &\\
s.t.&\sum\limits_{e\in P}z_e \ge1 &\forall P\in \mathcal{P}_{u,v}\\
&z_e\ge0&\forall e\in E\\
\end{array}\]
Since $\mathbf{x}^*$ is a feasible solution to $\text{LP}_{Spanner}$, we have that $\sum\limits_{e\in E}x_e^*z^*_e\ge1$. Now let us consider the dual of the above LP.
\[\begin{array}{rll}
&\hspace{-3em}\text{LP}_{Flow}^{u,v}:&\\
\max&\sum\limits_{p\in \mathcal{P}_{u,v}} f_p&\\
s.t.&\sum\limits_{P\in \mathcal{P}_{u,v}:e\in P}f_P\le x_e^*&\forall(u,v)\in E,\forall e\in E\\
&f_P\ge0&\forall(u,v)\in E, P\in \mathcal{P}_{u,v}
\end{array}\]
Let $\{f_P^*\mid P\in\mathcal{P}_{u,v}\}$ be the optimal solution for $\text{LP}_{Flow}^{u,v}$. We argue that if we combine all $f_P^*$ solutions accroding to every $(u,v)\in E$, then the solution $\mathbf{(x,f)}$ is also feasible for $\text{LP}_{Spanner}^{Flow}$. Note that by strong duality we have $max\sum\limits_{p\in \mathcal{P}_{u,v}} f_p = \min \sum\limits_{e\in E}x_e^*z_e \geq 1$. This means that both the flow and capacity constraints of $\text{LP}_{Spanner}^{Flow}$ are satisfied.
\end{proof}

\begin{claim}
Given a feasible solution $(\mathbf{x}^*,\mathbf{f}^*)$ to $\text{LP}_{Spanner}^{Flow}$, then $\mathbf{x}^*$ is a feasible solution to $\text{LP}_{Spanner}$.
\end{claim}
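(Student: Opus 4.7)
The plan is to establish feasibility directly by an edge-to-path exchange of summation, using the three types of constraints provided by $\text{LP}_{Spanner}^{Flow}$: nonnegativity of the path-flow variables $f_P^*$, the capacity constraint $\sum_{P \in \mathcal{P}_{u,v} : e \in P} f_P^* \le x_e^*$, and the demand constraint $\sum_{P \in \mathcal{P}_{u,v}} f_P^* \ge 1$. Since the only inequality constraints of $\text{LP}_{Spanner}$ on $\mathbf{x}^*$ (beyond $x_e \ge 0$, which is immediate) are the cut constraints $\sum_{e \in E} z_e x_e^* \ge 1$ for each $(u,v) \in E$ and each $\mathbf{z} \in \mathcal{Z}^{u,v}$, it suffices to fix an arbitrary such $(u,v)$ and $\mathbf{z}$ and verify the inequality.

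First I would fix $(u,v)\in E$ and $\mathbf{z}\in\mathcal{Z}^{u,v}$, and then chain together the three flow-LP constraints: use the capacity inequality together with $z_e \ge 0$ to lower bound $\sum_{e} z_e x_e^*$ by $\sum_{e} z_e \sum_{P \in \mathcal{P}_{u,v} : e \in P} f_P^*$; swap the order of summation to rewrite this as $\sum_{P \in \mathcal{P}_{u,v}} f_P^* \sum_{e \in P} z_e$; apply $\sum_{e \in P} z_e \ge 1$ (guaranteed by $\mathbf{z} \in \mathcal{Z}^{u,v}$) together with $f_P^* \ge 0$ to lower bound this by $\sum_{P \in \mathcal{P}_{u,v}} f_P^*$; and finally apply the flow demand constraint to conclude it is at least $1$. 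Concretely, the computation is
\[
\sum_{e \in E} z_e x_e^* \;\ge\; \sum_{e \in E} z_e \!\!\sum_{P\in\mathcal{P}_{u,v}:\, e\in P}\!\! f_P^* \;=\; \sum_{P\in\mathcal{P}_{u,v}} f_P^* \sum_{e\in P} z_e \;\ge\; \sum_{P\in\mathcal{P}_{u,v}} f_P^* \;\ge\; 1.
\]

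Nonnegativity $x_e^* \ge 0$ is inherited directly from $\text{LP}_{Spanner}^{Flow}$, so all constraints of $\text{LP}_{Spanner}$ are satisfied. There is no real obstacle here; the only subtlety worth flagging is the sign bookkeeping in the two sums, i.e.\ that $z_e \ge 0$ is needed to preserve direction of the first inequality when substituting the capacity bound edge-by-edge, and that $f_P^* \ge 0$ is needed to preserve direction when replacing $\sum_{e\in P} z_e$ by its lower bound $1$. Together with Claim~\ref{claim:limitedLP_dir1}, this completes the proof of Theorem~\ref{thm:spanner-lp-equivalent}.
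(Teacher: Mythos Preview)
Your proof is correct. The paper's own argument appeals to LP duality between the primal $\text{LP}_{Cut}^{u,v}$ and its dual $\text{LP}_{Flow}^{u,v}$ (set up in the preceding claim): since $\mathbf{f}^*$ restricted to $\mathcal P_{u,v}$ is a feasible flow of value at least $1$, duality gives $\min_{\mathbf{z}\in\mathcal{Z}^{u,v}} \sum_e x_e^* z_e \ge 1$. Your computation is exactly the weak-duality chain for this primal-dual pair written out explicitly, so the two arguments are essentially the same; your version is slightly more self-contained since it avoids naming the auxiliary LPs and invoking duality as a black box.
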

\begin{proof}
This follows from a very similar argument to the previous claim. This time we start with the $(\mathbf{x}^*,\mathbf{f}^*)$ solution to $\text{LP}_{Spanner}^{Flow}$, where part of the solution is also a feasible solution to each $\text{LP}_{Flow}^{u,v}$ defined in proof of Claim \ref{claim:limitedLP_dir1} for every $(u,v)\in E$. We use duality again and now note that by flow constraints of $\text{LP}_{Spanner}^{Flow}$ we get $1 \le\max\sum\limits_{p\in \mathcal{P}_{u,v}} f_p = \min \sum\limits_{e\in E}x_e^* z_e$.
\end{proof}

Note that as defined there are actually an infinite number of constraints (one for each vector in $\mathcal Z_{u,v}$, for each $(u,v) \in E$).  However, via the convexity of $\mathcal Z_{u,v}$, it is sufficient to only have constraints for the extreme points of the polytopes characterized by each $\mathcal{Z}^{u,v}$.  Thus $\text{LP}_{Spanner}$ is indeed an LP.

As a slight aside, it is worth noting that the ``antispanner" LP of~\cite{BBMRY13} is precisely $\text{LP}_{Spanner}$ but where the constraints are only for \emph{integral} $\mathbf{z} \in \mathcal Z_{u,v}$.  This immediately implies that their antispanner LP is no stronger than $\text{LP}_{Spanner}$, so our integrality gaps will also hold for lifts of the antispanner LP.

\subsection{Directed $(2k-1)$-Spanner Instance}

In Figure \ref{fig:spanner-figure} we present the instance described in Section \ref{sec:dir_instance}.
\begin{figure}
\begin{center}
\includegraphics[height=70mm, width=100mm,]{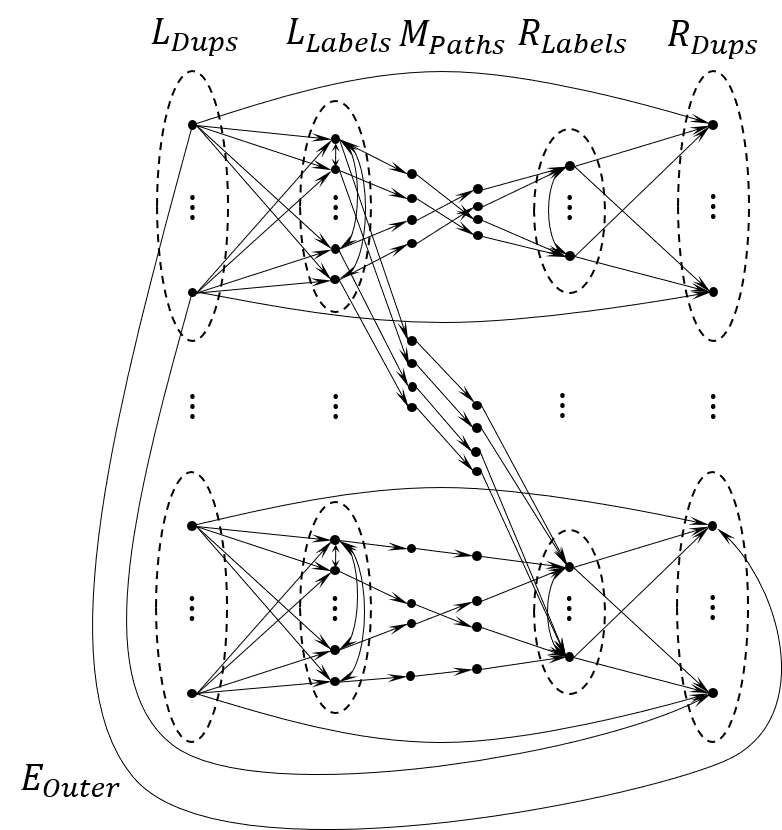}
\end{center}
\caption{\textsc{Directed $(2k-1)$-Spanner} instance}
\label{fig:spanner-figure}
\end{figure}

\subsection{Proof of Claim~\ref{lem:directObj}}\label{app:directSol}

\begin{proof}
First, we count the number of edges that are defined to be $y_e'=y_\varnothing^*=1$. Recall that these are the edges in $E_{LStars}\cup E_{M}\cup E_{RStars}$, of which there are
$|L|\cdot(2|\Sigma|-2)+(2k-3)\cdot|\Sigma|\cdot K|L|+|R|\cdot(2|\Sigma|-2)=O(|V|)$.

Now we compute the cost contributed by the other edges $e$ where $y_e' \neq 0$.  These are precisely the edges in $E_L$ and $E_R$, and so we have that

\[\sum_{e\in E_L}y_e'+\sum_{e\in E_R}y_e'=kK|\Sigma|\cdot\sum_{i\in|L|}\sum_{\sigma\in\Sigma}y_{(c_i,\sigma)}^*+kK|\Sigma|\cdot\sum_{i\in|R|}\sum_{\sigma\in\Sigma} y_{(x_i,\sigma)}^*=kK|\Sigma||L|+kK|\Sigma||R|=O(|V|).\]

Thus the total cost of $\mathbf{y'}$ is $O(|V|)$ as claimed.  
\end{proof}

\subsection{Proof of Claim \ref{claim:directOuter}}\label{app:directOuter}
\begin{claim}
Any $(2k-1)$-spanner $S$ of $G$ can be transformed to another $(2k-1)$-spanner $S'$ such that $S'\cap E_{Outer}=\varnothing$ and $|S'|\le3|S|$.
\end{claim}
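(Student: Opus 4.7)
The plan is to build $S'$ by deleting all outer edges from $S$ and then adding a small set $B$ of non-outer edges that restores a length-$(2k-1)$ path for every outer edge. Claim \ref{claim:path} will drive the whole construction: for an outer edge $(c_i^l, x_j^l)$, any spanner path must either be the edge itself or thread through a label pair $(\sigma_L, \sigma_R)\in\pi_{i,j}$, consisting of the connection edge $(c_i^l, c_{i,\sigma_L})$, the entire middle path $E_M^{i,j,\sigma_L,\sigma_R}$, and the connection edge $(x_{j,\sigma_R}, x_j^l)$. The key to keeping $|B|$ small will be to share middle paths across many values of $l$ by reusing middle edges that are already present in $S$ whenever possible.

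I will process each pair $\{c_i, x_j\}\in E_{Proj}$ separately, writing $A_{i,j}=\{l:(c_i^l,x_j^l)\in S\}$. If $|A_{i,j}|<kK|\Sigma|$, some outer edge $(c_i^{l'},x_j^{l'})$ with $l'\notin A_{i,j}$ lies outside $S$, so $S$ itself must already span it, meaning $E_M^{i,j,\sigma_L',\sigma_R'}\subseteq S$ together with the two corresponding connection edges for some $(\sigma_L',\sigma_R')\in\pi_{i,j}$. In this ``piggyback'' case I will add only the $2|A_{i,j}|$ connection edges $(c_i^l,c_{i,\sigma_L'})$ and $(x_{j,\sigma_R'},x_j^l)$ for $l\in A_{i,j}$. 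Otherwise $|A_{i,j}|=kK|\Sigma|$, and I will pick any $(\sigma_L,\sigma_R)\in\pi_{i,j}$ and commit to the full middle path $E_M^{i,j,\sigma_L,\sigma_R}$ plus all $2kK|\Sigma|$ connection edges, adding $(2k-3)+2|A_{i,j}|$ new edges.

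Checking that the resulting $S'$ is a $(2k-1)$-spanner will then be mostly routine. Each outer edge $(c_i^l,x_j^l)$ is spanned either by a label-threaded path supplied by $B$ (when $l\in A_{i,j}$) or by one already sitting inside $S\setminus E_{Outer}\subseteq S'$ (when $l\notin A_{i,j}$). For non-outer edges, the direction pattern on $E_L,E_R,E_M,E_{Outer}$ prevents any walk that takes an outer edge from ever returning to the $L$ side, so every length-$(2k-1)$ witness path in $S$ for a non-outer edge automatically avoids $E_{Outer}$ and survives in $S'$.

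The size bound is where the $3|S|$ constant gets used. Summing over pairs, $|S'|\le|S|-|A|+(2k-3)|P_1|+2|A|$, where $P_1$ is the set of pairs falling into the second case and $A=S\cap E_{Outer}$. Since each $(i,j)\in P_1$ contributes exactly $kK|\Sigma|$ outer edges to $A$, we have $|P_1|\le|A|/(kK|\Sigma|)\le|S|/(kK|\Sigma|)$, and for the parameters of Lemma \ref{lem:projInstance} with $n$ sufficiently large we have $kK|\Sigma|\ge 2k-3$, so $(2k-3)|P_1|\le|S|$ and hence $|S'|\le 3|S|$. The main subtlety will be justifying the piggyback case carefully: I must invoke Claim \ref{claim:path} on the single outer edge missing from $S$ to extract a $(\sigma_L',\sigma_R')$ whose middle path already lives in $S$, which is precisely what prevents the additive cost from blowing up to $\Theta(k\cdot|E_{Proj}|)$ and keeps the total within the $3|S|$ budget.
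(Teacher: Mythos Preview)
Your argument is correct, but it is more elaborate than necessary; the paper's proof exploits a structural shortcut you overlooked. The key observation in the paper is that for $k>2$ every edge of $E_M$ is \emph{forced} to lie in any $(2k-1)$-spanner $S$: each intermediate vertex $w_{i,j,\sigma_L,\sigma_R,t}$ has in-degree and out-degree exactly $1$, so each $E_M$ edge is the unique path between its endpoints and must itself be in $S$. Consequently, when you replace an outer edge $(c_i^l,x_j^l)$ by a label-threaded path, the entire middle segment $E_M^{i,j,\sigma_L,\sigma_R}$ is already present in $S$, and you only introduce at most two genuinely new edges (one in $E_L$ and one in $E_R$); for $k=2$ there is a single $E_M$ edge that may be new, giving at most three. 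This immediately yields $|S'|\le|S|+2|S\cap E_{Outer}|\le 3|S|$ with no case split on $|A_{i,j}|$, no ``piggyback'' argument, and no appeal to the numerical size of $kK|\Sigma|$. Your two-case analysis---using Claim~\ref{claim:path} on a missing outer edge to locate a reusable middle path---achieves the same bound and is perfectly valid, and it has the mild advantage of not relying on the degree-one structure of the $w$ vertices (so it would survive variants of the construction where $E_M$ is not automatically forced). But for the instance at hand the paper's one-line observation that $E_M\subseteq S$ makes the whole argument immediate.
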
 
\begin{proof} 
For $k>2$, every edge in $E_M$ must be in $S$, because there is no other way to span these edges.  For any outer edge $(u,v)\in E_{Outer} \cap S$, we can replace $(u,v)$ by an arbitrary path of length $(2k-1)$ from $u$ to $v$ that only uses one edge from $E_L$, one edge from $E_R$ and edges from $E_M$ (see, e.g., Claim \ref{claim:path}). Clearly each time we replace an outer edge in this way, we will introduce at most $3$ new edges into the solution: one in $E_L$, one in $E_R$, and at most one in $E_M$ (if $k=2$).  Therefore, this will at most add a factor of $3$ to the total size, and the new solution does not contain any edge in $E_{Outer}$.
\end{proof}

\section{Lasserre Integrality Gap for Undirected $(2k-1)$-Spanner} \label{app:undirected}
In this section we focus on the undirected $(2k-1)$-spanner. Before discussing the spanner instance, we need to make some modifications to the \textsc{Projection Games} instance to get a better bound. We discuss this modification and relevant properties of the instance in Section \ref{sec:modifiedProj}. Then we will describe the undirected spanner instance, which has basically the same structure proposed by \cite{DK11} and is slightly different from the directed $(2k-1)$-spanner instance.

\subsection{\textsc{Projection Games} Modifications for Undirected Spanner} \label{sec:modifiedProj}

We first perform some modifications to the \textsc{Projection Games} instance to increase the girth while keepings other desired properties in the feasible and optimal integral solutions.
For $(2k-1)$ spanner, we first change $K$ (i.e. the degree of the vertices in $L$) from $n^{\frac{1-\varepsilon}{5}}-1$ to $\min\left\{n^{\frac{1-\varepsilon}{5}}-1,n^\frac{1-\varepsilon}{2k-1}\right\}$, so that each vertex in $L$ will randomly choose fewer neighbors in $R$. This will let us bound the number of cycles in this instance. More formally,

\begin{claim}
With probability at least $1-o(1)$, the number of cycles of length at most $2k$ in the instance is at most $\frac{n^{1+\varepsilon}K}{2}$.
\end{claim}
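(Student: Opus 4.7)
The plan is a first-moment argument combined with Markov's inequality. Let $X$ denote the total number of cycles in the \textsc{Projection Games} bipartite graph of length at most $2k$, and decompose $X = \sum_{\ell=2}^k X_\ell$ where $X_\ell$ counts cycles of length exactly $2\ell$ (bipartiteness implies all cycles have even length, and the fact that no parallel edges are created means $\ell \geq 2$).

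First I would enumerate potential $2\ell$-cycles in the complete bipartite graph $K_{m,n}$ on the vertex sets $L,R$. A potential $2\ell$-cycle is specified by an ordered sequence of $\ell$ distinct vertices from each side, alternating sides; after accounting for $2\ell$ cyclic rotations and two traversal directions, the number of distinct potential cycles is at most $\tfrac{m^\ell n^\ell}{2\ell}$.

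Next I would compute the probability that a specific potential cycle appears. For the cycle to be realized, each of its $\ell$ left-endpoints $c_{i_p}$ must include both of its cycle-neighbors in its randomly chosen neighborhood $T_{i_p} \subseteq R$ of size $K$, an event of probability $\binom{n-2}{K-2}/\binom{n}{K} = K(K-1)/(n(n-1)) \leq K^2/n^2$. Because different $c_i$'s choose their neighborhoods $T_i$ independently, the probabilities multiply, giving
\[
\mathbb{E}[X_\ell] \;\leq\; \frac{m^\ell n^\ell}{2\ell} \cdot \left(\frac{K^2}{n^2}\right)^{\!\ell} \;=\; \frac{(mK^2/n)^\ell}{2\ell}.
\]

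Summing over $\ell \in \{2,\ldots,k\}$ with $m = n^{1+\varepsilon}$ and $K \leq n^{(1-\varepsilon)/(2k-1)}$, and invoking the identity $K^{2k-1} \leq n^{1-\varepsilon}$, then yields an explicit upper bound on $\mathbb{E}[X]$. Markov's inequality would then give $\Pr[X \geq mK/2] \leq 2\mathbb{E}[X]/(mK) = o(1)$ provided the calculation shows $\mathbb{E}[X] = o(mK)$. The main obstacle I anticipate is this final quantitative estimate: since $mK^2/n = n^{\varepsilon + 2(1-\varepsilon)/(2k-1)} > 1$, the sum is dominated by its largest term $\ell = k$, and controlling $(mK^2/n)^k$ against $mK$ is delicate. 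It will likely require using the tighter ratio $K(K-1)/(n(n-1))$ in place of $K^2/n^2$ to extract small multiplicative savings, together with a careful geometric-series bound on the sub-dominant terms and absorption of lower-order polylog factors into the $\Omega(\varepsilon)$ slack in the final theorem.
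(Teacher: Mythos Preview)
Your approach---a first-moment bound on the cycle count followed by Markov's inequality---is exactly what the paper does, and your setup (enumerating potential $2\ell$-cycles in $K_{m,n}$ and multiplying the independent per-left-vertex probabilities $K(K-1)/(n(n-1))$) is correct. The estimate $\mathbb{E}[X_\ell]\le(mK^2/n)^\ell/(2\ell)$ is the right order of magnitude.

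Where your proposal has a genuine gap is in the resolution of the quantitative obstacle you (rightly) flagged. The shortfall is not a polylog factor. With $m=n^{1+\varepsilon}$ and $K=n^{(1-\varepsilon)/(2k-1)}$ one has $mK^2/n=n^{\varepsilon}K^{2}$, and hence
\[
\frac{(mK^{2}/n)^{k}}{mK}\;=\;\frac{n^{k\varepsilon}K^{2k-1}}{n^{1+\varepsilon}}\;=\;\frac{n^{k\varepsilon}\cdot n^{1-\varepsilon}}{n^{1+\varepsilon}}\;=\;n^{(k-2)\varepsilon},
\]
which is $\ge 1$ for every $k\ge 2$. Replacing $K^{2}/n^{2}$ by the exact ratio $K(K-1)/(n(n-1))$ only multiplies this by $(1-1/K)^{k}=1-o(1)$; no geometric-series or polylog bookkeeping will close a polynomial gap.

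The actual fix is at the parameter level, not in the arithmetic: take $K$ a bit smaller, say $K\le n^{(1-c\varepsilon)/(2k-1)}$ with $c>k-1$ (equivalently, instantiate the \textsc{Projection Games} construction with $\varepsilon$ replaced by $\Theta(\varepsilon/k)$). Then the ratio above becomes $n^{(k-1-c)\varepsilon}=o(1)$ and Markov delivers the claim; the loss in the final integrality-gap exponent is $O(\varepsilon)$ and is absorbed into the $\Theta(\varepsilon)$ slack already present in the theorem statement. For what it is worth, the paper's own displayed computation uses $n^{1+\varepsilon}$ rather than $n$ in the denominator of the edge probability, which is what makes its calculation appear to go through without this adjustment.
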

\begin{proof}
Note that the instance graph is a bipartite graph that is $K$-regular from the left. We first show that in such a random graph the expected number of cycles of length $2\ell\le2k$ is at most $o\left(\frac{n^{1+\varepsilon} K}{\ell^2}\right)$. For computing this expectation, we first count the sets $L_1 \cup L_2$, where $L_1 \subseteq L$ and $L_2 \subseteq R$ that can form a cycle of length $2\ell$. Since $|L_1|=|L_2|= \ell$, the number of such sets is ${n^{1+\varepsilon}\choose \ell} \cdot {n \choose \ell}$.

Now for counting the cycles we first one of the $\ell$ nodes in $L_2$, and after fixing this, there are $(\ell-1)!\ell!$ possibilities for ordering of other nodes. Since cycles can be counted by $2$ directions, we divide the number by $2$. Thus the number of possible cycles between $L$ and $R$ is:
\[{n^{1+\varepsilon}\choose\ell}\cdot{n\choose\ell}\cdot\frac{(\ell-1)!\ell!}{2}\le n^{4\ell+2\varepsilon\ell}\]

Then we note that the probability that two specific edges are chosen for an specific vertex in $L$ is $\frac{{n^{1+ \varepsilon} \choose K-2}}{{n^{1+\varepsilon} \choose K}}$. Hence the expected number of cycles of size $2\ell$ is:
\begin{align*}
 E[\#\text{cycles of length }2\ell]&\le n^{4\ell+2\varepsilon\ell}\cdot\left(\frac{K(K-1)}{(n^{1+\varepsilon}-K-1)(n^{1+\varepsilon}-K)}\right)^\ell\\
&\le K^{2l}n^{-2\varepsilon\ell}\left(1+\frac{2K}{n^{1+\varepsilon}-2K}\right)^{2\ell}\\
&\le K^{2l}n^{-2\varepsilon\ell}e^\frac{4K\ell}{n^{1+\varepsilon}-2K}\le n^\frac{(2\ell-1)(1-\varepsilon)}{2k-1}K\cdot n^{-2\varepsilon\ell}\cdot O(1)\\
&= n^{1+\varepsilon}K\cdot o\left(\frac{1}{\ell^2}\right)=o\left(\frac{n^{1+\varepsilon} K}{\ell^2}\right)
 \end{align*}
We then sum over all $1\le\ell\le k$, and get that the expected number of $2k$-cycles is at most $n^{1+\varepsilon}K\cdot o\left(\sum_{\ell=1}^k\frac{1}{\ell^2}\right)=n^{1+\varepsilon}K\cdot o(1)$. Then with Markov inequality the claim follows. 
\end{proof}

Next, we remove one edge from each cycles of length at most $2k$, so that the girth will become $(2k+2)$. Note that with probability $1-o(1)$, there are still $\frac{n^{1+\varepsilon}K}{2}$ edges left.

It is easy to see that Corollary \ref{cor:projComplete} still holds on the new instance for the same $\mathbf{y}^*$. Lemma \ref{lem:projSound} also holds because the new instance is just a subgraph of the previous instance. Claim \ref{claim:projDegree} holds for the same reason. We summarize the properties of the obtained instance in the following lemma.
\begin{lemma}\label{lem:projModified}
For any small constant $0<\varepsilon<1$, there exists a \textsc{Projection Games} instance noted $(L,R,E_{Proj}^k,\Sigma,(\pi_e)_{e\in E_{Proj}})$ which has the following properties:
\begin{enumerate}
\item $\Sigma=[n^\frac{3-3\varepsilon}{5}]$, $|E_{Proj}^k|\ge\frac{n^{1+\varepsilon}K}{2}$, $R=\{x_1,\mathellipsis,x_n\}$, $L=\{c_1,\mathellipsis,c_m\}$, where $m=n^{1+\varepsilon}$.
\item There exists a feasible solution $\mathbf{y}^*$ for the $r=n^{\Omega(\varepsilon)}$-th level of Lasserre for the Undirected $(2k-1)$-spanner Lasserre, such that ${\sum\limits_{(\sigma_L,\sigma_R)\in\pi_{(c_i,x_j)}}y_{(c_i,\sigma_L),(x_j,\sigma_R)}=1}$ for all $\{c_i,x_j\}\in E_{Proj}^k$.
\item At most $O\left(\frac{n^{1+\varepsilon}\ln n}{\varepsilon}\right)$ edges can be satisfied.
\item The degree of vertices in $L$ is $K=\min\left\{n^{\frac{1-\varepsilon}{5}}-1,n^\frac{1-\varepsilon}{2k-1}\right\}$, and the degree of vertices in $R$ is at most $2Kn^\varepsilon$.
\item The girth of graph $(L\cup R,E_{Proj}^k)$ is at least $2k+2$.
\end{enumerate}
\end{lemma}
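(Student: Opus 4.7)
The plan is to obtain this modified instance by starting from the instance guaranteed by Lemma~\ref{lem:projInstance}, shrinking the degree parameter $K$ to $\min\{n^{(1-\varepsilon)/5}-1, n^{(1-\varepsilon)/(2k-1)}\}$, and then surgically removing a small set of edges to break every short cycle. The key observation is that decreasing $K$ only weakens the random construction, so that all of the relevant probabilistic events from Lemma~\ref{lem:projInstance} still hold with probability $1-o(1)$, while the smaller degree makes cycles of length at most $2k$ sufficiently rare to be killed by removing $o(|E_{Proj}|)$ edges.

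First I would redo the cycle-counting argument just given in the Claim above the Lemma, but now for general $\ell \leq k$. The count splits into two factors: choosing the unordered vertex sets $L_1\subseteq L$ and $L_2\subseteq R$ of size $\ell$ that participate (bounded by $\binom{m}{\ell}\binom{n}{\ell}$), and ordering them into a $2\ell$-cycle (bounded by $(\ell-1)!\,\ell!/2$). Multiplying by the probability that all $2\ell$ of the required edges are realized (using that the neighborhoods $T_i$ are chosen uniformly at random, so conditional probabilities factor nicely), one gets an expected count of $o(n^{1+\varepsilon}K/\ell^2)$. Summing over $\ell=1,\dots,k$ gives $o(n^{1+\varepsilon}K)$ short cycles in expectation, and Markov plus a union bound with the other $1-o(1)$ events of Lemma~\ref{lem:projInstance} yields all of them simultaneously with probability $1-o(1)$. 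I then delete one edge from each such cycle, obtaining $E_{Proj}^k$ with $|E_{Proj}^k| \geq \tfrac{1}{2}n^{1+\varepsilon}K$ and girth at least $2k+2$.

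The remaining work is to check that the five stated properties survive the modification. Property~1 follows directly from the edge count after deletion. Property~4 is immediate since deleting edges can only decrease degrees, and the right-degree bound $2Kn^\varepsilon$ was already proved in Claim~\ref{claim:projDegree} for the pre-deletion graph (still valid under the new smaller $K$, since the Chernoff bound only becomes stronger). Property~3 (soundness) is inherited because any labeling of the modified instance is a labeling of the original, and satisfying more edges in the modified instance would contradict Lemma~\ref{lem:projInstance}(3). Property~2 is the most delicate point but also essentially automatic: the SDP solution $\mathbf{y}^*$ from Lemma~\ref{lem:projInstance} is a feasible $\text{SDP}_{Proj}^r$ solution that depends only on the vertices and the $\pi_e$ functions, so restricting attention to a subset of edges preserves both feasibility and the per-edge identity $\sum_{(\sigma_L,\sigma_R)\in\pi_{(c_i,x_j)}} y^*_{(c_i,\sigma_L),(x_j,\sigma_R)}=1$ for the edges that remain. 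Property~5 is by construction.

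The only step that requires real care is the expected cycle count, because the exponent in the bound $n^{(2\ell-1)(1-\varepsilon)/(2k-1)}$ must cancel against $n^{-2\varepsilon\ell}$ and $K$ uniformly in $\ell\leq k$; this is precisely what dictates the specific choice $K \leq n^{(1-\varepsilon)/(2k-1)}$. Once that estimate is in hand, everything else is bookkeeping, and the lemma follows by taking a union bound over the $O(1)$ high-probability events invoked from Lemma~\ref{lem:projInstance} together with the Markov bound on short cycles.
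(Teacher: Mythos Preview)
Your proposal is correct and follows essentially the same approach as the paper: reduce $K$ to $\min\{n^{(1-\varepsilon)/5}-1,\,n^{(1-\varepsilon)/(2k-1)}\}$, bound the expected number of cycles of length $2\ell\le 2k$ by $o(n^{1+\varepsilon}K/\ell^2)$ via the same counting-plus-probability computation, apply Markov, delete one edge per short cycle, and then observe that properties~(2)--(4) are inherited from Lemma~\ref{lem:projInstance} because the resulting instance is a sub-instance of the original. One small inaccuracy: the Chernoff bound in Claim~\ref{claim:projDegree} actually becomes \emph{weaker} when $K$ shrinks (the exponent $Kn^{\varepsilon}/3$ decreases), but since $K$ is still polynomial in $n$ the bound remains $o(1/n)$ and the conclusion is unaffected.
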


\subsection{Spanner Instance}
The spanner instance used here is the instance of \cite{DK11}, which as stated earlier, is slightly different from the instance used in our integrality gap for the directed $k$-spanner. Given a modified \textsc{Projection Games} instance $(L=\cup_{i\in m}\{c_i\},R=\cup_{i\in n}\{x_i\},E_{Proj}^k,\Sigma,(\pi_e)_{e\in E_{Proj}^k})$ from Lemma \ref{lem:projModified}, we construct an undirected $(2k-1)$-spanner instance $G=(V,E)$ as follows (Note that $K$ is the degree of the vertices in $L$, before removing edges in cycles):

For every $c_i\in L$, we create $|\Sigma|+K|\Sigma|(k-1)$ vertices: $c_{i,\sigma}$ for all $\sigma\in\Sigma$ and $c_i^{(j,l)}$ for all $j\in[k-1],l\in[K|\Sigma|]$.  
 We also create edges $\{c_i^{(1,l)},c_{i,\sigma}\}$ for each $\sigma\in\Sigma$ and $l\in[K|\Sigma|]$. Also $\{c_i^{(j,l)},c_i^{(j+1,l)}\}$ for each $j\in[k-2]$ and $l\in[K|\Sigma|]$. In other words, we have a set of $K|\Sigma|$ paths and we denote the $l$-th path as $E_{LPath}^{i,l}$. There is a complete bipartite graph, between the starting points of these paths and the vertices $c_{i,\sigma},\sigma\in\Sigma$.

For every $x_i\in R$, we create $|\Sigma|+K|\Sigma|(k-1)$ vertices: $x_{i,\sigma}$ for $\sigma\in\Sigma$ and $x_i^{(j,l)}$ for $j\in[k-1],l\in[K|\Sigma|]$. We also create edge $\{x_i^{(1,l)},x_{i,\sigma}\}$ for each $\sigma\in\Sigma$ and $l\in[K|\Sigma|]$. Also $\{x_i^{(j,l)},x_i^{(j+1,l)}\}$ for each $j\in[k-2]$ and $l\in[K|\Sigma|]$. In other words, we have a set of $K|\Sigma|$ paths and we denote the $l$-th path as $E_{RPath}^{i,l}$. There is a complete bipartite graph, between the starting points of these paths and the vertices $x_{i,\sigma},\sigma\in\Sigma$. 

For each $e=\{c_i,x_j\}\in E_{Proj}^k$, we add edges $\{c_i^{(k-1,l)},x_j^{(k-1,l)}\}$ for each $l\in[K|\Sigma|]$, and add edges $\{c_{i,\sigma_L},x_{j,\sigma_R}\}$ for $(\sigma_L,\sigma_R)\in\pi_{i,j}$. In other words, we have $K|\Sigma|$ duplicates of $E_{Proj}^k$ that connects further endpoints of the left and right outside paths, which we call \textit{outer} edges $E_{Outer}$. We also have $|\Sigma|$ duplicates of $E_{Proj}^k$ between $L_{Labels}$ and $R_{Labels}$, these are the edges in $E_M$.

Similar to the Section \ref{sec:dir_instance}, we also need some other edges, $E_{LStars}$ and $E_{LStars}$ inside groups of $L_{Labels}$ and $R_{Labels}$.

To be more specific, $V=L_{Paths}\cup L_{Labels}\cup R_{Labels}\cup R_{Paths}$, $E=E_{LPaths}\cup E_L\cup E_{LStars}\cup E_M\cup E_{RStars}\cup E_R\cup E_{RPaths}\cup E_{Outer}$, such that:
\[\begin{array}{lll}
L_{Labels}=\{c_{i,\sigma}\mid i\in|L|,\sigma\in\Sigma\},&\hspace{-6em}L_{Paths}=\{c_i^{(j,l)}\mid i\in|L|,l\in[K|\Sigma|],j\in[k-1]\},\\
R_{Labels}=\{x_{i,\sigma}\mid i\in|R|,\sigma\in\Sigma\},&\hspace{-6em}R_{Paths}=\{x_i^{(j,l)}\mid i\in|R|,l\in[K|\Sigma|],j\in[k-1]\},\\
E_{LPath}^{i,l}=\{\{c_i^{(j,l)},c_i^{(j+1,l)}\}\mid i\in|L|,l\in[K|\Sigma|],j\in[k-2]\},&E_{LPaths}=\cup_{i\in|L|}\cup_{l\in[K|\Sigma|]}E_{LPath}^{i,l},\\
E_{RPath}^{i,l}=\{\{x_i^{(j,l)},x_i^{(j+1,l)}\}\mid i\in|R|,l\in[K|\Sigma|],j\in[k-2]\},&E_{RPaths}=\cup_{i\in|R|}\cup_{l\in[K|\Sigma|]}E_{RPath}^{i,l},\\
E_L^{i,l}=\{\{c_i^{(1,l)},c_{i,\sigma}\}\mid\sigma\in\Sigma\},&\hspace{-6em}E_L^l=\cup_{i\in|L|}E_L^{i,l},&\hspace{-9em}E_L=\cup_{l\in[K|\Sigma|]}E_L^l,\\
E_R^{i,l}=\{\{x_i^{(1,l)},x_{i,\sigma}\}\mid\sigma\in\Sigma\},&\hspace{-6em}E_R^l=\cup_{i\in|R|}E_R^{i,l},&\hspace{-9em}E_R=\cup_{l\in[K|\Sigma|]}E_R^l,\\
E_M^{i,j}=\{\{c_{i,\sigma_L},x_{j,\sigma_R}\}\mid(\sigma_L,\sigma_R)\in\pi_{i,j}\},&\hspace{-6em}E_M=\cup_{i,j:\{c_i,x_j\}\in E_{Proj}^k}E_M^{i,j}\\
E_{Outer}=\{\{c_i^{(k-1,l)},x_j^{(k-1,l)}\}\mid\{c_i,x_j\}\in E_{Proj},l\in[K|\Sigma|]\}\\
E_{LStars}=\{\{c_{i,1},c_{i,\sigma}\}\mid i\in|L|,\sigma\in\Sigma\setminus\{1\}\},&\hspace{-6em}E_{RStars}=\{\{x_{i,1},x_{i,\sigma}\}\mid i\in|R|,\sigma\in\Sigma\setminus\{1\}\},
\end{array}\]

\begin{figure}
\begin{center}
\includegraphics[scale=0.7]{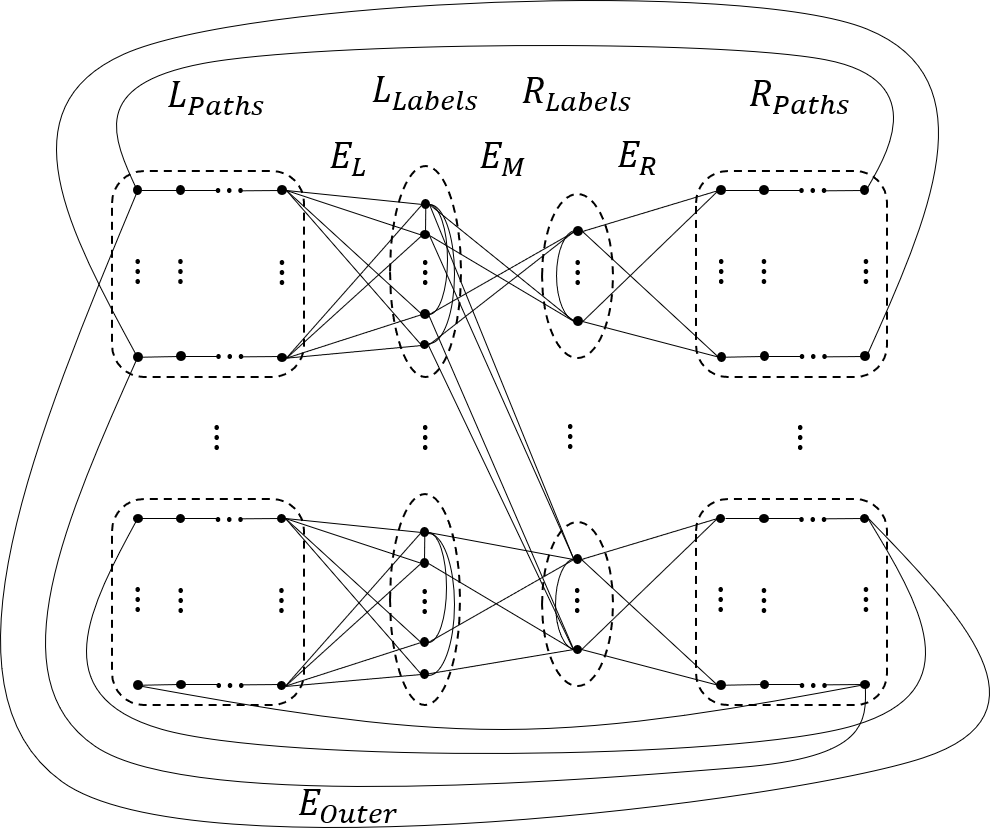} \label{fig:undir-spanner}
\end{center}
\caption{\textsc{Basic $(2k-1)$-Spanner} instance}
\end{figure}

\subsection{Fractional Solution}

Given the solution $\{y_\Psi^*\mid\Psi\subseteq (L\cup R)\times\Sigma\}$ of the $(r+2)$-th level Lasserre SDP in Lemma \ref{lem:projComplete}, we define $\Phi(e)$ similar to Section \ref{sec:directSol}:
\[\Phi(e)=\begin{cases} 
\varnothing,&\mbox{if }e\in E_{LPaths}\cup E_{LStars}\cup E_M\cup E_{RStars}\cup E_{RPaths}\\
\{(c_i,\sigma)\},&\mbox{if }e\in E_L\mbox{ and }e\mbox{ has an endpoint }c_{i,\sigma}\in L_{labels}\\
\{(x_i,\sigma)\},&\mbox{if }e\in E_R\mbox{ and }e\mbox{ has an endpoint }x_{i,\sigma}\in R_{labels}\\
\end{cases}\]
And $\Phi$ can be extended to $\mathscr{P}(E\setminus E_{Outer})\rightarrow\mathscr{P}((L\cup R)\times\Sigma)$ by letting $\Phi(S)=\cup_{e\in S}\Phi(e)$.

Similarly, if $S$ contain any edge in $E_{Outer}$, we define $y_{S}'=0$. Otherwise, let $y_{S}'=y_{\Phi(S)}^*$.

\begin{lemma}
The objective value of $\textbf{y}$ is $O(|V|)$.
\end{lemma}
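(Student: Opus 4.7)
The plan is to mirror the proof of Lemma~\ref{lem:directObj} for the directed case, partitioning $E$ into three groups according to how $y'_e$ was defined. First, every edge in $E_{Outer}$ has $y'_e = 0$ by construction, so it contributes nothing. Second, every edge $e$ in $E_{LPaths} \cup E_{LStars} \cup E_M \cup E_{RStars} \cup E_{RPaths}$ has $\Phi(e) = \varnothing$, and hence $y'_e = y^*_\varnothing = 1$. Third, each edge in $E_L \cup E_R$ has $\Phi(e)$ equal to a singleton $\{(v,\sigma)\}$ corresponding to its label endpoint, so $y'_e = y^*_{(v,\sigma)}$. The objective is then the sum of the second and third contributions.

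For the integrally-bought edges, I would just count. The numbers $|E_{LPaths}|=|L|\cdot K|\Sigma|\cdot(k-2)$ and $|E_{RPaths}|=|R|\cdot K|\Sigma|\cdot(k-2)$ are bounded by $|L_{Paths}|$ and $|R_{Paths}|$ respectively; the star sets satisfy $|E_{LStars}| \le |L_{Labels}|$ and $|E_{RStars}| \le |R_{Labels}|$; and $|E_M| \le |E_{Proj}^k|\cdot|\Sigma| \le K|L||\Sigma|$ using that each vertex of $L$ has degree at most $K$ in the projection-games instance. Since $|V| \geq (|L|+|R|)\cdot K|\Sigma|\cdot(k-1)$ for $k\ge 2$, each of these quantities is $O(|V|)$, so their sum is $O(|V|)$.

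For the fractionally-bought edges in $E_L \cup E_R$, I would use the slack moment equality constraints $\sum_{\sigma \in \Sigma} y^*_{(v,\sigma)} = y^*_\varnothing = 1$ that $\mathbf{y}^*$ satisfies in $\text{SDP}_{Proj}^{r+2}$ (one for each $v \in L \cup R$) to collapse the sum:
\[
\sum_{e \in E_L} y'_e \;=\; K|\Sigma|\sum_{i \in |L|}\sum_{\sigma \in \Sigma} y^*_{(c_i,\sigma)} \;=\; K|\Sigma|\cdot|L|,
\]
and analogously $\sum_{e \in E_R} y'_e = K|\Sigma|\cdot|R|$, both of which are $O(|V|)$. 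Combining all three groups gives the claimed bound.

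There is no real obstacle here; the proof is a routine counting/telescoping argument following the pattern of the directed analogue. The only bookkeeping differences are that $E_{LPaths}$ and $E_{RPaths}$ now contribute $(k-2)$-length paths on each side (replacing the single-edge groups $E_L^{i,l}$ in Section~\ref{sec:directed}), and that $E_M$ is a direct edge per satisfying label pair instead of a length-$(2k-3)$ path; both of these only make the counting simpler, not harder.
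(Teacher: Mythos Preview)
Your proposal is correct and follows essentially the same approach as the paper: partition $E$ into $E_{Outer}$ (zero contribution), the integrally-bought edges $E_{LPaths}\cup E_{LStars}\cup E_M\cup E_{RStars}\cup E_{RPaths}$ (counted directly to be $O(|V|)$), and $E_L\cup E_R$ (collapsed via the slack-moment equality $\sum_\sigma y^*_{(v,\sigma)}=1$ to $K|\Sigma|(|L|+|R|)=O(|V|)$). The only cosmetic difference is that the paper writes out the explicit edge counts rather than bounding them by vertex-set sizes as you do.
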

\begin{proof}
We first count the number of edges which is defined to be $y_e'=y_\varnothing^*=1$. That is the number of edges in the set $E_{LPaths}\cup E_{LStars}\cup E_{M}\cup E_{RStars}\cup E_{RPaths}$, which is:
\[|L|\cdot(k-2)\cdot K|\Sigma|+|L|(|\Sigma|-1)+\frac{n^{1+\varepsilon}K}{2}\cdot|\Sigma|+|R|(|\Sigma|-1)+|R|\cdot(k-2)\cdot K|\Sigma|=O(|V|)\]

For other edges with non-zero value of $y_e'$ we have,
\[\sum_{e\in E_L}y_e'+\sum_{e\in E_R}y_e'=K|\Sigma|\cdot\sum_{i\in|L|}\sum_{\sigma\in\Sigma}y_{(c_i,\sigma)}^*+K|\Sigma|\cdot\sum_{i\in|R|}\sum_{\sigma\in\Sigma} y_{(x_i,\sigma)}^*=K|\Sigma||L|+K|\Sigma||R|=O(|V|)\]

Therefore,
\[\sum_{e\in E}y_e'=\sum_{e\in E_{Outer}}y_e'+\sum_{e\in E^1}y_e'+\sum_{e\in E_L}y_e'+\sum_{e\in E_R}y_e'=0+O(V)+O(V)=O(V)\]
\end{proof}

\subsection{Feasibility}

The proof for the moment matrix $M_{r+1}(\mathbf{y}')=\left(y_{I\cup J}'\right)_{|I|,|J|\le r+1}\succcurlyeq0$ and the slack moment matrix $M_r^\mathbf{z}(\mathbf{y}')=\left(\sum_{e\in E}z_ey_{I\cup J\cup\{e\}}'-y_{I\cup J}'\right)_{|I|,|J|\le r}\succcurlyeq0$ where $\mathbf{z}\in\mathcal{Z}^{u,v}$ and $\{u,v\}\in E\setminus E_{Outer}$ is exactly the same as Section \ref{sec:directFeasibility}. The arguments for showing that the slack moment matrices are PSD is also similar to Section \ref{sec:directFeasibility}, except that the calculations must be done on the the new instance.

\begin{theorem}
The slack moment matrix $M_r^\mathbf{z}(\mathbf{y}')=\left(\sum_{e\in E}z_ey_{I\cup J\cup\{e\}}'-y_{I\cup J}'\right)_{|I|,|J|\le r}\succcurlyeq0$ is PSD for every $\{u,v\}\in E_{outer}$ and $\mathbf{z}\in\mathcal{Z}^{u,v}$.
\end{theorem}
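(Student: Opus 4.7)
The proof will mirror Theorem~\ref{thm:Eouter} but adapted to the undirected construction, where the stretch-$(2k-1)$ paths connecting the endpoints of an outer edge now traverse both of the outer paths $E_{LPath}^{i,l}$ and $E_{RPath}^{j,l}$ in addition to one edge each from $E_L^{i,l}$, $E_M^{i,j}$, and $E_R^{j,l}$. Fix $\{u,v\}=\{c_i^{(k-1,l)},x_j^{(k-1,l)}\}\in E_{Outer}$ and $\mathbf{z}\in\mathcal Z^{u,v}$. The first step will be to establish the undirected analog of Claim~\ref{claim:path}: for each $(\sigma_L,\sigma_R)\in\pi_{i,j}$, the path $P_{\sigma_L,\sigma_R}$ consisting of all $k-2$ edges of $E_{LPath}^{i,l}$, the edge $\{c_i^{(1,l)},c_{i,\sigma_L}\}\in E_L^{i,l}$, the edge $\{c_{i,\sigma_L},x_{j,\sigma_R}\}\in E_M^{i,j}$, the edge $\{x_{j,\sigma_R},x_j^{(1,l)}\}\in E_R^{j,l}$, and all $k-2$ edges of $E_{RPath}^{j,l}$ has exactly $2k-1$ edges. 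The girth-$(2k+2)$ guarantee from Lemma~\ref{lem:projModified} rules out stray detours through other outer edges, but for the PSD argument itself all I will actually need is that each $P_{\sigma_L,\sigma_R}$ is a length-$(2k-1)$ path, so that $\sum_{e\in P_{\sigma_L,\sigma_R}}z_e\ge 1$ for all $\mathbf{z}\in\mathcal{Z}^{u,v}$.

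Next I will decompose $M_r^\mathbf{z}(\mathbf{y}')$ by splitting $E$ into the six parts $E_{LPath}^{i,l}$, $E_L^{i,l}$, $E_M^{i,j}$, $E_R^{j,l}$, $E_{RPath}^{j,l}$, and the remainder, exactly as in Theorem~\ref{thm:Eouter}. Using $\Phi(e)=\varnothing$ for edges in the three path-type sets, their contributions each become $z_e$ times $Y_0:=(y^*_{\Phi(I\cup J)})_{I,J}$, while the contributions from $E_L^{i,l}$ and $E_R^{j,l}$ involve $(y^*_{\Phi(I\cup J)\cup\{(c_i,\sigma_L)\}})_{I,J}$ or $(y^*_{\Phi(I\cup J)\cup\{(x_j,\sigma_R)\}})_{I,J}$. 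Combining Claim~\ref{claim:pair=0} with the slack moment constraints $M_r^{c_i}(\mathbf{y}^*)=M_r^{x_j}(\mathbf{y}^*)=\mathbf{0}$ yields the key identity $Y_0=\sum_{(\sigma_L,\sigma_R)\in\pi_{i,j}}Y_{\sigma_L,\sigma_R}$ (and its one-vertex refinements), where $Y_{\sigma_L,\sigma_R}:=(y^*_{\Phi(I\cup J)\cup\{(c_i,\sigma_L),(x_j,\sigma_R)\}})_{I,J}$. Adding and subtracting these sums at the appropriate places splits each contribution into a ``slack'' piece that either vanishes (by Claim~\ref{claim:pair=0}) or is PSD (by Lemma~\ref{lem:simple_slack}), plus a piece absorbed into the coefficient of $Y_{\sigma_L,\sigma_R}$. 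After collecting, the coefficient of $Y_{\sigma_L,\sigma_R}$ for $(\sigma_L,\sigma_R)\in\pi_{i,j}$ is
\[z_{\{c_i^{(1,l)},c_{i,\sigma_L}\}}+z_{\{c_{i,\sigma_L},x_{j,\sigma_R}\}}+z_{\{x_{j,\sigma_R},x_j^{(1,l)}\}}+\sum_{e\in E_{LPath}^{i,l}}z_e+\sum_{e\in E_{RPath}^{j,l}}z_e-1=\sum_{e\in P_{\sigma_L,\sigma_R}}z_e-1\ge 0,\]
and $Y_{\sigma_L,\sigma_R}$ itself equals $(y'_{I\cup J\cup\{\{c_i^{(1,l)},c_{i,\sigma_L}\},\{x_{j,\sigma_R},x_j^{(1,l)}\}\}})_{I,J}$, a principal submatrix of $M_{r+1}(\mathbf{y}')$ (PSD by the undirected analog of Theorem~\ref{thm:main-moment}). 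The remainder edges contribute a non-negative combination of principal submatrices of $M_{r+1}(\mathbf{y}')$.

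The main obstacle will be the extra bookkeeping introduced by the outer path edges: unlike the directed case, where the only ``free'' ($\Phi(e)=\varnothing$) edges on the label paths were the inner $E_M$ edges, the undirected paths also traverse every edge of $E_{LPath}^{i,l}$ and $E_{RPath}^{j,l}$, producing extra $(\sum_{e\in E_{LPath}^{i,l}}z_e)Y_0$ and $(\sum_{e\in E_{RPath}^{j,l}}z_e)Y_0$ terms. These must be split via $Y_0=\sum_{(\sigma_L,\sigma_R)\in\pi_{i,j}}Y_{\sigma_L,\sigma_R}$ before being absorbed into the cross-term coefficient, which is precisely what makes the final coefficient recover the full path sum $\sum_{e\in P_{\sigma_L,\sigma_R}}z_e-1$ rather than just the middle three terms visible in the directed analysis. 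Once this bookkeeping is handled, the PSDness of every constituent matrix follows from the same three tools as in the directed proof: Claim~\ref{claim:pair=0} to annihilate the subtracted ``slack'' pieces, Lemma~\ref{lem:simple_slack} applied twice for each $(Y_0-Y_{\sigma_L,\sigma_R})$, and $M_{r+1}(\mathbf{y}')\succeq 0$ for the cross-term principal submatrices.
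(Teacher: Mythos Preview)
Your proposal is correct and follows essentially the same approach as the paper's proof: the same six-way split of $E$, the same add-and-subtract of $\sum_{(\sigma_L,\sigma_R)\in\pi_{i,j}}Y_{\sigma_L,\sigma_R}$ to distribute the outer-path contributions into the cross-term coefficient, and the same trio of tools (Claim~\ref{claim:pair=0}, Lemma~\ref{lem:simple_slack}, and $M_{r+1}(\mathbf{y}')\succeq 0$) to dispatch each piece. Your observation that the PSD argument only needs each $P_{\sigma_L,\sigma_R}$ to \emph{be} a length-$(2k-1)$ path (so $\sum_{e\in P_{\sigma_L,\sigma_R}}z_e\ge 1$), rather than the full uniqueness statement of Claim~\ref{claim:undirectedPath}, is also accurate---the girth bound is needed for the integral soundness, not here.
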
 

For each $\{u,v\}=\{c_i^{(k-1,l)},x_j^{(k-1,l)}\}\in E_{outer}$, the set of stretch $(2k-1)$-paths consist of the outer edge, or one of the paths that go through some labels $(\sigma_L,\sigma_R)$. Using the fact that the girth is $(2k+2)$ and after modifications of the spanner instance we can make a structural claim similar to \ref{claim:path}. We have,
\begin{claim}\label{claim:undirectedPath}
For every pair $\{c_i^{(k-1,l)},x_j^{(k-1,l)}\}\in E_{Outer}$, the length $(2k-1)$ paths connecting $c_i^{(k-1,l)}$ and $x_j^{(k-1,l)}$ are:
\begin{itemize}
\item The path consisted of only the edge $\{c_i^{(k-1,l)},x_j^{(k-1,l)}\}$.
\item A path consisted of edges in $E_{LPath}^{i,l}\cup\{\{c_i^{(1,l)},c_{i,\sigma_L}\},\{c_{i,\sigma_L}, x_{j,\sigma_R}\},\{x_{j,\sigma_R},x_j^{(1,l)}\}\}\cup E_{RPath}^{j,l}$ for some ${(\sigma_L,\sigma_R)\in\pi_{i,j}}$.
\end{itemize}
\end{claim}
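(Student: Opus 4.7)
My plan is to enumerate all simple $u$-$v$ paths $P$ of length at most $2k-1$ in the spanner, where $u = c_i^{(k-1,l)}$ and $v = x_j^{(k-1,l)}$, by combining a projection onto $E_{Proj}^k$ with the girth lower bound from Lemma~\ref{lem:projModified}. The approach parallels Claim~\ref{claim:path} for the directed case, but undirected edges admit additional candidate shortcuts and the $2k+2$ girth guarantee is exactly what rules them out.

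I first introduce a projection $\rho : V \to L\cup R$ sending each $c_{i',\sigma}$ and $c_{i'}^{(t,l')}$ to $c_{i'}$ (and symmetrically on the right), together with a depth function $d$ defined by $d(c_{i',\sigma}) = d(x_{j',\sigma}) = 0$ and $d(c_{i'}^{(t,l')}) = d(x_{j'}^{(t,l')}) = t$. Under $\rho$, every edge of $E_{LPaths}\cup E_L\cup E_{LStars}$ (and its right-hand analogue) is a self-loop that changes $d$ by at most $1$, while every ``crossing'' in $E_M\cup E_{Outer}$ projects to a genuine edge of $E_{Proj}^k$: outer crossings join depth-$(k-1)$ vertices at a matching $l$-column, and $E_M$ crossings join depth-$0$ vertices. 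Let $q$ be the number of crossings in $P$; then $\rho(P)$ is a length-$q$ walk from $c_i$ to $x_j$ in the bipartite graph $E_{Proj}^k$, so $q$ is odd, and between consecutive crossings $P$ stays inside a single $c_{i'}$- or $x_{j'}$-bundle.

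For $q=1$, the single crossing projects to $\{c_i,x_j\}\in E_{Proj}^k$. If it is an outer edge $\{c_i^{(k-1,l')}, x_j^{(k-1,l')}\}$ with $l' = l$, then $P$ is the direct edge and we recover the first branch of the claim; if $l' \ne l$, each side-phase must switch $l$-branch at depth $k-1$, costing at least $2(k-1)$ per side and exceeding the length budget. If the crossing is an $E_M$ edge $\{c_{i,\sigma_L},x_{j,\sigma_R}\}$ with $(\sigma_L,\sigma_R) \in \pi_{i,j}$, each side must descend from depth $k-1$ to depth $0$ at cost exactly $k-1$; the unique length-$(k-1)$ such descent inside a single bundle is $E_{LPath}^{i,l}$ followed by one $E_L$ edge on the left (and symmetrically on the right), recovering the second branch.

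The main obstacle is $q\ge 3$, which I split into two subcases. Call $P$ \emph{uniform} if every crossing is an outer edge in the $l$-column $l$. In the uniform subcase all side-phases have length $0$, so $|P|=q$; since outer edges in a fixed $l$-column are in bijection with edges of $E_{Proj}^k$, simplicity of $P$ forces the $q$ crossings to project to $q$ distinct edges of $E_{Proj}^k$, so $\rho(P)$ together with $\{c_i,x_j\}$ (an edge absent from $\rho(P)$ since the direct edge is excluded) is a closed trail of length $q+1\le 2k$ in $E_{Proj}^k$. Any closed trail of length at least $3$ contains a cycle, yielding a cycle of length $\le 2k<2k+2$ in $E_{Proj}^k$, contradicting Lemma~\ref{lem:projModified}. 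In the non-uniform subcase, at least one crossing is $E_M$ or outer at a layer $\ne l$; a depth-based accounting, combined with the bundle-consistency requirement that each side-phase stays within one $c_{i'}$- or $x_{j'}$-bundle, shows that at least one phase of length $\ge k-1$ is forced on each of the two sides of $P$, giving $|P| \ge q + 2(k-1) \ge 2k+1 > 2k-1$, another contradiction. The key step, and the main difficulty, is converting the closed trail in the uniform subcase into a genuine cycle to which the girth bound applies, for which the bijection between fixed-$l$-column outer edges and $E_{Proj}^k$-edges is crucial.
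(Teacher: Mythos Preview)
Your argument is correct and is precisely the approach the paper gestures at: the paper does not actually prove this claim, but merely asserts that it follows ``using the fact that the girth is $(2k+2)$'' in analogy with Claim~\ref{claim:path}. Your projection $\rho$ onto $E_{Proj}^k$ together with the depth function $d$ is exactly the right formalization of that hint, and the uniform subcase in particular---where the projected path plus the edge $\{c_i,x_j\}$ yields a closed trail of length $q+1\le 2k$ and hence a short cycle---is the place where the girth bound of Lemma~\ref{lem:projModified} is genuinely needed.

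One phrasing you should tighten: in the non-uniform subcase you write that ``at least one phase of length $\ge k-1$ is forced on each of the two sides of $P$.'' This is a little vague as stated. The clean way to say it is via your depth function: crossings do not change $d$, so all depth change occurs inside bundles; the path starts and ends at depth $k-1$; and in the non-uniform case the path necessarily visits depth $0$ (either at an $E_M$ crossing, or inside a phase that must switch $l$-columns, since distinct $l$-branches within a bundle meet only through the depth-$0$ label vertices). Hence the total within-bundle length is at least $2(k-1)$, and $|P|\ge q+2(k-1)\ge 2k+1$. This is what your depth accounting is doing, but stating it globally (total depth variation) rather than per-side avoids any ambiguity about what ``the two sides'' are.

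A small cosmetic point: in the uniform subcase $\rho(P)$ is in fact a simple path (not merely a trail), because for fixed $l$ the map $c_{i'}^{(k-1,l)}\mapsto c_{i'}$, $x_{j'}^{(k-1,l)}\mapsto x_{j'}$ is injective and $P$ is simple. So $\rho(P)\cup\{c_i,x_j\}$ is already a cycle and you do not need the closed-trail-contains-a-cycle step, though invoking it does no harm.
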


Therefore, other than edge  $\{c_i^{(k-1,l)},x_j^{(k-1,l)}\}$ itself, all the edges that may appear in these paths are in the set $E_{LPath}^{i,l}\cup E_L^{i,l}\cup E_M^{i,j}\cup E_R^{j,l}\cup E_{RPath}^{i,l}$.

Thus we can break the summation over $E$ in to several pieces, and have the following equation:

\begin{align}
M_r^\mathbf{z}(\mathbf{y}')=&\left(\sum_{e\in E}z_ey_{I\cup J\cup\{e\}}'-y_{I\cup J}'\right)_{|I|,|J|\le r}\nonumber\\
=&\sum_{e\in E}z_e\left(y_{\Phi(I\cup J)\cup\Phi(e)}^*\right)_{|I|,|J|\le r}-\left(y_{\Phi(I\cup J)}^*\right)_{|I|,|J|\le r}\nonumber\\
=&\sum_{e\in E_{LPath}^{i,l}\cup E_{RPath}^{i,l}}z_e\left(y_{\Phi(I\cup J)\cup\varnothing}^*\right)_{|I|,|J|\le r}\nonumber\\
&+\sum_{\sigma_L\in\Sigma}z_{\{c_i^{(1,l)},c_{i,\sigma_L}\}}\left(y_{\Phi(I\cup J)\cup\{(c_i\sigma_L)\}}^*\right)_{|I|,|J|\le r}\nonumber\\
&+\sum_{(\sigma_L,\sigma_R)\in\pi_{i,j}}z_{\{c_{i,\sigma_L}, x_{j,\sigma_R}\}}\left(y_{\Phi(I\cup J)\cup\varnothing}^*\right)_{|I|,|J|\le r}\nonumber\\
&+\sum_{\sigma_R\in\Sigma}z_{\{x_{j,\sigma_R},x_j^{(1,l)}\}}\left(y_{\Phi(I\cup J)\cup\{(x_j,\sigma_R)\}}^*\right)_{|I|,|J|\le r}\nonumber\\
&+\sum_{e\in E\setminus(E_{LPath}^{i,l}\cup E_L^{i,l}\cup E_M^{i,j}\cup E_R^{j,l}\cup E_{RPath}^{i,l})}z_e\left(y_{I\cup J\cup\{e\}}'\right)_{|I|,|J|\le r}\nonumber\\
&-\left(y_{\Phi(I\cup J)}^*\right)_{|I|,|J|\le r}\nonumber\\
=&\sum_{e\in E_{LPath}^{i,l}\cup E_{RPath}^{i,l}}z_e\left(y_{\Phi(I\cup J)}^*-\sum_{(\sigma_L,\sigma_R)\in\pi_{i,j}}y_{\Phi(I\cup J)\cup\{(c_i,\sigma_L)\}\cup\{(x_j,\sigma_R)\}}^*\right)_{|I|,|J|\le r}\label{eqn:Mpaths}\\
&+\sum_{\sigma_L\in\Sigma}z_{\{c_i^{(1,l)},c_{i,\sigma_L}\}}\left(y_{\Phi(I\cup J)\cup\{(c_i\sigma_L)\}}^*-\sum_{\sigma_R:(\sigma_L,\sigma_R)\in\pi_{i,j}}y_{\Phi(I\cup J)\cup\{(c_i,\sigma_L)\}\cup\{(x_j,\sigma_R)\}}^*\right)_{|I|,|J|\le r}\label{eqn:MLU}\\
&+\sum_{(\sigma_L,\sigma_R)\in\pi_{i,j}}z_{\{c_{i,\sigma_L}, x_{j,\sigma_R}\}}\left(y_{\Phi(I\cup J)\cup\varnothing}^*-y_{\Phi(I\cup J)\cup\{(c_i,\sigma_L)\}\cup\{(x_j,\sigma_R)\}}^*\right)_{|I|,|J|\le r}\label{eqn:MMU}\\
&+\sum_{\sigma_R\in\Sigma}z_{\{x_{j,\sigma_R},x_j^{(1,l)}\}}\left(y_{\Phi(I\cup J)\cup\{(x_j,\sigma_R)\}}^*-\sum_{\sigma_L:(\sigma_L,\sigma_R)\in\pi_{i,j}}y_{\Phi(I\cup J)\cup\{(c_i,\sigma_L)\}\cup\{(x_j,\sigma_R)\}}^*\right)_{|I|,|J|\le r}\label{eqn:MRU}\\
&+\sum_{e\in E\setminus(E_{LPath}^{i,l}\cup E_L^{i,l}\cup E_M^{i,j}\cup E_R^{j,l}\cup E_{RPath}^{i,l})}z_e\left(y_{I\cup J\cup\{e\}}'\right)_{|I|,|J|\le r}\label{eqn:MrestU}\\
&-\left(y_{\Phi(I\cup J)}^*-\sum_{(\sigma_L,\sigma_R)\in\pi_{i,j}}y_{\Phi(I\cup J)\cup\{(c_i,\sigma_L)\}\cup\{(x_j,\sigma_R)\}}^*\right)_{|I|,|J|\le r}\label{eqn:MsumU}\\
&+\sum_{(\sigma_L,\sigma_R)\in\pi_{i,j}}\left(\sum_{e\in E_{LPath}^{i,l}\cup E_{RPath}^{i,l}}z_e+z_{\{c_i^{(1,l)},c_{i,\sigma_L}\}}+z_{\{c_{i,\sigma_L}, x_{j,\sigma_R}\}}+z_{\{x_{j,\sigma_R},x_j^{(1,l)}\}}-1\right)\label{eqn:ZsumU}\\
&\quad\times\left(y_{\Phi(I\cup J)\cup\{(c_i,\sigma_L)\}\cup\{(x_j,\sigma_R)\}}^*\right)_{|I|,|J|\le r} \succcurlyeq0 \label{eqn:MLRU}
\end{align}

The matrix in \eqref{eqn:MMU} is positive semidefinite by using lemma \ref{lem:simple_slack} twice.

The matrix in \eqref{eqn:MrestU} is positive semidefinite because it is a principal submatrix of $M_{r+1}(\mathbf{y}')$.

The matrix in \eqref{eqn:MLRU} is positive semidefinite because it equals to $\left(y_{I\cup J\cup\{\{c_i^{(1,l)},c_{i,\sigma_L}\},\{x_{j,\sigma_R},x_j^{(1,l)}\}\}}^*\right)_{|I|,|J|\le r}$, and is also a principal submatrix of $M_{r+1}(\mathbf{y}')$.

\eqref{eqn:ZsumU} is non-negative because $\mathbf{z}$ satisfies that for all path $P\in\mathcal{P}_{c_i^{(k-1,l)},x_j^{(k-1,l)}}$, we have $\sum_{e\in P}z_e\ge1$. Also all $z_e$ are non-negative.

The matrix in \eqref{eqn:MLU}, \eqref{eqn:MRU}, \eqref{eqn:MsumU}, and \eqref{eqn:Mpaths} are all zero matrix for the same reason as \eqref{eqn:ML}, \eqref{eqn:MR}, and \eqref{eqn:Msum}.

Therefore we proved the slack moment matrix $M_r^\mathbf{z}(\mathbf{y}')=\left(\sum_{e\in E}z_ey_{I\cup J\cup\{e\}}'-y_{I\cup J}'\right)_{|I|,|J|\le r}\succcurlyeq0$ where $\mathbf{z}\in\mathcal{Z}^{u,v}$ and $\{u,v\}\in E_{outer}$.

\subsection{Integral Solution for Undirected $(2k-1)$-Spanner}
In this section, we argue that the spanner instance described in previous sections has a \textit{large} optimal solution. More precisely,

\begin{lemma}\label{lem:undirectSound}
The optimal solution of the $(2k-1)$-spanner instance is at least $nkK|\Sigma|\sqrt{K}$.
\end{lemma}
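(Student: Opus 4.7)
My plan is to mimic the proof of Lemma~\ref{lem:directSound}, adapted to the undirected instance, by extracting a multi-label assignment from $S$ that satisfies many projection edges and then rounding probabilistically to contradict the soundness of Lemma~\ref{lem:projModified}. Suppose for contradiction that there is a $(2k-1)$-spanner $S$ of $G$ with $|S|<nkK|\Sigma|\sqrt{K}$.

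Since the edge sets $E_L^l$, $E_R^l$, and the level-$l$ slices of $E_{Outer}$ (call them $E_{Outer}^l=\{\{c_i^{(k-1,l)},x_j^{(k-1,l)}\}:\{c_i,x_j\}\in E_{Proj}^k\}$) are pairwise disjoint across $l\in[K|\Sigma|]$, we have $\sum_{l}\bigl(|S\cap E_L^l|+|S\cap E_R^l|+|S\cap E_{Outer}^l|\bigr)\le|S|$. By averaging there exists $l^*$ for which this sum is less than $|S|/(K|\Sigma|)<nk\sqrt{K}$. For each $\{c_i,x_j\}\in E_{Proj}^k$ the outer edge $e_{i,j,l^*}$ must be spanned in $S$; by Claim~\ref{claim:undirectedPath} either $e_{i,j,l^*}\in S$, or $S$ contains a length-$(2k-1)$ path through some label pair $(\sigma_L,\sigma_R)\in\pi_{i,j}$, which in particular forces $\{c_i^{(1,l^*)},c_{i,\sigma_L}\}\in S$ and $\{x_j^{(1,l^*)},x_{j,\sigma_R}\}\in S$.

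Define the multi-label assignments $\Psi_L(c_i)=\{\sigma:\{c_i^{(1,l^*)},c_{i,\sigma}\}\in S\}$ and $\Psi_R(x_j)=\{\sigma:\{x_j^{(1,l^*)},x_{j,\sigma}\}\in S\}$. The observation above implies that $(\Psi_L,\Psi_R)$ satisfies every projection edge whose outer edge is not in $S$, giving at least $|E_{Proj}^k|-|S\cap E_{Outer}^{l^*}|\ge mK/2-nk\sqrt{K}\ge mK/4$ satisfied edges (using Lemma~\ref{lem:projModified}(1) and our parameter regime). Moreover $\sum_i|\Psi_L(c_i)|\le|S\cap E_L^{l^*}|<nk\sqrt{K}$, and analogously on the $R$ side. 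I would then prune each $c_i$ with $|\Psi_L(c_i)|>\Theta(k\sqrt{K}/n^\varepsilon)$ and each $x_j$ with $|\Psi_R(x_j)|>\Theta(k\sqrt{K})$; using the degree bounds from Lemma~\ref{lem:projModified}(4), the pruning loses at most $mK/4$ projection edges, leaving $\Omega(mK)$. Selecting a single label per vertex uniformly from the retained labels satisfies each remaining edge independently with probability $\Omega(n^\varepsilon/(k^2K))$, so in expectation at least $\Omega(m n^\varepsilon/k^2)=\omega(m\ln n/\varepsilon)$ edges are satisfied, contradicting Lemma~\ref{lem:projModified}(3).

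The main obstacle is that several factors tighten relative to the directed case: $|E_{Proj}^k|$ is only guaranteed to be $\ge mK/2$ (not $mK$) due to the girth-reduction pruning in Section~\ref{sec:modifiedProj}, and the longer $(2k-1)$-paths from Claim~\ref{claim:undirectedPath} introduce a factor of $k$ through the averaging; both effects force a careful calibration of the pruning thresholds to ensure enough projection edges survive for the final probabilistic step to yield the needed contradiction.
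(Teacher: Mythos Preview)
Your argument is correct in the relevant parameter regime, and it follows the same overall scheme as the paper (extract a multi-label assignment from a sparse layer, prune high-multiplicity vertices, round randomly to contradict soundness), but there are two differences worth flagging.

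\textbf{Bypassing Claim~\ref{claim:undirectOuter}.} The paper does not average over outer edges directly. It first applies Claim~\ref{claim:undirectOuter} to replace every outer edge of $S$ by a canonical $(2k-1)$-path through $E_L\cup E_M\cup E_R$ at the cost of at most doubling $|S|$; after this, $S'\cap E_{Outer}=\varnothing$, so the multi-label assignment at the chosen level $l^*$ satisfies \emph{all} of $E_{Proj}^k$. Your route handles $E_{Outer}^{l^*}$ inside the averaging and only claims that edges whose level-$l^*$ outer edge is absent are satisfied. This works too, but you then need the extra inequality $mK/2-nk\sqrt{K}\ge mK/4$, which holds in the regime at hand but is an additional step the paper's version sidesteps.

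\textbf{The stray factor of $k$.} The paper's own proof actually establishes the bound $nK|\Sigma|\sqrt{K}$, not $nkK|\Sigma|\sqrt{K}$: the $k$ in the lemma statement is a copy-paste artifact from the directed case, and indeed the final integrality-gap computation uses the $k$-free numerator $nK|\Sigma|\sqrt{K}$. Because you prove the lemma as stated, your averaging yields $nk\sqrt{K}$ per level rather than $O(n\sqrt{K})$, and both pruning thresholds pick up a factor of $k$, so your expected number of satisfied edges is $\Omega(mn^{\varepsilon}/k^{2})$ rather than the paper's $\Omega(mn^{\varepsilon})$. This still gives $\omega(m\ln n/\varepsilon)$ because $\varepsilon$ is a fixed constant and the theorem is only nontrivial when $k=O(1/\varepsilon)$, but you should say this explicitly rather than leaving the dependence on $k$ implicit.
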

Before proving this Lemma, we first prove the following Claim:
\begin{claim}\label{claim:undirectOuter}
Any feasible solution $S$ of the $(2k-1)$-spanner instance can be transformed to another feasible solution $S'$, where $|S'|\le2|S|$ and $S'\cap E_{Outer}=\varnothing$.
\end{claim}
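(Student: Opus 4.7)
The plan is to mirror the argument for the directed case (Claim~\ref{claim:directOuter}) by processing each outer edge in $S \cap E_{Outer}$ and replacing it with the alternative length-$(2k-1)$ path provided by Claim~\ref{claim:undirectedPath}. Concretely, for each outer edge $e = \{c_i^{(k-1,l)}, x_j^{(k-1,l)}\} \in S \cap E_{Outer}$, I would pick some $(\sigma_L, \sigma_R) \in \pi_{i,j}$ and designate as its replacement the path
\[
P_e \;=\; E_{LPath}^{i,l} \,\cup\, \bigl\{\{c_i^{(1,l)}, c_{i,\sigma_L}\},\ \{c_{i,\sigma_L}, x_{j,\sigma_R}\},\ \{x_{j,\sigma_R}, x_j^{(1,l)}\}\bigr\} \,\cup\, E_{RPath}^{j,l}.
\]
Then I would set $S' = (S \setminus E_{Outer}) \cup \bigcup_{e \in S \cap E_{Outer}} P_e$. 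Feasibility is immediate: $S'$ contains no outer edge by construction, every outer edge $e \in E_{Outer}$ is spanned either because it was outside $S$ to begin with (in which case its spanning path in $S$ did not use $e$ and survives modulo the outer-edge purge, up to rerouting through some $P_{e'}$) or because $P_e \subseteq S'$ realizes the alternative $(2k-1)$-path.

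The substantive step is the size bound $|S'| \le 2|S|$. Here I would exploit that the path edges in each $P_e$ are shared heavily: $E_{LPath}^{i,l}$ appears identically in $P_e$ for every outer edge in $S$ that shares the pair $(i,l)$, and similarly for $E_{RPath}^{j,l}$. Each specific outer edge therefore contributes at most the three ``private'' edges $\{c_i^{(1,l)}, c_{i,\sigma_L}\}$, $\{c_{i,\sigma_L}, x_{j,\sigma_R}\}$, and $\{x_{j,\sigma_R}, x_j^{(1,l)}\}$ on top of path edges that are either shared across replacements or are already present in $S$ for independent spanning reasons. Crucially, once outer edges are removed, the structural analysis in the vein of Claim~\ref{claim:undirectedPath} shows that there is no $(2k-1)$-path avoiding outer edges for any $E_{LPath}^{i,l}$ or $E_{RPath}^{j,l}$ edge that is not itself in the solution, so $S$ must already supply most of the path edges used by the $P_e$ via a charging argument over $(i,l)$ and $(j,l)$ pairs that are involved in replacements.

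The main obstacle is precisely this charging. It is straightforward that $|P_e| = 2k-1$, so the naive bound $|S'| \le |S| + (2k-2)|S\cap E_{Outer}|$ is too weak. To get down to $|S'| \le 2|S|$, I need to argue that across all replacements the total number of new path edges added is at most $|S \cap E_{Outer}|$; equivalently, on average the replacement of each outer edge contributes at most one ``genuinely new'' edge beyond what $S$ already contains. This requires combining (a) the girth $\ge 2k+2$ of the underlying \textsc{Projection Games} graph, which rules out accidental short cycles that could otherwise allow $S$ to span an $E_{LPath}$ or $E_{RPath}$ edge without directly including it, with (b) a counting of shared contributions: each $(i,l)$-block $E_{LPath}^{i,l}$ is reused by all outer edges sharing that block, and each $E_M$ middle edge $\{c_{i,\sigma_L}, x_{j,\sigma_R}\}$ is reused across the $K|\Sigma|$ copies indexed by $l$. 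Once this overlap is tallied, the cost of the replacement is dominated by quantities that $S$ already pays for in spanning non-outer edges, giving the desired factor of two.
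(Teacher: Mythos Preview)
Your proposal has a genuine gap: you are missing the one-line observation that makes the claim immediate, and in its place you sketch a charging argument that you yourself flag as the ``main obstacle'' and never carry out.

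The point you overlook is that every edge in $E_{LPaths} \cup E_M \cup E_{RPaths}$ is forced to lie in any feasible $(2k-1)$-spanner $S$, simply because each such edge has no alternative spanning path of length at most $2k-1$ (the interior $E_{LPath}$ and $E_{RPath}$ edges are bridges, and the girth assumption handles the rest). Once you know this, the replacement path $P_e$ for an outer edge $\{c_i^{(k-1,l)}, x_j^{(k-1,l)}\}$ consists of $E_{LPath}^{i,l}$, one $E_L$ edge, one $E_M$ edge, one $E_R$ edge, and $E_{RPath}^{j,l}$; all but the single $E_L$ edge and the single $E_R$ edge are already in $S$. So each replacement removes one outer edge and adds at most two new edges, giving $|S'| \le |S| + |S \cap E_{Outer}| \le 2|S|$ directly. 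There is no charging, no overlap accounting, and no appeal to reuse across different $l$ or different $(i,j)$.

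Your proposed route---arguing that path blocks are ``shared'' across replacements and that ``on average the replacement of each outer edge contributes at most one genuinely new edge''---is both unnecessary and, as stated, not obviously correct. The sharing you describe (same $E_{LPath}^{i,l}$ across outer edges with common $(i,l)$, same $E_M$ edge across copies $l$) does not by itself bound the number of new edges per replacement by $1$; different outer edges sharing $(i,l)$ still have disjoint $E_{RPath}^{j,l}$ blocks, and you have no mechanism forcing those to be in $S$ other than the forced-membership observation above. Your feasibility paragraph is also loose: the phrase ``survives modulo the outer-edge purge, up to rerouting through some $P_{e'}$'' is unnecessary once you invoke Claim~\ref{claim:undirectedPath}, which already tells you that the spanning path for any outer edge not in $S$ uses no outer edges at all.
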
 
\begin{proof}
Every edge in $E_{LPaths}\cup E_{M}\cup E_{RPaths}$ must be in $S$, because they are the only way to span themselves.

For any outer edge $\{u,v\}\in E_{Outer} \cap S$, we can replace $\{u,v\}$ with a path of length $(2k-1)$ from $u$ to $v$ by Claim \ref{claim:path}. We also argue that each time we replace the path, we will introduce at most $2$ new edges into the solution: one in $E_L$ and one in $E_R$.

Therefore, this will at most add a factor of $2$ to the total size, and the new solution does not contain any solution in $E_{Outer}$.
\end{proof}

\paragraph{Proof of Lemma \ref{lem:undirectSound}:}
We argue that the size of the optimal solution must be at least $nK|\Sigma|\sqrt{K}$, otherwise the \textsc{Projection Games} instance has a solution in which $\omega(\frac{m\ln n}{\varepsilon})$ edges are satisfied. This contradicts with Lemma \ref{lem:projInstance}.

If the optimal solution of the $(2k-1)$-spanner instance is less than $nK|\Sigma|\sqrt{K}$, then by Claim \ref{claim:undirectOuter} there is a solution $S'$ with less than $2nK|\Sigma|\sqrt{K}$ edges that does not use any edge from $E_{Outer}$. This implies that there exists $l\in[K|\Sigma|]$, such that $S'\cap E_L^l$ and $S'\cap E_R^l$ both have size less than $2n\sqrt{K}$. This is straightforward from the pigeon hole principle.

Since each vertex in $E_L\cup E_R$ corresponds to a pair in $(L\cup R)\times\Sigma$, we see that $S'\cap(E_L^l\cup E_R^l)$ corresponds to a label assignment $\Psi=\Phi(S'\cap(E_L^l\cup E_R^l))$, where each vertex may be assigned multiple labels in $\Psi$. We also observe that $\Psi$ satisfies all the edges in $E_{Proj}^k$. This is because for each edge $\{c_i,x_j\}\in E_{Proj}$, there is an outer edge $(c_i^l,x_j^l)$, and there is a length $(2k-1)$ path in $S'$ from $c_i^l$ to $x_j^l$. From Claim \ref{claim:path} we know that this related to a label $\sigma_L$ on $c_i$ and a label $\sigma_R$ on $x_j$ which satisfies edge $(c_i,x_j)$.

Next, we will create an assignment $\Psi'$ that satisfies fewer edges, but each vertex in $L$ has at most $\frac{16\sqrt{K}}{n^\varepsilon}$ labels, and each vertex in $R$ has at most $32\sqrt{K}$ labels. We also ignore the vertices in $L$ that have degree larger than $\frac{16\sqrt{K}}{n^\varepsilon}$ labels, there are at most $\frac{2n\sqrt{K}}{\frac{16\sqrt{K}}{n^\varepsilon}}=\frac{n^{1+\varepsilon}}{8}$ of them. We also ignore the vertices in $R$ that have degree larger than $32\sqrt{K}$ labels, there are at most $\frac{2n\sqrt{K}}{32\sqrt{K}}=\frac{n}{16}$ of them. The number of edges that are still satisfied by $\Psi'$ is at least $\frac{n^{1+\varepsilon}K}{2}-\frac{n^{1+\varepsilon}}{8}\cdot K-\frac{n}{16}\cdot 2Kn^\varepsilon=\frac{n^{1+\varepsilon}K}{4}$, because the vertices in $L$ have degree $K$, and the vertices in $R$ have degree at most $2Kn^\varepsilon$ from Lemma \ref{lem:projInstance}.

The final step is to find an assignment that satisfies $\omega(\frac{m\ln n}{\varepsilon})$ edges, where each vertex has only one label. This can be achieved by a probabilistic argument. If we randomly choose a label from existing labels for each vertex in $\Psi'$, then the probability that each edge is still satisfied is $\frac{n^\varepsilon}{16\sqrt{K}}\cdot\frac{1}{32\sqrt{K}}=\frac{n^\varepsilon}{512K}$. Thus in expectation there will be $\frac{n^\varepsilon}{512K}\cdot\frac{n^{1+\varepsilon}K}{2}=\frac{n^{1+2\varepsilon}}{1024}=\omega(\frac{m\ln n}{\varepsilon})$ edges satisfied. This means that there must exist one assignment that satisfies $\omega(\frac{m\ln n}{\varepsilon})$ edges. This contradict with Lemma \ref{lem:projInstance}, and finishes the proof.
\qed

\begin{proof}[\textbf{Proof of Theorem \ref{thm:directSpanner}}]. We can now put our feasible solution analysis and the bound on the integral solution together to get the following integrality gap for direct $(2k-1)$-spanner which finishes the proof: 
${\frac{nK|\Sigma|\sqrt{K}}{mkK|\Sigma|}=\frac{1}{k}\cdot\left(\frac{|V|}{k}\right)^{\min\left\{\frac{1}{18},\frac{5}{32k-6}\right\}-\Theta(\varepsilon)}}$. 
\end{proof} 

\section{Directed Steiner Network and Shallow-Light Steiner Network} \label{app:extension}
In this appendix section we first formally define the \textsc{Directed Steiner Network} problem, and then explain how our integrality gap argument for the \textsc{Directed $(2k-1)$-Spanner} can be modified to prove Theorem \ref{thm:DSN}.

\begin{definition}[\textsc{Directed Steiner Network}]
Given a directed graph $G=(V,E)$, and $p$ pairs of demands $(s_1,t_1),\mathellipsis,(s_p,t_p)$. The objective of \textsc{DSN} is to find a subgraph $G'=(V,S)$ with minimum number of edges, such that for every $i\in[p]$, there is a path from $s_i$ to $t_i$ in $G'$.
\end{definition}

The best known approximation algorithm for this problem also utilizes a flow based LP similar to the spanner LP. Similarly, we can write the $r$-th level of Lasserre hierarchy of \textsc{DSN} as follows: ($\mathcal{Z}^{u,v}=\{\mathbf{z}\in[0,1]^{|E|}\mid\forall P\in\mathcal{P}_{u,v},\sum_{e\in P}z_e\ge1\}$, and $\mathcal{P}_{u,v}$ is the set of all paths from $u$ to $v$):

\[\begin{array}{rll}
&\hspace{-3em}\text{SDP}_{DSN}:&\\
\min&\sum\limits_{e\in E}y_e&\\
s.t.&y_\varnothing=1&\\
&M_{r+1}(y)=\left(y_{I\cup J}\right)_{|I|,|J|\le r+1}\succcurlyeq0&\\
&M_r^\mathbf{z}(y)=\left(\sum_{e\in E}z_ey_{I\cup J\cup\{e\}}-y_{I\cup J}\right)_{|I|,|J|\le r}\succcurlyeq0&\forall i\in[p],\mbox{ and }\mathbf{z}\in\mathcal{Z}^{s_i,t_i}\\
\end{array}\]

The integrality gap instance is almost the same as for the \textsc{Directed $3$-Spanner} problem. There are two differences. The first difference is that we change the number of duplications of $E_{Proj}$ from $2K|\Sigma|$ to $K$. The second difference is that there is no edge set $E_{Outer}$, $E_{LStars}$, and $E_{RStars}$ anymore (and in fact there is no $M_{Paths}$ because $k=2$). Instead, there are demands $(c_i^l,x_j^l)$ for each $l\in[K]$ and $\{c_i,x_j\}\in E_{Proj}$.

Now, the only way from $c_i^l$ to $x_j^l$ is path $\{(c_i^l,c_{i,\sigma_L}),(c_{i,\sigma_L},x_{j,\sigma_R}),(x_{j,\sigma_R},x_j^l)\}$ for some $(\sigma_L,\sigma_R)\in\pi_{i,j}$.

We also slightly modify the solution. Let
\[\Phi(e)=\begin{cases}
\{(c_i,\sigma)\},&\mbox{if }e\in E_L\cup E_M\mbox{ and }e\mbox{ has an endpoint }c_{i,\sigma}\in L_{labels}\\
\{(x_i,\sigma)\},&\mbox{if }e\in E_R\mbox{ and }e\mbox{ has an endpoint }x_{i,\sigma}\in R_{labels}\\
\end{cases}\]
And $\Phi$ can be extended to $\mathscr{P}(E)\rightarrow\mathscr{P}((L\cup R)\times\Sigma)$ by letting $\Phi(S)=\cup_{e\in S}\Phi(e)$. The difference is that edges $e\in E_M$ now have fractional $y_e'$, rather than integral value.

We can prove the moment matrix is positive semidefinite using exactly the same calculations and notations. For the slack moment matrix $M_r^\mathbf{z}(\mathbf{y}')$ where $\mathbf{z}\in\mathcal{Z}^{c_i^l,x_j^l}$, $l\in[K]$ and $\{c_i,x_j\}\in E_{Proj}$  we have:

\begin{align*}
M_r^\mathbf{z}(\mathbf{y}')=&\left(\sum_{e\in E}z_ey_{I\cup J\cup\{e\}}'-y_{I\cup J}'\right)_{|I|,|J|\le r}\\
=&\sum_{e\in E}z_e\left(y_{\Phi(I\cup J)\cup\Phi(e)}^*\right)_{|I|,|J|\le r}-\left(y_{\Phi(I\cup J)}^*\right)_{|I|,|J|\le r}\\
=&\sum_{\sigma_L\in\Sigma}z_{(c_i^l,c_{i,\sigma_L})}\left(y_{\Phi(I\cup J)\cup\{(c_i\sigma_L)\}}^*\right)_{|I|,|J|\le r}\\
&+\sum_{(\sigma_L,\sigma_R)\in\pi_{i,j}}z_{(c_{i,\sigma_L}, x_{j,\sigma_R})}\left(y_{\Phi(I\cup J)\cup\{(c_i,\sigma_L)\}}^*\right)_{|I|,|J|\le r}\\
&+\sum_{\sigma_R\in\Sigma}z_{(x_{j,\sigma_R},x_j^l)}\left(y_{\Phi(I\cup J)\cup\{(x_j,\sigma_R)\}}^*\right)_{|I|,|J|\le r}\\
&+\sum_{e\in E\setminus(E_L^{i,l}\cup E_M^{i,j}\cup E_R^{j,l})}z_e\left(y_{I\cup J\cup\{e\}}'\right)_{|I|,|J|\le r}\\
&-\left(y_{\Phi(I\cup J)}^*\right)_{|I|,|J|\le r}\\
=&\sum_{\sigma_L\in\Sigma}z_{(c_i^l,c_{i,\sigma_L})}\left(y_{\Phi(I\cup J)\cup\{(c_i\sigma_L)\}}^*-\sum_{\sigma_R:(\sigma_L,\sigma_R)\in\pi_{i,j}}y_{\Phi(I\cup J)\cup\{(c_i,\sigma_L)\}\cup\{(x_j,\sigma_R)\}}^*\right)_{|I|,|J|\le r}\\
&+\sum_{(\sigma_L,\sigma_R)\in\pi_{i,j}}z_{(c_{i,\sigma_L}, x_{j,\sigma_R})}\left(y_{\Phi(I\cup J)\cup\{(c_i,\sigma_L)\}}^*-y_{\Phi(I\cup J)\cup\{(c_i,\sigma_L)\}\cup\{(x_j,\sigma_R)\}}^*\right)_{|I|,|J|\le r}\\
&+\sum_{\sigma_R\in\Sigma}z_{(x_{j,\sigma_R},x_j^l)}\left(y_{\Phi(I\cup J)\cup\{(x_j,\sigma_R)\}}^*-\sum_{\sigma_L:(\sigma_L,\sigma_R)\in\pi_{i,j}}y_{\Phi(I\cup J)\cup\{(c_i,\sigma_L)\}\cup\{(x_j,\sigma_R)\}}^*\right)_{|I|,|J|\le r}\\
&+\sum_{e\in E\setminus(E_L^{i,l}\cup E_M^{i,j}\cup E_R^{j,l})}z_e\left(y_{I\cup J\cup\{e\}}'\right)_{|I|,|J|\le r}\\
&-\left(y_{\Phi(I\cup J)}^*-\sum_{(\sigma_L,\sigma_R)\in\pi_{i,j}}y_{\Phi(I\cup J)\cup\{(c_i,\sigma_L)\}\cup\{(x_j,\sigma_R)\}}^*\right)_{|I|,|J|\le r}\\
&+\sum_{(\sigma_L,\sigma_R)\in\pi_{i,j}}\left(z_{(c_i^l,c_{i,\sigma_L})}+z_{(c_{i,\sigma_L}, x_{j,\sigma_R})}+z_{(x_{j,\sigma_R},x_j^l)}-1\right)\\
&\quad\times\left(y_{\Phi(I\cup J)\cup\{(c_i,\sigma_L)\}\cup\{(x_j,\sigma_R)\}}^*\right)_{|I|,|J|\le r} \succcurlyeq0
\end{align*}

All matrices in the above sum are positive semidefinite for the same reasons explained in Section \ref{sec:directFeasibility}.

For the objective value, we know from definition of $\text{SDP}_{Proj}^{r+2}$ and Claim \ref{claim:pair=0} that
\[y_{(c_i,\sigma_L)}^*=\sum_{\sigma_R\in\Sigma}y_{(c_i,\sigma_L),(x_j,\sigma_R)}^*=0+y_{(c_i,\sigma_L),(x_j,\pi_{i,j}^{-1}(\sigma_L))}^*\]
Therefore, the objective value is
\begin{align*}
\sum_{e\in E_L}y_e'+\sum_{e\in E_M}y_e'+\sum_{e\in E_R}y_e'
=&K\cdot\sum_{i\in|L|}\sum_{\sigma\in\Sigma}y_{(c_i,\sigma)}^*+\sum_{\{c_i,x_j\}\in E_{Proj}}\sum_{(\sigma_L,\sigma_R)\in\pi_{i,j}}y_{(x_j,\sigma_R)}^*+K\cdot\sum_{i\in|R|}\sum_{\sigma\in\Sigma} y_{(x_i,\sigma)}^*\\
=&K|L|+\sum_{\{c_i,x_j\}\in E_{Proj}}\sum_{(\sigma_L,\sigma_R)\in\pi_{i,j}}y_{(c_i,\sigma_L),(x_j,\sigma_R)}^*+K|R|\\
=&K|L|+K|L|\cdot1+K|R|=O(n^{1+\varepsilon}K)
\end{align*}

The optimal solution is at least $nK\sqrt{K}$ based on the same analysis as in the \textsc{Directed $3$-Spanner} problem. Thus we have an $\frac{nK\sqrt{K}}{n^{1+\varepsilon}K}=|V|^{\frac{1}{16}-\Theta(\varepsilon)}$ integrality gap, which concludes the proof of Theorem \ref{thm:DSN}.

\begin{definition}[\textsc{Shallow-Light Steiner Network}]
Given a graph $G=(V,E)$, a distance bound $L$, and $p$ pairs of vertices $\{s_1,t_1\},\mathellipsis,\{s_p,t_p\}$. The objective of \textsc{SLSN} is to find a subgraph $G'=(V,S)$ with minimum number of edges, such that for every $i\in[p]$, there is a path between $s_i$ and $t_i$ in $G'$ with length less or equal to $L$.
\end{definition}

The LP and SDP for \textsc{Shallow-Light Steiner Network} is the same as $\text{SDP}_{DSN}$, except that the $\mathcal{P}_{u,v}$ becomes the set of paths between $u$ and $v$ within distance $L=3$.

The instance remains the same, with the only difference that the edges are now undirected. The only way to connect $c_i^l$ and $x_j^l$ within distance $3$ is a path $\{\{c_i^l,c_{i,\sigma_L}\},\{c_{i,\sigma_L},x_{j,\sigma_R}\},\{x_{j,\sigma_R},x_j^l\}\}$ for some $(\sigma_L,\sigma_R)\in\pi_{i,j}$. The solution, objective value, and optimal solution are exactly the same as Direct Steiner Network. Thus we have the same integrality gap, proven Theorem \ref{thm:SLSN}.

\end{document}